\documentclass[11pt]{article}

\usepackage{amsmath,amsthm,amssymb}
\usepackage{fullpage}
\usepackage{hyperref}
\usepackage{authblk}
\usepackage{amssymb} 
\usepackage{verbatim} 
\usepackage{footnote} 
\usepackage{xspace}
\usepackage{tikz}
\usepackage{caption}
\usepackage{subcaption}
\usepackage{cleveref}
\usepackage{algorithm}
\usepackage{algorithmic}
\usepackage{float}
\usepackage{verbatim} 
\usepackage{footnote} 
\usepackage{xspace}
\usepackage{tikz}
\usepackage{xcolor}
\usepackage{colorspace}
\usepackage{caption}
\usepackage{tablefootnote}
\usepackage{enumerate}
\usepackage{mathtools}
\usepackage[most]{tcolorbox}
\usepackage{standalone}
\usepackage[T1]{fontenc}
\usepackage{lmodern}

\newcommand{\prob}[1]{\textup{\textsc{\lowercase{#1}}}\xspace}
\newcommand{\Oh}{\mathcal{O}} 

\newcommand{\N}{\mathbb{N}}

\newcommand{\APSP}{\prob{All-Pairs Shortest Paths}}
\newcommand{\VWAPSP}{\prob{Vertex-Weighted All-Pairs Shortest Paths}}
\newcommand{\TC}{\prob{Triangle Counting}}
\newcommand{\NCD}{\prob{Negative Cycle Detection}}

\DeclareMathOperator{\td}{\mathsf{td}}

\DeclareMathOperator{\mw}{\mathsf{mw}}
\DeclareMathOperator{\mtd}{\mathsf{mtd}}

\DeclareMathOperator*{\argmin}{arg\,min}

\newcommand{\Subst}{\textup{\textsc{Subst}}} 
\newcommand{\SubstMTD}{\textup{\textsc{SubstTd}}}
\newcommand{\Inc}{\textup{\textsc{Inc}}} 
\newcommand{\Union}{\textup{\textsc{Union}}}
\newcommand{\Jn}{\textup{\textsc{Join}}}
\newcommand{\IncxEx}{\Inc_{x,E_x}}
\newcommand{\SubstH}{\Subst_H}
\newcommand{\Hw}{H_w}
\newcommand{\n}[1]{n_#1}
\newcommand{\m}[1]{m_#1}

\newcommand{\T}[1]{T^{#1}} 
\newcommand{\Tv}[2]{T^{#1}_{#2}} 
\newcommand{\Gv}[2]{G^{#1}_{#2}}
\newcommand{\U}{S}
\newcommand{\expr}{\sigma} 
\newcommand{\shiftedw}{c^w}
\newcommand{\class}[1]{\textup{\textsc{#1}}}
\newcommand{\Td}[1]{\class{Td}_{#1}}
\newcommand{\Mw}[1]{\class{Mw}_{#1}}
\newcommand{\Mtd}[1]{\class{Mtd}_{#1}}
\newcommand{\TdMw}[2]{\Td{#1}\Mw{#2}}
\newcommand{\TdMtd}[2]{\Td{#1}\Mtd{#2}}
\newcommand{\MwMtd}[2]{\Mw{#1}\Mtd{#2}}
\newcommand{\TdMwMtd}[3]{\Td{#1}\Mw{#2}\Mtd{#3}}
\newcommand{\Tdk}{\Td{k}}
\newcommand{\Mwh}{\Mw{h}}
\newcommand{\Mtdl}{\Mtd{\ell}}
\newcommand{\TdkMwh}{\TdMw{k}{h}}
\newcommand{\TdkMtdl}{\TdMtd{k}{\ell}}
\newcommand{\MwhMtdl}{\MwMtd{h}{\ell}}
\newcommand{\TdkMwhMtdl}{\TdMwMtd{k}{h}{\ell}}
\newcommand{\disjointunion}{\mathop{\dot{\cup}}}
\newcommand{\joinop}{\Join}

\newtheorem{theorem}{Theorem}
\newtheorem{lemma}{Lemma}

\newtheorem{proposition}[theorem]{Proposition}

\newtheorem{corollary}[theorem]{Corollary}

\newtheorem{remark}[theorem]{Remark}

\newtheorem{observation}[theorem]{Observation}

\theoremstyle{definition}
\newtheorem{mydefinition}[lemma]{Definition}

\date{}
\interfootnotelinepenalty=10000
\makeatletter 
\renewcommand\subparagraph{\@startsection{subparagraph}{5}{0pt}%
                                       {3.25ex \@plus1ex \@minus .2ex}%
                                       {-1em}%
                                      {\normalfont\normalsize\bfseries}} 
\makeatother 

\title{Efficient parameterized algorithms on graphs with heterogeneous structure: Combining tree-depth and modular-width}

\author[1]{Stefan Kratsch}
\author[1]{Florian Nelles}

\affil[1]{\small Department of Computer Science, Humboldt-Universit{\"a}t zu
Berlin, Germany 
{\{kratsch,nelles\}@informatik.hu-berlin.de}}

\begin{document}

\maketitle

\begin{abstract}
Many computational problems admit fast algorithms on special inputs, however, the required properties might be quite restrictive. E.g., many graph problems can be solved much faster on interval or cographs, or on graphs of small modular-width or small tree-width, than on general graphs. One challenge is to attain the greatest generality of such results, i.e., being applicable to less restrictive input classes, without losing much in terms of running time.

Building on the use of algebraic expressions we present a clean and robust way of combining such homogeneous structure into more complex heterogeneous structure, and we show-case this for the combination of modular-width, tree-depth, and a natural notion of modular tree-depth. We give a generic framework for designing efficient parameterized algorithms on the created graph classes, aimed at getting competitive running times that match the homogeneous cases. To show the applicability we give efficient parameterized algorithms for \NCD, \VWAPSP, and \TC.
\end{abstract}

\section{Introduction}

Most computational problems can be solved (much) faster on inputs that exhibit certain beneficial structure, such as symmetry, sparsity, structured separations, or graphs of bounded parameter values like bounded tree-depth, clique-width, or modular-width, than they can be solved in general; this is true for both tractable and intractable problems. 
As a downside, the required structure is often not very general as well as many graph parameters are incomparable to one another, e.g., the class all subdivided stars is of bounded tree-depth though of unbounded modular-width, in contrast to the class of all cliques $K_n$ that is of unbounded tree-depth but bounded modular-with; while for both parameters there are efficient parameterized algorithms for many problems.
Thus, for a specific problem, often there just is no \emph{most general} parameter for which this problem can be solved more efficiently, or there is one, but the
dependency on the parameter is much worse than on smaller and incomparable parameters.
Considering the wealth of such algorithmic results, our motivation is to define graph classes generalizing two or more parameters while still being able to use the beneficial structure to obtain efficient algorithms.

In this work, building on measures of the graph structure tree-depth and modular-width, we show how to robustly define \emph{heterogeneous combinations} of such measures and how to (often optimally) use the corresponding graph structure for faster and more general algorithms. To this end, we adopt definitions based on graph operations and algebraic expressions such as are common for clique-width and as were used by Iwata et al.~\cite{IwataOO18} for tree-depth alone. By allowing richer sets of operations, stemming from different types of beneficial structure, we robustly define a large range of heterogeneous graph structure that admits faster algorithms than the general case. Apart from algebraic expressions for clique-width and tree-depth, we also find motivation through modular tree-width, introduced by Paulusma et al.~\cite{PaulusmaSS16}, which is the tree-width of the graph after contracting all twin classes. In a similar (but more general) style we define modular tree-depth to extend modular-width by allowing substitutions into (possibly large) graphs of small tree-depth rather than graphs of small size.

\subparagraph{Our work.}
Our conceptual contribution is a clean and robust way of formalizing classes of graphs with (possibly beneficial) heterogeneous structure by using an operations-based perspective on forms of homogeneous structure along with algebraic expressions. In the present work, these build on operations used for constructing graphs of small tree-depth, modular-width, or modular tree-depth, but the potential for encompassing a much greater variety of operations (and hence greater variety structure) is evident. We show formally, how the arising graph classes relate to one another, establishing for example that the combination of tree-depth and modular-width is incomparable to modular tree-depth. Similarly, already for bounded parameter values, the new forms of heterogeneous structure are incomparable to underlying forms of homogeneous structure.

On the algorithmic side, we greatly extend the framework of Iwata et al.~\cite{IwataOO18}, which applies to tree-depth via the operations union and addition of a vertex, to work for all required operations. Building on the work of Kratsch and Nelles \cite{KratschN20}, we give a general running-time framework that simplifies the task of obtaining running times that match the known bounds for the included homogeneous cases. To show its applicability, we apply our framework to three example problems, namely \NCD, \VWAPSP, and \TC. For each problem, using our framework, we give algorithmic results relative to heterogeneous measures that throughout match the best running time known for the homogeneous case. See Table~\ref{Tab:OverviewRT} for an overview of our results.

\begin{table}[t]
\renewcommand{\arraystretch}{1.3}
\caption{Overview of our algorithmic results, where $n$ and $m$ denote the number of vertices and edges of the input graph. \NCD and \VWAPSP are on directed, vertex-weighted graphs.}
 \begin{tabular}{@{$\;$}r@{$\;$}|@{$\;$}l@{$\;$}|@{$\;$}l@{$\;$}|@{$\;$}l@{$\;$}}
Graph class & \TC          & \NCD &\prob{Vert.-Wght.\ APSP}\\ 
 \hline
 &&&\\[-1.1em]
 $\Mwh$         & $\Oh(h^{\omega-1}n + m)$ \hfill \cite{KratschN18}&  $\Oh(h^2 n + n^2)$                       & $\Oh(h^2 n + n^2)$ \\
 $\Tdk$         & $\Oh(k m)$                                &  $\Oh(k(m + n \log n))$ \hfill \cite{IwataOO18}  
                                                                                                        & $\Oh(k n^2)$ \\
 $\Mtdl$        & $\Oh(\ell m)$                             &  $\Oh(\ell(m+n\log n) + n^2)$             & $ \Oh(\ell n^2)$ \\
 $\TdkMwh$      & $\Oh(k m + h^{\omega - 1}n)$              &  $\Oh(k (m + n \log n)+h^2n + n^2) $      & $\Oh(k n^2 + h^2 n)$ \\
 $\TdkMtdl$     & $\Oh((k + \ell) m)$                       &  $\Oh((k + \ell)(m + n \log n ) + n^2)$   & $\Oh((k+ \ell)n^2)$ \\
 $\MwhMtdl$     & $\Oh(h^{\omega-1}n + \ell m)$             &  $\Oh(h^2 n + \ell(m+n\log n) + n^2 )$    & $\Oh(h^2 n + \ell n^2)$ \\
 $\TdkMwhMtdl$  & $\Oh(h^{\omega-1}n + (k+\ell) m)$         &  $\Oh(h^2 n + (k+\ell)(m+n\log n)+n^2 )$  & $\Oh(h^2 n + (k+\ell) n^2)$ 
 \end{tabular}\label{Tab:OverviewRT}
\end{table}

\subparagraph{Related work.}
Initiated by the work of Giannopoulou et al.~\cite{GiannopoulouMN15}, many publications deal with efficient parameterized algorithms for tractable problems, also called ``FPT in P'', for several parameters and different problems~\cite{FominLSPW18,MertziosNN16,BentertN19,FluschnikKMNNT17,IwataOO18,CoudertDP18,KratschN18,KratschN20,DucoffeP21}. 
Iwata et al.~\cite{IwataOO18} have considered in their work several problems parameterized by the parameter tree-depth, among those a $\Oh(k (m + n\log n))$-time algorithm for \NCD, where $k$ denotes the tree-depth of the input graph. An $\Oh(h^2n + n^2)$-time algorithm to solve \VWAPSP on graphs with modular-width $h$ was presented by Kratsch and Nelles \cite{KratschN20}, but it applies only to non-negative vertex weights.

The modular tree-width of a graph was first considered as a parameter for CNF formulas in several works \cite{PaulusmaSS16,Mengel16,LampisM17} and was defined as the tree-width after the contraction of modules of the incidence graph of the formula. 
On general graphs the modular tree-width was considered by Lampis~\cite{Lampis20}, defined as the tree-width of the graph obtained from a graph if one collapses each twin class into a single vertex.

A different way of combining tree-depth and modular-width lies in the notion of shrub-depth, introduced by Ganian et al.~\cite{GanianHNOM19}, which, in fact, is a tree-depth-like variation of clique-width. Graph classes of bounded tree-depth also have bounded shrub-depth while also dense graph classes, e.g.\ the class of all cliques, have bounded shrub-depth.
Since cographs do not have bounded shrub-depth, all graph classes with heterogeneous structure introduced in this work are incomparable to shrub-depth. Moreover, to our best knowledge, there are not yet any dedicated efficient parameterized algorithms for graphs of small shrub-depth.

Let us also point out that for any fixed values of $k$, $\ell$, and $h$ all classes introduced in this work have bounded clique-width. That said, the clique-width may be much larger than these parameters and the best running times relative to clique-width can be worse (and they are for the examples provided in this work, cf.~\cite{CoudertDP18,Ducoffe20}). In other words, we obtain faster algorithms by making use of the specific (heterogeneous) structure of the introduced graph classes, rather than considering them as graphs of bounded clique-width. (Similarly, all considered problems of course also have efficient algorithms depending on just $n$ and $m$.)

\subparagraph{Organization.}
In Section~\ref{section:preliminaries} we recall operations-based definitions for tree-depth and modular-width. In Section~\ref{section:heterogeneousstructure} we introduce modular tree-depth and use algebraic expressions to define graph classes with heterogeneous structure. The running-time framework is presented in Section~\ref{section:framework} and the applications to \TC, \NCD, and \VWAPSP can be found in Section~\ref{section:applications}. The relations between the obtained graph classes are explored in Section~\ref{section:relations}. We conclude in Section~\ref{section:conclusion}.

\section{Preliminaries}\label{section:preliminaries}

\subparagraph{Notation.}
We consider only simple graphs, i.e., graphs with no loops and no parallel edges. For a graph $G$, we denote by $V(G)$ the set of vertices and by $E(G)$ the set of edges. We shorthand $|G|$ for $\vert V(G)\vert$. 
For a graph $G$ and edge weights $c \colon E(G) \rightarrow \mathbb{R}$ 
we define $dist_{G,c}(u,v)$ as the shortest path distance from the vertex $u$ to $v$ w.r.t.~the edge weights $c$ for $u,v \in V(G)$. For vertex weights $w \colon V(G) \rightarrow \mathbb{R}$, we define $dist_{G,w}(u,v)$  analogously and we may omit $G$ resp.\ $w$ in the subscript if the graph resp.\ the weight function $w$ is clear from the context. 

For two graphs $G_1 = (V_1, E_1)$ and $G_2 = (V_2, E_2)$, we denote by $G_1\disjointunion G_2$  their disjoint union, i.e., the graph $(V_1\mathop{\dot{\cup}}V_2, E_1 \mathop{\dot{\cup}} E_2)$ and we denote by $G_1\joinop G_2$ their (disjoint) join, i.e., the graph $(V_1 \mathop{\dot{\cup}} V_2, E_1 \mathop{\dot{\cup}} E_2 \mathop{\dot{\cup}} \{\{u,v\} \mid u \in V_1, v \in V_2\}$. 
For two graphs $G$ and $H$, and a vertex $v$ of $H$, by $H[v\leftarrow G]$ we denote the substitution of $G$ into $v$ in $H$, i.e., the graph obtained by replacing $v$ in $H$ with the graph $G$ and giving each of its vertices the same neighborhood as $v$ had. This extends up to complete substitution into a $t$-vertex graph $H$, denoted $H[v_1\leftarrow G_1,\ldots,v_t\leftarrow G_t]$. We shorthand $[k]$ for $\{1,\ldots,k\}$.

\subparagraph{Graph operations.}
These graph operations will be used in algebraic expressions. The nullary operations $\circ$ and $\bullet$ return the empty graph respectively the graph with a single vertex. For $t\geq 2$, the $t$-ary operations $\Union_t$ and $\Jn_t$ are defined by $\Union_t(G_1,\ldots,G_t):=G_1\disjointunion\ldots\disjointunion G_t$ and $\Jn_t(G_1,\ldots,G_t):=G_1\joinop\ldots\joinop G_t$; we will usually omit the subscript $t$ and allow these operations for all $t$. For a vertex $x$ and a set $E_x\subseteq\{\{x,v\} \mid v \in V\}$, the unary operation $\IncxEx$ is defined by $\IncxEx(G)=(V\cup\{x\},E\cup E_x)$ for all graphs $G=(V,E)$ with $x\notin V$.
For a graph $H$ with $V(H)=\{v_1,\ldots,v_t\}$, the $t$-ary operation $\SubstH$ is defined by $\SubstH(G_1,\ldots,G_t):=H[v_1\leftarrow G_1,\ldots,v_t\leftarrow G_t]$. 

\subparagraph{Algebraic expressions.}
As is common especially for clique-width, we use algebraic expressions over certain sets of operations to describe the structure or the construction of graphs. We say that $G$ has an algebraic expression if $G$ is the result of evaluating the expression (possibly followed by a single renaming of vertices). 
For an algebraic expression $\expr$, we denote by $val(\expr)$ the resulting graph and by $\T{\expr}$ we denote the corresponding expression tree, i.e., a rooted tree in which each node corresponds to an operation of the expression that is applied to its children.
We define the \emph{nesting depth} of an operation in an expression tree as the maximum number of nodes in the expression tree corresponding to this operation that are on a root-to-leaf path. We denote the nesting depth of an operation in an algebraic expression as the nesting depth of this operation in the corresponding expression tree.
We let the empty graph correspond to the empty expression, irrespective of the set of operations. 

\subparagraph{Tree-depth.}
There are many equivalent definitions for the tree-depth of a graph, e.g., as the minimum height of a rooted forest whose closure contains $G$ as a subgraph or by a recursive definition based on connected components and vertex deletion \cite{NesetrilM12, GanianHKLOR09}.
We give a folklore definition via algebraic expressions.

\begin{mydefinition}\label{definition:treedepth} 
 The \emph{tree-depth} of a graph $G$, denoted $\td(G)$, is the smallest $k\in\N$ such that $G$ has an algebraic expression over $\{\circ,\Union\}\cup\{\IncxEx\}$ whose $\IncxEx$ operations have nesting depth at most $k$. The class $\Tdk$ contains all graphs of tree-depth at most $k$.
\end{mydefinition}

\subparagraph{Modular-width.}
Modular-width was first mentioned by Courcelle and Olariu~\cite{CourcelleO00}. 
A \emph{module} of a graph $G$ is a set $M \subseteq V(G)$, such that each vertex in $V(G) \setminus M$ is either connected to none or to all vertices in $M$. A \emph{modular partition} of a graph is a partition of the vertices into modules. The modular-width roughly indicate how well a graph can be recursively partitioned into modules.
Like tree-depth it has several concrete definitions, which unfortunately do not all agree on what graphs have modular-width zero, one, or two; they are equivalent for modular-width $h$, for all $h\geq 3$, so there is no asymptotic difference between them.\footnote{A frequent variant is to say that the modular-width of any graph is the least integer $h\geq 2$ such that any prime subgraph in its modular decomposition has order at most $h$ (cf.~\cite{CoudertDP18}), while Abu-Khzam et al.~\cite{Abu-KhzamLMHP17} define it to be the maximum degree of the modular decomposition tree.} 

We mildly adapt a definition due to Gajarsk\'y et al.~\cite{GajarskyLO13} to our needs. For example, note that $\Mw{0}$ is the class of cographs.

\begin{mydefinition}[adapted from \cite{GajarskyLO13}]\label{definition:modularwidth}
 The \emph{modular-width} of a graph $G$, denoted $\mw(G)$, is the smallest $h\in\N$ such that $G$ has an algebraic expression over $\{\bullet,\Union,\Jn\}\cup\{\SubstH\mid |H|\leq h\}$. The class $\Mwh$ contains all graphs of modular-width at most $h$.
\end{mydefinition}

\section{Heterogeneous structure}\label{section:heterogeneousstructure}

Here we use the introduced graph operations to define classes of graphs with heterogeneous structure that generalize both $\Tdk$ and $\Mwh$. We will also use them to encompass and generalize modular tree-depth, which we will introduce in a moment. In the following section, we then show how to use such structure algorithmically (and optimally).

\begin{mydefinition}\label{definition:tdkmwh}
For $k,h\in\N$, the class $\TdkMwh$ contains all graphs $G$ that have an algebraic expression over $\{\circ,\bullet,\Union,\Jn\}\cup\{\IncxEx\}\cup\{\SubstH\mid|H|\leq h\}$ whose $\IncxEx$ operations have nesting depth at most $k$.
\end{mydefinition}

The following propositions follow directly from the definition of $\TdkMwh$; similar relations are true for the other classes and we do not list all of them explicitly. The inequality in Proposition~\ref{proposition:tdkmwh:smallvalues} holds since all cliques are contained in the class $\Mw{0}$ and therefore in $\TdMw{k}{0}$. The converse non-relations, even for bounded values of $k$ and $h$, are showed later.

\begin{proposition}\label{proposition:tdkmwh:relations}
$\Tdk\subseteq\TdkMwh$ and $\Mwh\subseteq\TdkMwh$. 
\end{proposition}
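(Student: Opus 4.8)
The plan is to prove both inclusions by a direct comparison of the operation sets used in the respective definitions, observing that in each case a witnessing algebraic expression for the homogeneous class is, verbatim, a valid witnessing expression for $\TdkMwh$ in the sense of Definition~\ref{definition:tdkmwh}. In other words, the whole statement is a monotonicity observation: enlarging the set of permitted operations while keeping the $\IncxEx$-nesting-depth bound (or meeting it vacuously) can only enlarge the described graph class.

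First I would handle $\Tdk \subseteq \TdkMwh$. Let $G \in \Tdk$. By Definition~\ref{definition:treedepth}, $G$ has an algebraic expression $\expr$ over $\{\circ,\Union\}\cup\{\IncxEx\}$ whose $\IncxEx$ operations have nesting depth at most $k$. Since
$\{\circ,\Union\}\cup\{\IncxEx\}\subseteq\{\circ,\bullet,\Union,\Jn\}\cup\{\IncxEx\}\cup\{\SubstH\mid|H|\leq h\}$,
the same expression $\expr$ is also an algebraic expression over the operation set of Definition~\ref{definition:tdkmwh}, and its $\IncxEx$-nesting depth is unchanged, hence still at most $k$. Therefore $\expr$ witnesses $G\in\TdkMwh$.

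Next I would handle $\Mwh \subseteq \TdkMwh$. Let $G\in\Mwh$. By Definition~\ref{definition:modularwidth}, $G$ has an algebraic expression $\expr$ over $\{\bullet,\Union,\Jn\}\cup\{\SubstH\mid|H|\leq h\}$, which is again a subset of the operation set permitted in Definition~\ref{definition:tdkmwh}. The only point requiring a moment's care is the nesting-depth constraint on $\IncxEx$: since $\expr$ uses no $\IncxEx$ operation whatsoever, the nesting depth of $\IncxEx$ in $\expr$ is $0\leq k$, so the constraint is satisfied vacuously. Hence $\expr$ witnesses $G\in\TdkMwh$.

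I do not expect any genuine obstacle here — this is exactly why the surrounding text remarks that the proposition ``follows directly from the definition.'' The single step that is worth stating explicitly rather than eliding is the handling of the nesting-depth bound: it transfers literally in the first inclusion and is met trivially in the second precisely because a modular-width expression contains zero $\IncxEx$ operations.
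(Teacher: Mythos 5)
Your proposal is correct and matches the paper, which simply asserts that the proposition follows directly from Definition~\ref{definition:tdkmwh}: both homogeneous operation sets are subsets of the one for $\TdkMwh$, and the $\IncxEx$-nesting-depth bound is inherited in the first case and holds vacuously in the second. Your explicit remark about the nesting-depth constraint is the only substantive point, and you handle it correctly.
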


\begin{proposition}\label{proposition:tdkmwh:smallvalues}
$\TdMw{0}{h}=\Mwh$ but $\TdMw{k}{0}\neq\Tdk$
\end{proposition}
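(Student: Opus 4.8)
The statement bundles two independent claims, the equality $\TdMw{0}{h}=\Mwh$ and the strict separation $\TdMw{k}{0}\neq\Tdk$, and I would prove them separately.

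For the equality, the key observation is that requiring the $\IncxEx$ operations to have nesting depth at most $0$ simply forbids them altogether. Hence, by Definition~\ref{definition:tdkmwh}, a graph lies in $\TdMw{0}{h}$ precisely when it has an algebraic expression over $\{\circ,\bullet,\Union,\Jn\}\cup\{\SubstH\mid|H|\le h\}$, which compared to the operation set for $\Mwh$ in Definition~\ref{definition:modularwidth} differs only in the additional nullary operation $\circ$. The inclusion $\Mwh\subseteq\TdMw{0}{h}$ is immediate, being the $k=0$ case of Proposition~\ref{proposition:tdkmwh:relations} (every $\Mwh$-expression uses no $\IncxEx$ and thus qualifies). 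For the reverse inclusion I would show that $\circ$ is redundant, by a bottom-up induction on the expression tree that turns any $\TdMw{0}{h}$-expression into an equivalent $\circ$-free one, i.e.\ an $\Mwh$-expression. The elimination rules are: a $\circ$-argument of a $\Union$- or $\Jn$-node is dropped, since disjoint union and join with the empty graph act as the identity; a $\circ$-argument of a node $\SubstH$ amounts to substituting the empty graph into a vertex $v_i$ of $H$ and hence deletes $v_i$, so the node is replaced by $\Subst_{H-v_i}$ applied to the remaining arguments, which is still admissible because $|H-v_i|<|H|\le h$; and if a node loses all of its children it evaluates to the empty graph, covered by the empty-expression convention.

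For the separation I would exhibit an explicit witness, a clique, exactly as hinted before the statement. Each clique $K_n$ is a cograph and therefore lies in $\Mw{0}$ (concretely $K_n=\Jn(\bullet,\ldots,\bullet)$), so Proposition~\ref{proposition:tdkmwh:relations} gives $K_n\in\Mw{0}\subseteq\TdMw{k}{0}$ for every $k$. On the other hand $\td(K_n)=n$, whence $K_n\notin\Tdk$ whenever $n>k$. Together with $\Tdk\subseteq\TdMw{k}{0}$ (again Proposition~\ref{proposition:tdkmwh:relations}), this even yields the strict inclusion $\Tdk\subsetneq\TdMw{k}{0}$, and in particular the claimed non-equality.

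Neither part is deep. The only step needing some care is the $\circ$-elimination at a $\SubstH$ node: one must check that deleting the substituted-away vertices leaves an induced subgraph $H-v_i$ of $H$ that is still a legal substitution graph (automatic, as its order only drops), and that this reduction commutes with the recursive elimination performed on the surviving arguments. The tree-depth bound $\td(K_n)=n$ underlying the separation is the standard fact that a complete graph forces a path in any valid elimination forest.
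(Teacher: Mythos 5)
Your proof is correct and follows essentially the same route as the paper, which treats the equality as immediate from the definitions (nesting depth $0$ forbids $\IncxEx$ entirely) and justifies the inequality by observing that all cliques lie in $\Mw{0}\subseteq\TdMw{k}{0}$ while $\td(K_{k+1})=k+1>k$. Your additional care in eliminating the nullary operation $\circ$ at $\Union$-, $\Jn$-, and $\SubstH$-nodes is a detail the paper glosses over, but it does not change the argument.
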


\subparagraph{Modular tree-depth.}
Modular tree-width was introduced and studied in several recent papers \cite{PaulusmaSS16,Mengel16,LampisM17,Lampis20}. In these works, the modular tree-width of a graph $G$ is the smallest $\ell\in\N$ such that $G$ can be constructed from a graph of tree-width at most $\ell$ by substituting each vertex with an independent set or a clique of arbitrary size. In other words, the modular tree-width is the width of the graph after collapsing each \emph{twin class} to a single vertex. (Here two vertices are \emph{twins} if they have the same sets of neighbors; clearly, this is an equivalence relation. The twin classes are the equivalence classes of the twin relation.)

An analogous definition for modular tree-depth would entail substituting cliques and independent sets into a graph of tree-depth at most $\ell$. In our definition of modular tree-depth we deviate from this style, by instead extending modular-width to allow substitution into pattern graphs $H$ of arbitrary size but tree-depth at most~$\ell$. (We would similarly define (generalized) modular tree-width but we do not study it in this work.) To avoid confusion, we will use \emph{restricted modular tree-depth} to refer to modular tree-depth defined in the above style. Note that in many of the mentioned applications, restricted and generalized modular tree-width coincide because the considered graphs have a modular partition whose modules are cliques and independent sets.

\begin{mydefinition}[modular tree-depth]\label{definition:modulartreedepth}
 The \emph{modular tree-depth} of a graph $G$, denoted $\mtd(G)$, is the smallest $\ell\in\N$ such that $G$ has an algebraic expression over $\{\bullet,\Union,\Jn\}\cup\{\SubstH\mid \td(H)\leq \ell\}$. The class $\Mtdl$ contains all graphs of modular tree-depth at most $\ell$.
\end{mydefinition}

\subparagraph{Further classes with heterogeneous structure.}
Now we can define three natural combinations of $\Mtdl$ with the classes considered so far. Note that $\Mtdl$ subsumes $\Tdk$ and $\Mwh$ for $\ell\geq k,h$ but, surprisingly perhaps, there is no $\ell\in\N$ such that it fully contains $\TdMw{1}{0}$. This also means that, e.g., $\MwhMtdl\subseteq\Mtdl$ when $h\leq \ell$ but in general this is not the case. Intuitively, substitution into a pattern $H$ of small tree-depth is algorithmically more costly than substitution into a small pattern $H$, hence the case $h>\ell$ is sensible. The relations between these graph classes are explored in Section~\ref{section:relations}.

\begin{mydefinition}\label{definition:furtherclasses}
 For $k,h,\ell\in\N$ we define the three classes $\TdkMtdl$, $\MwhMtdl$, and $\TdkMwhMtdl$ to contain all graphs that have an algebraic expression of the following type:
 \begin{itemize}
  \item $\TdkMtdl$: algebraic expressions over $\{\circ,\bullet,\Union,\Jn\}\cup\{\IncxEx\}\cup\{\SubstH\mid \td(H)\leq \ell\}$ whose $\IncxEx$ operations have nesting depth at most $k$
  \item $\MwhMtdl$: algebraic expressions over $\{\bullet,\Union,\Jn\}\cup\{\SubstH\mid |H|\leq h\}\cup\{\SubstH\mid \td(H)\leq \ell\}$
  \item $\TdkMwhMtdl$: algebraic expressions over $\{\circ,\bullet,\Union,\Jn\}\cup\{\IncxEx\}\cup\{\SubstH\mid |H|\leq h\}\cup\{\SubstH\mid \td(H)\leq \ell\}$ whose $\IncxEx$ operations have nesting depth at most~$k$
 \end{itemize}
\end{mydefinition}

\begin{remark}
The sets of allowed operations are simply the unions of what is allowed in the homogeneous case, while $\IncxEx$ operations keep their restriction on nesting depth.
\end{remark}

\begin{remark}
In this work, we always implicitly assume that for the operations $\{ \SubstH \mid \td(H) \leq \ell\}$ a tree-depth expression for $H$ is given. 
\end{remark}

This concludes the introduction of new graph classes with heterogeneous structure. We will now turn to showing how useful they are for designing more general algorithms for well-structured graphs. Fortunately, this turns out to be just as robust as the simpler case of the class $\Tdk$, studied by Iwata et al.~\cite{IwataOO18}, and comes down to designing a separate routine for each allowed operation.

\section{Running Time Framework}\label{section:framework}

In this section, we combine the running time framework provided by Kratsch and Nelles~\cite{KratschN20} for graphs parameterized by the modular-width with the divide-and-conquer framework proved by Iwata et al.~\cite{IwataOO18} for graphs parameterized by the tree-depth.
We will restrict ourselves to running times $T \colon \mathbb{R}_{\geq 1} \rightarrow \mathbb{R}_{\geq 1}$ that are \emph{superhomogeneous} \cite{BuraiSA05}, i.e., for all $\lambda \geq 1$ it holds that $\lambda \cdot T(n) \leq T(\lambda \cdot n)$. 
The following lemma was obtained in \cite{KratschN20}.

\begin{lemma}[\cite{KratschN20}]\label{lem:properySuperhomogeneous}
 Let $T \colon \mathbb{R}_{\geq 1}^2 \rightarrow \mathbb{R}_{\geq 1}$ be a function that is superhomogeneous in the first component and  monotonically increasing in the second component. Then
 \begin{align*}
 \max_{ \substack{1 \leq x \leq n \\ 1 \leq y \leq m}} \frac{T(x,y)}{x} \leq \frac{T(n,m)}{n}.
 \end{align*}
\end{lemma}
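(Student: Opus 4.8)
The plan is to prove the inequality by fixing an arbitrary pair $(x,y)$ in the feasible region and showing directly that $T(x,y)/x \leq T(n,m)/n$; since the pair is arbitrary, taking the maximum over all such pairs yields the claim. So it suffices to establish, for every $x$ with $1 \leq x \leq n$ and every $y$ with $1 \leq y \leq m$, the single inequality $T(x,y)/x \leq T(n,m)/n$.

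First I would exploit monotonicity in the second argument: since $y \leq m$ and $T$ is monotonically increasing in its second component, we immediately get $T(x,y) \leq T(x,m)$, hence $T(x,y)/x \leq T(x,m)/x$. This reduces the problem to bounding $T(x,m)/x$, and I have now eliminated the dependence on $y$ entirely. What remains is to show $T(x,m)/x \leq T(n,m)/n$, an inequality purely about scaling the first argument upward from $x$ to $n$.

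Next I would bring in superhomogeneity in the first component. Since $1 \leq x \leq n$, the ratio $\lambda := n/x$ satisfies $\lambda \geq 1$. Applying the superhomogeneity property $\lambda \cdot T(x', y') \leq T(\lambda x', y')$ with $x' = x$ and $y' = m$ gives $\tfrac{n}{x} \cdot T(x,m) \leq T\!\bigl(\tfrac{n}{x}\cdot x,\, m\bigr) = T(n,m)$. Dividing both sides by $n$ yields $T(x,m)/x \leq T(n,m)/n$, which combines with the previous step to finish the argument.

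The main thing to be careful about, rather than a genuine obstacle, is the precise reading of the superhomogeneity hypothesis: the definition $\lambda \cdot T(n) \leq T(\lambda n)$ stated in the text is for a one-variable function, so I must be explicit that it is being applied in the first component of the two-variable $T$ with the second component $m$ held fixed (which is legitimate, as superhomogeneity in the first component means exactly that the one-variable slice $x' \mapsto T(x', m)$ is superhomogeneous for each fixed second argument). One should also note that all quantities are positive so dividing by $x$ and by $n$ preserves the inequalities, and that choosing $\lambda = n/x \geq 1$ is exactly what is needed; no edge cases arise since $x \geq 1$ guarantees division is well-defined.
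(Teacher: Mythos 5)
Your proof is correct and is exactly the intended argument: reduce to $y=m$ by monotonicity in the second component, then apply superhomogeneity in the first component with $\lambda = n/x \geq 1$ to get $\frac{n}{x}T(x,m) \leq T(n,m)$. The paper itself only cites \cite{KratschN20} for this lemma rather than reproving it, but the argument there is the same, and your care about reading the one-variable superhomogeneity definition componentwise is appropriate.
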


Inspired by the functional way of the divide-and-conquer framework by Iwata et al.~\cite{IwataOO18}, we extend this approach to cope with the operations defined in the previous sections. 

\begin{theorem}\label{thm:runningtime}
Let $G$ be a graph such that $G \in \TdkMwhMtdl$ with a given expression for some integers $k,h,\ell \in \mathbb{N}$ and let $f$ be a function defined on subgraphs of $G$ such that $f(\circ)$ and $f(\bullet)$ can be computed in constant time. Let further $T_{\mathrm{Inc}}$, $T_{\mathrm{Sub}}$, and $T_{\mathrm{SubTd}}$ be functions that are superhomogeneous in each component that bound the running times of the following algorithms $A_{\mathrm{Inc}}$, $A_{\mathrm{Sub}}$, and $A_{\mathrm{SubTd}}$:
\begin{itemize}
\item $A_{\mathrm{Inc}}(G',f(G'),x,E_x) \mapsto f(\IncxEx(G'))$. Given a graph $G'$, its value $f(G')$, a vertex $x \notin V(G')$, and $E_x \subseteq \{\{x,u\} \mid u \in V(G')\}$, this algorithm computes the value $f(\IncxEx(G'))$ in time $T_{\mathrm{Inc}}(\vert V(G') \cup \{x\} \vert,\vert E(G') \cup E_x \vert)$. 

\item $A_{\mathrm{Sub}}(H, (G_1,f(G_1)), \ldots, (G_t,f(G_t))) \mapsto f(\SubstH(G_1, \ldots, G_t))$. Given a $t$-vertex pattern graph $H$ and graphs $G_i$ with their values $f(G_i)$ for $i \in [t]$, this algorithm computes the value $f(\SubstH(G_1, \ldots, G_t))$ in time $T_{\mathrm{Sub}}(\vert V(H) \vert, \vert E(H) \vert) = T_{\mathrm{Sub}}(t,\vert E(H) \vert)$.

\item $A_{\mathrm{SubTd}}(H, (G_1,f(G_1)), \ldots, (G_t,f(G_t))) \mapsto f(\SubstH(G_1, \ldots, G_t))$. Given a $t$-vertex pattern graph $H$ and graphs $G_i$ with their values $f(G_i)$, this algorithm computes the value $f(\SubstH(G_1, \ldots, G_t))$ in time $T_{\mathrm{SubTd}}(\vert V(H) \vert, \vert E(H) \vert,\td (H))$.
\end{itemize}
\noindent
Then, one can compute $f(G)$ in total time $\Oh(k T_{\mathrm{Inc}}(n,m) + \frac{n}{h}  T_{\mathrm{Sub}}(h,m) + T_{\mathrm{SubTd}}(n,m, \ell))$.
\end{theorem}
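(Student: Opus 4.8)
The plan is to compute $f$ by evaluating the expression tree $\T{\expr}$ bottom-up (in post-order), so that when a node is processed the values of $f$ at all its children are already available; the total running time is then the sum of the per-node costs, and the whole proof reduces to bounding this sum separately for the three operation families. Since $\Union$ and $\Jn$ are not literally among $A_{\mathrm{Inc}}, A_{\mathrm{Sub}}, A_{\mathrm{SubTd}}$, I would first realize them as substitutions: a $t$-ary $\Union$ equals $\SubstH$ for the edgeless pattern $H=\overline{K_t}$, which has $\td(H)=1$ and is therefore handled by $A_{\mathrm{SubTd}}$; a $t$-ary $\Jn$ equals $\SubstH$ for $H=K_t$, and to keep the pattern small I would rewrite it as a chain of $t-1$ binary joins, each a $\SubstH$ for $H=K_2$, handled by $A_{\mathrm{Sub}}$. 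Because these rewritings only reorganize $\Union$/$\Jn$ nodes and never introduce $\IncxEx$ nodes, every root-to-leaf path keeps the same number of $\IncxEx$ nodes, so the $\IncxEx$-nesting depth stays at most $k$.

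Two elementary facts about a function $T$ that is superhomogeneous in each coordinate drive the accounting. First, superhomogeneity implies monotonicity in each coordinate, since $T(\lambda x)\ge\lambda T(x)\ge T(x)$ for $\lambda\ge1$. Second, $T$ is superadditive along a coordinate split: for nonnegative pairs $(a_i,b_i)$ with $A=\sum_i a_i$ and $B=\sum_i b_i$ one has $T(a_i,b_i)\le\frac{a_i}{A}T(A,b_i)\le\frac{a_i}{A}T(A,B)$, and summing gives $\sum_i T(a_i,b_i)\le T(A,B)$; the same holds with any number of coordinates held fixed.

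The three terms now fall out. For $\IncxEx$, call the number of $\IncxEx$ nodes on the root-to-$v$ path the \emph{depth} of $v$, a value in $\{1,\dots,k\}$. Two $\IncxEx$ nodes of equal depth are incomparable in the tree (a descendant would have strictly larger depth), hence lie in disjoint subtrees and act on vertex-disjoint subgraphs; crucially, this stays true no matter which $\Union$, $\Jn$, $\SubstH$ nodes are interspersed, because all of these take vertex-disjoint unions of their arguments' vertex sets. Thus at each fixed depth the subgraphs have total size at most $(n,m)$, so by superadditivity the $A_{\mathrm{Inc}}$-cost at that depth is at most $T_{\mathrm{Inc}}(n,m)$, and summing over the at most $k$ depths gives $\Oh(k\,T_{\mathrm{Inc}}(n,m))$. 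For the two substitution families I would count as in Kratsch and Nelles: each substitution node has $\lvert V(H_j)\rvert=t_j\ge2$ children, a tree count bounds $\sum_j\lvert V(H_j)\rvert=\Oh(n)$, and each pattern edge creates pairwise-distinct edges of $G$, so $\sum_j\lvert E(H_j)\rvert\le m$. For the small patterns ($\lvert V(H_j)\rvert\le h$), Lemma~\ref{lem:properySuperhomogeneous} gives $T_{\mathrm{Sub}}(\lvert V(H_j)\rvert,\lvert E(H_j)\rvert)\le\frac{\lvert V(H_j)\rvert}{h}T_{\mathrm{Sub}}(h,m)$, so the total is $\Oh(\frac{n}{h}T_{\mathrm{Sub}}(h,m))$. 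For the tree-depth patterns, after replacing $\td(H_j)$ by $\ell$ (monotonicity), superadditivity yields $\sum_j T_{\mathrm{SubTd}}(\lvert V(H_j)\rvert,\lvert E(H_j)\rvert,\ell)\le T_{\mathrm{SubTd}}(\Oh(n),m,\ell)=\Oh(T_{\mathrm{SubTd}}(n,m,\ell))$. Adding the $\Oh(n)$ for the constant-time leaves gives the claimed running time.

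I expect the $\IncxEx$ accounting to be the crux: obtaining the linear factor $k$ rather than an exponential blow-up rests entirely on the facts that $\IncxEx$ nodes of the same depth act on vertex-disjoint subgraphs and that $T_{\mathrm{Inc}}$ is superadditive, and the part most in need of care is verifying that the interspersed $\Union$, $\Jn$, and $\SubstH$ operations preserve this vertex-disjointness. A minor technical point is that the tree count yields $\sum_j\lvert V(H_j)\rvert=\Oh(n)$ rather than exactly $n$ (one has $\sum_j(\lvert V(H_j)\rvert-1)=n-1$, but every substitution node contributes at least two vertices, forcing a constant factor), so the first argument of $T_{\mathrm{SubTd}}$ is enlarged by a constant; this is harmless for the polynomially bounded running times of the applications and is absorbed by the $\Oh(\cdot)$.
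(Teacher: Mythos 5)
Your proposal follows essentially the same route as the paper's proof: rewrite $\Union$/$\Jn$ as substitutions, evaluate the expression tree bottom-up, bound the substitution nodes via the count $\sum_H \vert V(H)\vert \leq 2n$ together with Lemma~\ref{lem:properySuperhomogeneous}, and bound the $\IncxEx$ nodes by partitioning them into at most $k$ depth classes whose subtrees are pairwise disjoint and hence act on vertex-disjoint subgraphs. The core ideas, including the observation that equal-depth $\IncxEx$ nodes are incomparable in the tree, match the paper's argument.

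Two steps need tightening. First, your bound $\sum_j T_{\mathrm{SubTd}}(\vert V(H_j)\vert,\vert E(H_j)\vert,\ell)\leq T_{\mathrm{SubTd}}(\Oh(n),m,\ell)=\Oh(T_{\mathrm{SubTd}}(n,m,\ell))$ relies on $T(\Oh(n))=\Oh(T(n))$, which does \emph{not} follow from superhomogeneity alone (e.g.\ $T(n)=2^n$ is superhomogeneous but $T(2n)\neq\Oh(T(n))$), and the theorem does not assume polynomial boundedness; your own caveat acknowledges this but does not repair it. The repair is exactly the normalization you already apply to $T_{\mathrm{Sub}}$: write each summand as $\vert V(H_j)\vert\cdot T_{\mathrm{SubTd}}(\cdot)/\vert V(H_j)\vert$, bound the quotient by $T_{\mathrm{SubTd}}(n,m,\ell)/n$ via Lemma~\ref{lem:properySuperhomogeneous}, and use $\sum_j\vert V(H_j)\vert\leq 2n$ to get $2\,T_{\mathrm{SubTd}}(n,m,\ell)$ without ever evaluating $T_{\mathrm{SubTd}}$ at $2n$. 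Second, realizing a $t$-ary $\Union$ as a single $\Subst_{\overline{K_t}}$ routed to $A_{\mathrm{SubTd}}$ charges every union to the $T_{\mathrm{SubTd}}$ term even when the expression contains no genuine tree-depth substitutions (and needs $\ell\geq 1$); the paper instead decomposes both $\Union$ and $\Jn$ into chains of binary substitutions with patterns $I_2$ and $K_2$ charged to $A_{\mathrm{Sub}}$, which is what makes the specializations in Corollary~\ref{cor:runningtimes} (e.g.\ $G\in\TdkMwh$ with no $T_{\mathrm{SubTd}}$ term) come out correctly. With these two adjustments your argument coincides with the paper's.
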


\begin{proof}
Let $\expr$ be a corresponding algebraic expression of $G\in \TdkMwhMtdl$, i.e., $val(\expr) = G$.
Let $\T{\expr} =tree(\expr)$ be the corresponding expression tree. Clearly, one can replace each occurrence of an operation $\Union_t$ or $\Jn_t$ in $\T{\expr}$ for some $t \geq 1$ by a sequence of $t-1$ operations $\SubstH$ with a pattern graph $H$ consisting of two adjacent or non-adjacent vertices, i.e., $H\in\{I_2,K_2\}$.
Thus, w.l.o.g.~we can assume that $\expr$ does not consists of operations $\Union_t$ nor $\Jn_t$. Furthermore, we can assume that each pattern $H$ for a $\SubstH$ operation is of size at least two and that no argument of any $\SubstH$ operation is the empty graph.
For a node $v \in V(\T{\expr})$, let $\Tv{\expr}{v}$ denote the subtree of $\T{\expr}$ with root node $v$. With a slight abuse of notation, we denote by $val(\Tv{\expr}{v})$ the subgraph of $G$ corresponding to the subexpression tree $\Tv{\expr}{v}$. 

To compute $f(G)$,
we traverse the expression tree $\T{\expr}$ in a bottom-up manner and compute for each subexpression tree $\Tv{\expr}{v}$ of $\T{\expr}$ the value $f(val(\Tv{\expr}{v}))$ for each $v \in V(\T{\expr})$.
See Algorithm~\ref{alg:Algo} for the divide-and-conquer algorithm that computes $f(G)$ by traversing $\T{\expr}$.  
By induction over the length of the expression it is easy to see that Algorithm~\ref{alg:Algo} computes the function $f(G)$ correctly.

\begin{algorithm}[t] 
\caption{Algorithm for computing $f(G)$}
\hspace*{\algorithmicindent} \textbf{Input:} Graph $G$ with corresponding expression tree $\T{\expr}$ \\
 \hspace*{\algorithmicindent} \textbf{Output:} $f(G)$ 
\begin{algorithmic} 
\IF {$G$ is the empty graph}
\STATE \textbf{return} $f(\circ)$
\ENDIF
\IF {$G$ is a single-vertex graph}
\STATE \textbf{return} $f(\bullet)$
\ENDIF
\STATE Let $r$ be the root node of $\T{\expr}$ and let $v_1, \ldots, v_c$ be the children of $r$ in $\T{\expr}$ 
\STATE Let $G_1, \ldots, G_c$ be the graphs $val(\Tv{\expr}{v_1}), \ldots, val(\Tv{\expr}{v_c})$.
\FOR{$i \in [c]$}
\STATE $f(G_i) = Compute(G_i,\Tv{\expr}{v_i})$
\ENDFOR
\STATE Compute $f(G)$ via the algorithm corresponding to the root node $r$.
\STATE \textbf{return} $f(G)$
\end{algorithmic}\label{alg:Algo}
\end{algorithm}

We are left to show the desired running time for Algorithm~\ref{alg:Algo}.
Note that all leaves of $\T{\expr}$ correspond to either $\circ$ or $\bullet$ operations, while in our setting an operation $\circ$ is necessarily followed by an operation $\IncxEx$. Thus, the number of leaves in $\T{\expr}$ is at most $n$, i.e., at most one for each vertex in $V(G)$; the remaining vertices come via additional $\IncxEx$ operations. Since $f(\circ)$ and $f(\bullet)$ can be computed in constant time, the total time for processing all leaves is bounded by $\Oh(n)$.

All interior nodes of $\T{\expr}$ correspond to either to $\IncxEx$ or $\SubstH$ operations. We denote by $V_{\T{\expr}}^{\Inc}$, $V_{\T{\expr}}^{\Subst}$, resp.\ $V_{\T{\expr}}^{\SubstMTD}$ those nodes in $V(\T{\expr})$ that correspond to an operation $\IncxEx$, $\SubstH$ with $\vert H \vert \leq h$, resp.\ $\SubstH$ with $\td(H) \leq \ell$. The total running time of the algorithm can now be bounded by the sum of the running times needed to process each node in $\T{\expr}$. 

For any node $v_H \in V_{\T{\expr}}^{\Subst} \cup V_{\T{\expr}}^{\SubstMTD}$, let $\n{H}$ and $\m{H}$ denote the number of vertices resp.\ edges in the pattern graph $H$ associated with $v_H$. Thus, a node $v_H \in V_{\T{\expr}}^{\Subst} \cup V_{\T{\expr}}^{\SubstMTD}$ has exactly $\n{H}$ children. Note that $\n{H} \geq 2$ for each pattern graph of a node $v_H$ and since the number of leaves in $\T{\expr}$ is bounded by $n$, it holds that $\vert V_{\T{\expr}}^{\Subst} \cup V_{\T{\expr}}^{\SubstMTD}\vert \leq n-1$. Thus, the sum of the values $\n{H}$ for all nodes $v_H \in  V_{\T{\expr}}^{\Subst} \cup V_{\T{\expr}}^{\SubstMTD}$ can be bounded by $2n$, i.e., the number of leaves plus $\vert V_{\T{\expr}}^{\Subst}\cup V_{\T{\expr}}^{\SubstMTD} \vert$.

Thus, we can now bound the combined running time of all nodes in $V_{\T{\expr}}^{\Subst}$ corresponding to an operation $\SubstH$ with $\vert H \vert \leq h$ in a similar way as done in \cite{KratschN20}:
 \begin{align*}
 \sum_{v_H \in V_{\T{\expr}}^{\Subst}} T_{\mathrm{Sub}}(\n{H},\m{H})  &=     \smashoperator[r]{\sum_{v_H \in V_{\T{\expr}}^{\Subst}}} \n{H} \frac{T_{\mathrm{Sub}}(\n{H},\m{H})}{\n{H}}  \\
 		            &\leq \smashoperator[r]{\sum_{v_H \in V_{\T{\expr}}^{\Subst}}} \n{H} \cdot \left( \max_{ \substack{1 \leq \n{H} \leq h \\ 1 \leq \m{H} \leq m}} \frac{T_{\mathrm{Sub}}(\n{H},\m{H})}{\n{H}} \right) \\
 		            &\leq 2n \cdot \frac{T_{\mathrm{Sub}}(h,m)}{h} 
\end{align*}

The final inequality holds due to Lemma~\ref{lem:properySuperhomogeneous} and  due to $T_{\mathrm{Sub}}$ being superhomogeneous.
For all nodes in $V_{\T{\expr}}^{\SubstMTD}$ corresponding to an operation $\SubstH$ with $\td(H) \leq \ell$ the running time can be bounded in a similar way:

 \begin{align*}
 \sum_{v_H \in V_{\T{\expr}}^{\SubstMTD}} T_{\mathrm{SubTd}}(\n{H}, \m{H}, \td(H))  &\leq     \smashoperator[r]{\sum_{v_H \in V_{\T{\expr}}^{\SubstMTD}}} \n{H} \frac{T_{\mathrm{SubTd}}(\n{H},\m{H}, \ell)}{\n{H}}  \\
 		            &\leq \smashoperator[r]{\sum_{v_H \in V_{\T{\expr}}^{\SubstMTD}}} \n{H} \cdot \left( \max_{ \substack{1 \leq \n{H} \leq n \\ 1 \leq \m{H} \leq m}} \frac{T_{\mathrm{SubTd}}(\n{H},\m{H},\ell)}{\n{H}} \right)  \\
 		            &\leq 2n \cdot \frac{T_{\mathrm{SubTd}}(n,m,\ell)}{n} = 2 \cdot T_{\mathrm{SubTd}}(n,m,\ell)
\end{align*}
Finally, consider all nodes in $V_{\T{\expr}}^{\Inc}$.
We define for each node $v \in V_{\T{\expr}}^{\Inc}$ a depth $d(v)$ equal to the maximum number of nodes in $V_{\T{\expr}}^{\Inc}$ on a path from $v$ to a leaf in $\Tv{\expr}{v}$ (including $v$), i.e., the nesting depth of the operation $\IncxEx$ in $\Tv{\expr}{v}$. 
It is easy to see that for two nodes $v, u \in V_{\T{\expr}}^{\Inc}$ with $d(v) = d(u)$ and $v \neq u$, the subtrees $\Tv{\expr}{v}$ and $\Tv{\expr}{u}$ are disjoint. Furthermore, the depth $d(v)$ of any node in $V_{\T{\expr}}^{\Inc}$ is per definition at most $k$. 
For a node $v \in V_{\T{\expr}}^{\Inc}$, let $n_v$ resp.\ $m_v$ denote the number of vertices resp.\ number of edges in $val(\Tv{\expr}{v})$.
We can now upper bound the running time of all nodes $v \in V_{\T{\expr}}^{\Inc}$:
 \begin{align*}
 \sum_{v \in V_{\T{\expr}}^{\Inc}} T_{\mathrm{Inc}}(n_v,m_v)  =  \sum_{i = 1}^k \sum_{\substack{v \in V_{\T{\expr}}^{\Inc} \\ d(v) = i}} T_{\mathrm{Inc}}(n_v,m_v)  
 		            \leq  \sum_{i = 1}^k  T_{\mathrm{Inc}}(n,m)  
 		            \leq k \cdot T_{\mathrm{Inc}}(n,m) \notag
\end{align*}
The penultimate inequality holds due to $T_{\mathrm{Inc}}$ being superhomogeneous (which implies that $T_{\mathrm{Inc}} \in \Omega(n+m)$).
In total, the running time sums up to $\Oh(k T_{\mathrm{Inc}}(n,m) + \frac{n}{h}  T_{\mathrm{Sub}}(h,m) + T_{\mathrm{SubTd}}(n,m,	\ell))$ and we have proven Theorem~\ref{thm:runningtime}.
\end{proof}

Theorem~\ref{thm:runningtime} describes the running time for graphs in the class $\TdkMwhMtdl$ if algorithms $A_{\mathrm{Inc}}$, $A_{\mathrm{Sub}}$, and $A_{\mathrm{SubTd}}$ are known. Similarly, one can state the running times for the other subclasses of $\TdkMwhMtdl$ that are defined in Section~\ref{section:heterogeneousstructure}.

\begin{corollary}\label{cor:runningtimes}
Let $G = (V,E)$ be a graph with $\vert V \vert = n$ and $\vert E \vert = m$. Let further $f$ be a function defined on subgraphs of $G$ and let $T_{\mathrm{Inc}}$, $T_{\mathrm{Sub}}$, resp.\ $T_{\mathrm{SubTd}}$ be as defined in Theorem~\ref{thm:runningtime}. Then, depending on a given expression $\expr$ with $val(\expr) = G$, one can compute $f(G)$ in the following time:
\begin{table}[ht]
\centering
 \begin{tabular}{l|l}
  & Total running time to compute $f(G)$\\
  \hline
  &\\[-1.1em]
  $G \in \Mwh$     & $\Oh(\frac{n}{h}  T_{\mathrm{Sub}}(h,m))$ \hfill \cite{KratschN18}\\
  $G \in\Tdk$     & $\Oh(k T_{\mathrm{Inc}}(n,m))$ \hfill \cite{IwataOO18}\\
  $G \in\Mtdl$    & $\Oh(T_{\mathrm{SubTd}}(n,m,\ell))$\\
  $G \in\TdkMwh$  & $\Oh(k T_{\mathrm{Inc}}(n,m)) + \Oh(\frac{n}{h}  T_{\mathrm{Sub}}(h,m))$ \\
  $G \in\TdkMtdl$ & $\Oh(k T_{\mathrm{Inc}}(n,m)) + \Oh(T_{\mathrm{SubTd}}(n,m,\ell))$            \\
  $G \in\MwhMtdl$ & $\Oh(\frac{n}{h}  T_{\mathrm{Sub}}(h,m)) +  \Oh(T_{\mathrm{SubTd}}(n,m,\ell))$ 
 \end{tabular}
\end{table}
\end{corollary}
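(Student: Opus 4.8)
The plan is to reuse the accounting from the proof of Theorem~\ref{thm:runningtime} almost verbatim, observing that each of the six rows of the table arises by forbidding some operation types, so that the corresponding node set among $V_{\T{\expr}}^{\Inc}$, $V_{\T{\expr}}^{\Subst}$, $V_{\T{\expr}}^{\SubstMTD}$ is empty and its contribution to the bound simply vanishes. Given an expression $\expr$ witnessing membership in one of the subclasses, I would first apply the same preprocessing as in the Theorem (make every pattern have at least two vertices and discard empty arguments), and then write the total work as the sum of the three contributions bounded there. Since that bound comes from a partition of the interior nodes of $\T{\expr}$ into the three sets above, deleting the operations absent from a given subclass removes exactly the matching summand: no $\IncxEx$ kills the $k\,T_{\mathrm{Inc}}(n,m)$ term, no bounded-size $\SubstH$ kills the $\tfrac{n}{h}T_{\mathrm{Sub}}(h,m)$ term, and no bounded-tree-depth $\SubstH$ kills the $T_{\mathrm{SubTd}}(n,m,\ell)$ term. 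Consequently each homogeneous row reduces to a single term and each combination row to the sum of its two surviving terms, modulo the treatment of $\Union$ and $\Jn$ handled next.

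The point that needs care is that $\Union$ and $\Jn$ occur in every class, and in the Theorem's proof they are eliminated by rewriting each $\Union_t$ resp.\ $\Jn_t$ into $t-1$ binary substitutions into $H\in\{I_2,K_2\}$. For every class except the pure $\Tdk$, the claimed bound already contains at least one substitution term, and I would route these binary substitutions into it: into $V_{\T{\expr}}^{\Subst}$ when a bounded-size term is present (legitimate since $\lvert I_2\rvert=\lvert K_2\rvert=2\le h$) and into $V_{\T{\expr}}^{\SubstMTD}$ otherwise (legitimate since $\td(I_2)=1$ and $\td(K_2)=2\le\ell$). Either way they are absorbed into the already-present $\tfrac{n}{h}T_{\mathrm{Sub}}(h,m)$ resp.\ $T_{\mathrm{SubTd}}(n,m,\ell)$ term, the factor-$2$ slack in the Theorem's bounds (from $\sum_{v_H}\n{H}\le 2n$) accommodating the extra binary-substitution nodes. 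Here I assume $h,\ell\ge 2$, which costs nothing for the asymptotics since smaller values describe only degenerate, cograph-like classes. This settles the homogeneous rows $\Mwh$ and $\Mtdl$ together with all three combination rows.

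The genuinely delicate case, and the one I expect to be the main obstacle, is the pure class $\Tdk$: its claimed bound $\Oh(k\,T_{\mathrm{Inc}}(n,m))$ contains no substitution term at all, yet its expressions still use $\Union$, whose combines would, under the Theorem's rewriting, become binary substitutions of cost $T_{\mathrm{Sub}}(2,\cdot)$ a priori unrelated to $T_{\mathrm{Inc}}$. Rather than rewrite the unions, I would charge each disjoint-union combine to the $\IncxEx$ operation immediately above it in $\T{\expr}$. As in the Theorem's per-level analysis of $V_{\T{\expr}}^{\Inc}$, at each fixed nesting level $i\in[k]$ the relevant subgraphs $val(\Tv{\expr}{v})$ are pairwise vertex-disjoint, so their vertex and edge counts sum to at most $n$ resp.\ $m$; by Lemma~\ref{lem:properySuperhomogeneous} and superhomogeneity of $T_{\mathrm{Inc}}$ the combined merge-and-add work at level $i$ is then $\Oh(T_{\mathrm{Inc}}(n,m))$, and summing over the at most $k$ levels yields $\Oh(k\,T_{\mathrm{Inc}}(n,m))$. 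The crux is thus to verify that folding a disjoint-union combine into its parent $\IncxEx$ step stays within that step's $T_{\mathrm{Inc}}$ budget, i.e.\ that computing $f$ of a disjoint union from its parts is no more expensive than a single vertex addition; granting this, the row $\Tdk$ follows, and together with the previous paragraphs all rows of the table are established.
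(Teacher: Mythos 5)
Your proposal is correct and matches the paper's (implicit) argument: the paper offers no separate proof of Corollary~\ref{cor:runningtimes}, treating each row as the proof of Theorem~\ref{thm:runningtime} with the absent operation classes contributing empty sums, exactly as you describe. Your extra care about routing the $\Union$/$\Jn$ rewrites into a surviving substitution term, and about the pure $\Tdk$ row where no such term exists, addresses a detail the paper resolves only by citing the framework of Iwata et al.~\cite{IwataOO18} (where merging $f$-values over a disjoint union is assumed cheap), so your charging of union combines to the parent $\IncxEx$ node is a legitimate, slightly more explicit way to justify that row.
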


\section{Applications}\label{section:applications}

In this section we give applications of Theorem~\ref{thm:runningtime}. We present algorithms solving the problems $\TC$, $\NCD$, and $\APSP$. For the latter two problems we restrict ourselves to vertex-weighted graphs, since for edge-weighted graphs the problems $\NCD$ and $\APSP$ are as hard as the general case already on cliques (cf.~\cite{KratschN18}), which are contained in $\TdMwMtd{0}{0}{0}$.

\subsection{Triangle Counting}\label{sec:TC}
As a first application of Theorem~\ref{thm:runningtime} we consider the problem \prob{triangle counting}.
In the \prob{Triangle Counting} problem we are given an undirected graph $G = (V,E)$ and need to compute the number of triangles in $G$, i.e. the number of $K_3$ subgraphs in $G$.
We will prove the following theorem:

\begin{theorem}\label{thm:TriangleCounting}
Let $G$ be an undirected graph such that $G \in \TdkMwhMtdl$ with a given expression for some $k,h,\ell \in \N$. Then one can solve \prob{Triangle Counting} in time $\Oh(h^{\omega-1}n + (k+\ell) m)$.
\end{theorem}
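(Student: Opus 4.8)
The plan is to instantiate the framework of Theorem~\ref{thm:runningtime}: once the three subroutines $A_{\mathrm{Inc}}$, $A_{\mathrm{Sub}}$, and $A_{\mathrm{SubTd}}$ are supplied with the right running times, the bound follows mechanically. As the function value I would carry the triple $f(G') = \big(|V(G')|,\,|E(G')|,\,\tau(G')\big)$, where $\tau(G')$ denotes the number of triangles of $G'$; recording $|V|$ and $|E|$ as well is necessary because triangles of a substituted graph may use vertices from several modules, and these counts are maintained for free. The target running times are $T_{\mathrm{Inc}}(n,m)=\Oh(m)$, $T_{\mathrm{Sub}}(n_H,m_H)=\Oh(n_H^{\omega})$, and $T_{\mathrm{SubTd}}(n_H,m_H,\ell)=\Oh(\ell\, m_H)$; summed over the expression tree these give $\Oh(km) + \Oh(h^{\omega-1}n) + \Oh(\ell m) = \Oh(h^{\omega-1}n + (k+\ell)m)$.

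The combinatorial core is how triangles behave under $\SubstH(G_1,\dots,G_t)$ with $V(H)=\{v_1,\dots,v_t\}$; write $n_i=|V(G_i)|$, $m_i=|E(G_i)|$. Every triangle of the result is of exactly one of three kinds: it lies inside one module $G_i$; it uses an edge of some $G_i$ together with a vertex of a module $G_j$ with $\{v_i,v_j\}\in E(H)$; or it has one vertex in each of three modules whose representatives form a triangle of $H$. This yields
\begin{align*}
 \tau\big(\SubstH(G_1,\dots,G_t)\big)
 &= \sum_{i=1}^{t}\tau(G_i)
 + \sum_{\{v_i,v_j\}\in E(H)} \big(m_i n_j + m_j n_i\big) \\
 &\quad + \sum_{v_i v_j v_k \text{ a triangle of } H} n_i n_j n_k ,
\end{align*}
where the new vertex and edge counts are $\sum_i n_i$ and $\sum_i m_i + \sum_{\{v_i,v_j\}\in E(H)} n_i n_j$. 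The first two sums cost $\Oh(t + m_H)$; the third, a node-weighted triangle count of the pattern $H$, is the only nontrivial term and is where the two substitution routines differ.

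For $A_{\mathrm{Inc}}$, adding $x$ with neighbourhood $N(x)$ given by $E_x$ creates exactly the triangles consisting of $x$ and an edge of $G'$ with both endpoints in $N(x)$; marking $N(x)$ and scanning the edges of $G'$ counts them in $\Oh(|E(G')|+|E_x|)=\Oh(m)$ time. Keeping this step edge-linear (rather than also linear in the number of vertices) is what keeps the tree-depth contribution at $\Oh(km)$: at any fixed $\IncxEx$-nesting level the processed subgraphs are edge-disjoint, so their edge counts sum to at most $m$, and summing over the at most $k$ levels gives $\Oh(km)$. For $A_{\mathrm{Sub}}$ with a pattern of size $t\le h$, the node-weighted triangle count equals $\tfrac16\,\mathrm{tr}\big((DA)^3\big)$, where $A$ is the adjacency matrix of $H$ and $D=\mathrm{diag}(n_1,\dots,n_t)$; a single $t\times t$ fast matrix multiplication computes it in $\Oh(t^{\omega})$ time. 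Since the pattern sizes satisfy $\sum n_H\le 2n$, the total over all such nodes telescopes to $\sum n_H^{\omega}\le h^{\omega-1}\sum n_H = \Oh(h^{\omega-1}n)$, matching the homogeneous modular-width bound of~\cite{KratschN18}. For $A_{\mathrm{SubTd}}$ with $\td(H)\le\ell$ the pattern may have up to $n$ vertices, so matrix multiplication is too expensive; instead I compute the same node-weighted triangle count by running the tree-depth triangle-counting routine on $H$ using its given tree-depth expression, which costs $\Oh(\ell\, m_H)$. Because each edge of $G$ is created as a cross-module edge at a unique substitution node, the pattern edge counts satisfy $\sum m_H\le m$, so the total over all such nodes is $\Oh(\ell m)$.

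The \textbf{main obstacle} is computing the node-weighted triangle count of the pattern within budget in both substitution regimes under one function $f$: for small patterns one must use fast matrix multiplication to reach $\Oh(t^{\omega})$ — the naive $\Oh(t^3)$ would only give $\Oh(h^2 n)$ — while for a large-but-shallow pattern matrix multiplication is infeasible and one must instead exploit the tree-depth structure to remain edge-linear up to the factor $\ell$. Reconciling these two strategies, together with keeping $A_{\mathrm{Inc}}$ edge-linear so that the tree-depth term stays $\Oh(km)$ rather than $\Oh(k(n+m))$, is the crux. Once the three routines and their telescoped totals are in place, Theorem~\ref{thm:runningtime} (equivalently, the $\TdkMwhMtdl$ scheme of Corollary~\ref{cor:runningtimes}) delivers the running time $\Oh(h^{\omega-1}n + (k+\ell)m)$.
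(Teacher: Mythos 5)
Your proposal is correct and follows essentially the same route as the paper: the same triple $f=(|V|,|E|,\tau)$, the same case analysis of triangles under $\SubstH$, an edge-linear $A_{\mathrm{Inc}}$, an $\Oh(t^{\omega})$ node-weighted triangle count for small patterns, and a tree-depth-based $\Oh(\ell\,|E(H)|)$ routine for shallow patterns, all combined via Theorem~\ref{thm:runningtime}. The only difference is cosmetic: you spell out the $\tfrac16\operatorname{tr}\bigl((DA)^3\bigr)$ computation where the paper simply cites the corresponding lemma of Kratsch and Nelles.
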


For a graph $G$, we define $f(G)$ as the function that returns the values $n_G = \vert V(G) \vert$, $m_G = \vert E(G) \vert$, and the number of triangles, denoted by $t_G$.\footnote{For the algorithm $A_{\mathrm{Inc}}$ the value $t_G$ alone would suffice. The values $n_G$ and $m_G$ are only computed to potentially use the algorithms $A_{\mathrm{Sub}}$ and $A_{\mathrm{SubTd}}$.} 
To use Theorem~\ref{thm:runningtime}, we will describe the algorithms $A_{\mathrm{Inc}}$, $A_{\mathrm{Sub}}$, and $A_{\mathrm{SubTD}}$ in the following lemmata.
The following lemma was established by Kratsch and Nelles~\cite{KratschN18}.

\begin{lemma}[\cite{KratschN18}]\label{lem:TCSubst}
Let $H$ be a $t$-vertex graph, let $G_1, \ldots, G_t$ be graphs, and let $f(G_i)$ be given for each $G_i$. Then one can compute $f(\SubstH(G_1, \ldots, G_t))$ in time $\Oh(t^\omega)$.
\end{lemma}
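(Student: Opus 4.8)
The plan is to prove \autoref{lem:TCSubst} by exploiting the structure of substitution: in $\SubstH(G_1,\ldots,G_t)$ every triangle falls into exactly one of three disjoint types according to how its three vertices distribute among the substituted blobs $G_1,\ldots,G_t$ (the copies of the vertices $v_1,\ldots,v_t$ of $H$). First I would set up this classification. A triangle either (i) lies entirely inside a single blob $G_i$, or (ii) has two vertices in one blob $G_i$ and one vertex in another blob $G_j$ with $\{v_i,v_j\}\in E(H)$, or (iii) has its three vertices in three distinct blobs $G_i,G_j,G_k$ whose corresponding vertices form a triangle in $H$. This trichotomy is forced by the substitution semantics: two vertices in distinct blobs $G_i,G_j$ are adjacent if and only if $\{v_i,v_j\}\in E(H)$, and all such cross-blob adjacencies are present or absent uniformly.

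Next I would count each type using only the data $f(G_i)=(n_{G_i},m_{G_i},t_{G_i})$, which is given for free. Type (i) contributes $\sum_{i=1}^t t_{G_i}$. Type (ii): choosing an edge (hence an adjacent pair) inside $G_i$ together with any vertex of an adjacent blob $G_j$ gives $\sum_{\{v_i,v_j\}\in E(H)} (m_{G_i} n_{G_j} + m_{G_j} n_{G_i})$, which I would organize as $\sum_{v_i} m_{G_i} \cdot \bigl(\sum_{v_j\in N_H(v_i)} n_{G_j}\bigr)$. Type (iii): for every triangle $\{v_i,v_j,v_k\}$ of $H$, picking one vertex from each of the three blobs yields $n_{G_i} n_{G_j} n_{G_k}$, so this contributes $\sum_{\{v_i,v_j,v_k\}\in \binom{H}{3}\cap\triangle(H)} n_{G_i} n_{G_j} n_{G_k}$, where the sum is over triangles of $H$. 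Summing the three contributions gives $t_{\SubstH}$, and I would also update $n_{\SubstH}=\sum_i n_{G_i}$ and $m_{\SubstH}=\sum_{i} m_{G_i} + \sum_{\{v_i,v_j\}\in E(H)} n_{G_i} n_{G_j}$ to produce the full value $f(\SubstH(G_1,\ldots,G_t))$.

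The running-time claim $\Oh(t^\omega)$ is then the main point to justify, and it is the only genuinely nonroutine step. Types (i) and (ii) are clearly computable in time $\Oh(t + |E(H)|) = \Oh(t^2)$. The bottleneck is type (iii): enumerating triangles of $H$ naively costs $\Oh(t^3)$, which is too slow. The key idea is to reduce triangle enumeration on $H$ to fast matrix multiplication. I would handle the weights by forming the diagonal matrix $D=\mathrm{diag}(n_{G_1},\ldots,n_{G_t})$ and the adjacency matrix $A$ of $H$, and observe that the weighted triangle count $\sum n_{G_i}n_{G_j}n_{G_k}$ over triangles of $H$ equals $\tfrac{1}{6}\,\mathrm{tr}\bigl((DAD)\cdot A \cdot D\bigr)$ or, equivalently, $\tfrac{1}{6}\sum_{i,j} (A D A)_{ij}\, A_{ij}\, n_{G_i} n_{G_j}$ after absorbing the middle weight. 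Computing the matrix product $ADA$ (a product of $t\times t$ matrices, since $D$ is diagonal and costs only $\Oh(t^2)$ to apply) takes $\Oh(t^\omega)$ time, and the final entrywise combination is $\Oh(t^2)$.

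The one subtlety I would be careful about is the weighting inside the matrix-multiplication trick: the standard count $\mathrm{tr}(A^3)/6$ counts unweighted triangles, so I need to insert the vertex-multiplicities correctly and avoid over- or under-counting the symmetry factor of $6$ (each triangle appears once for each of its $3!$ ordered vertex triples). Writing the product as $M = A D A$ and taking $\tfrac{1}{6}\sum_{i,j} M_{ij}\, A_{ij}\, n_{G_i} n_{G_j}$ places exactly one factor $n_{G_\cdot}$ on each of the three triangle vertices while keeping the summation over ordered pairs $(i,j)$, so each triangle is counted $6$ times and the factor $\tfrac16$ corrects this. Since all other terms are dominated by $\Oh(t^\omega)$, the total running time is $\Oh(t^\omega)$ as claimed. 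I expect the bookkeeping of the symmetry factor, rather than any deep argument, to be the main place where an error could creep in.
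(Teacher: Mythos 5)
Your proof is correct and takes essentially the same route as the cited result: the paper reproduces exactly your three-way classification of triangles and the resulting formulas for $n_G$, $m_G$, and $t_G$ in the proof of Lemma~\ref{lem:TCSubstMTD}, and the $\Oh(t^\omega)$ bound indeed comes from evaluating the weighted triangle count of $H$ with one fast matrix multiplication. One small slip to fix: $\tfrac16\,\mathrm{tr}\bigl((DAD)\cdot A\cdot D\bigr)$ contains only two factors of $A$ and therefore does not count triangles (a correct trace form would be $\tfrac16\,\mathrm{tr}\bigl((AD)^3\bigr)$), but your second, entrywise expression $\tfrac16\sum_{i,j}(ADA)_{ij}A_{ij}\,n_{G_i}n_{G_j}$ is correct and is the one your algorithm actually uses, so the argument stands.
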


\begin{lemma}\label{lem:TCInc}
Let $G$ be a graph, let $x \notin V(G)$, let $E_x \subseteq \{\{x,u\} \mid u \in V(G)\}$, and let $f(G)$ be given. Then one can compute $f( \IncxEx(G))$ in time $\Oh(\vert E(G) \cup E_x \vert)$.
\end{lemma}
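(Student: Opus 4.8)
The plan is to show that adding a single vertex $x$ with edge set $E_x$ to a graph $G$ requires updating the triangle count by exactly the number of triangles through $x$, and that this can be computed in time $\Oh(\vert E(G) \cup E_x\vert)$. First I would observe that the triangles of $\IncxEx(G)$ split into two disjoint groups: those entirely inside $G$, which number exactly $t_G$ (already available from $f(G)$), and those using the new vertex $x$. Since $x$ lies in a triangle if and only if two of its neighbors (the endpoints of edges in $E_x$) are themselves adjacent in $G$, the new triangles are in bijection with edges of $G$ whose \emph{both} endpoints are neighbors of $x$. Updating $n_G$ and $m_G$ is trivial: $n_{\IncxEx(G)} = n_G + 1$ and $m_{\IncxEx(G)} = m_G + \vert E_x\vert$.

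The key algorithmic step is counting, in the claimed time, the edges of $G$ with both endpoints adjacent to $x$. The plan is to mark the neighbors of $x$ using $E_x$ — iterate over $E_x$ and set a boolean flag on each endpoint $u$ with $\{x,u\}\in E_x$ — which costs $\Oh(\vert E_x\vert)$ time. Then iterate over all edges $\{u,v\}\in E(G)$ and increment a counter whenever both $u$ and $v$ are flagged. Each edge is inspected once with a constant-time membership check, so this costs $\Oh(\vert E(G)\vert)$. Adding this counter to $t_G$ gives $t_{\IncxEx(G)}$, and clearing the flags afterward again costs $\Oh(\vert E_x\vert)$. The total running time is $\Oh(\vert E(G)\vert + \vert E_x\vert) = \Oh(\vert E(G) \cup E_x\vert)$, as required; note that since $x\notin V(G)$ and $E_x$ consists only of edges incident to $x$, the union $E(G)\cup E_x$ is disjoint and its size is $\vert E(G)\vert + \vert E_x\vert$.

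I expect the main subtlety (rather than a genuine obstacle) to be justifying the constant-time adjacency/membership test implicit in the bound. Iterating over the edges of $G$ presumes an incidence-list or edge-list representation, and marking neighbors of $x$ in an auxiliary array keyed by vertex identifier is what makes the per-edge test run in constant time; one must be slightly careful that the flag array is reset only on the $\Oh(\vert E_x\vert)$ touched entries rather than reinitialized wholesale, so as not to incur a hidden $\Oh(n)$ term. Once this bookkeeping is in place, correctness follows immediately from the disjoint partition of triangles described above, and the running-time bound is exactly $\Oh(\vert E(G)\cup E_x\vert)$, matching the statement of the lemma.
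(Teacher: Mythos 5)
Your proposal is correct and matches the paper's argument: both count the new triangles as exactly those edges of $G$ whose two endpoints are neighbors of $x$, iterate over the edges once with a constant-time adjacency-to-$x$ test, and add the result to $t_G$. The extra bookkeeping you describe (marking neighbors via $E_x$ and resetting only touched flags) is a reasonable implementation detail that the paper leaves implicit.
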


\begin{proof}
Let $n_{G}$, $m_{G}$, and $t_{G}$ be the values returned by $f(G)$.
First, we compute the number of vertices and edges in linear time.
In order to update the number of triangles, we initialize $t = t_{G}$, iterate over all edges in $\IncxEx(G)$, and check if both endpoints are adjacent to $x$. If so, we increment $t$ by one.
Since every triangle that is in $\IncxEx(G)$ but not in $G$ needs to use the vertex $x$ together with exactly one edge in $E(G)$, we have counted every triangle in time $\Oh(\vert E(G) \cup E_x \vert)$.
\end{proof}

\begin{lemma}\label{lem:TCSubstMTD}
Let $H$ be a $t$-vertex graph, let $G_1, \ldots, G_t$ be graphs, and let $f(G_i)$ be given for each $G_i$. Then one can compute $f(\SubstH(G_1, \ldots, G_t))$ in time $\Oh(\td(H) \cdot \vert E(H) \vert)$.
\end{lemma}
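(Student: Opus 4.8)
The plan is to write $f(\SubstH(G_1,\dots,G_t))$ in closed form in terms of the given values $f(G_i)=(n_{G_i},m_{G_i},t_{G_i})$ and the structure of $H$, and then to observe that the only expensive ingredient is a \emph{vertex-weighted triangle count} on $H$, which fits into the tree-depth part of the framework.

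First I would classify every triangle of $G^\ast := \SubstH(G_1,\dots,G_t)$ according to how its three vertices distribute among the substituted modules $V(G_1),\dots,V(G_t)$. Writing $v_1,\dots,v_t$ for the vertices of $H$, each triangle either (i) lies entirely inside one module $V(G_i)$, (ii) has two vertices in some $V(G_i)$ and one in a $V(G_j)$ with $\{v_i,v_j\}\in E(H)$, or (iii) has its three vertices in three distinct modules $V(G_i),V(G_j),V(G_k)$ whose indices span a triangle of $H$. Using that a vertex of $V(G_j)$ is adjacent to \emph{all} of $V(G_i)$ in $G^\ast$ exactly when $\{v_i,v_j\}\in E(H)$, this yields
\[
t_{G^\ast} = \sum_{i=1}^{t} t_{G_i} + \sum_{\{v_i,v_j\}\in E(H)} \bigl(m_{G_i}n_{G_j}+m_{G_j}n_{G_i}\bigr) + \sum_{\substack{\{v_i,v_j,v_k\}\\ \text{triangle of }H}} n_{G_i}n_{G_j}n_{G_k},
\]
together with the routine identities $n_{G^\ast}=\sum_i n_{G_i}$ and $m_{G^\ast}=\sum_i m_{G_i}+\sum_{\{v_i,v_j\}\in E(H)} n_{G_i}n_{G_j}$. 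The first two sums above, as well as $n_{G^\ast}$ and $m_{G^\ast}$, are computed by a single scan over the $t$ modules and the edge set of $H$, in time $\Oh(\vert V(H)\vert+\vert E(H)\vert)$.

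The remaining (third) sum is precisely the weighted triangle count $W(H)=\sum_{\{v_i,v_j,v_k\}\,\text{triangle of }H} w_iw_jw_k$ of $H$ under the vertex weights $w_i:=n_{G_i}$. To compute it I would apply the tree-depth machinery to $H$ itself: using the given tree-depth expression for $H$ (of nesting depth $\td(H)\le\ell$), I run the $\Tdk$-case of the framework (Corollary~\ref{cor:runningtimes}, with $k=\td(H)$) on $H$, maintaining $W(\cdot)$ as the tracked value. The required incremental routine is the weighted analogue of Lemma~\ref{lem:TCInc}: when a vertex $x$ of weight $w_x$ and edge set $E_x$ is added to a current subgraph $H'$ of $H$, every new triangle consists of $x$ together with an edge $\{a,b\}$ of $H'$ both of whose endpoints lie in $N(x)$, contributing $w_xw_aw_b$; iterating over the edges of $H'$ together with $E_x$ and testing membership in $N(x)$ updates $W$ in time $\Oh(\vert E(H')\cup E_x\vert)$. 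This routine is superhomogeneous (linear in $\vert V\vert+\vert E\vert$), so Corollary~\ref{cor:runningtimes} computes $W(H)$ in time $\Oh(\td(H)\cdot\vert E(H)\vert)$, the factor $\td(H)$ arising from the level-by-level argument of Theorem~\ref{thm:runningtime} in which the $\Inc$-subtrees at a fixed nesting depth are vertex-disjoint.

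Adding the $\Oh(\vert V(H)\vert+\vert E(H)\vert)$ for the cheap terms to the $\Oh(\td(H)\cdot\vert E(H)\vert)$ for $W(H)$ gives the claimed bound, since we may assume $H$ has no isolated vertices and hence $\vert V(H)\vert\le 2\vert E(H)\vert$ (an isolated vertex $v_i$ contributes to no triangle of $G^\ast$ and to none of the cross terms, so it can be discarded before the weighted count, its module being accounted for trivially). The main obstacle I expect lies exactly in this third term: verifying that the vertex-weighted triangle count on $H$ genuinely fits the tree-depth framework with only $\Oh(\vert E(H)\vert)$ work per nesting level, i.e.\ that the weighted incremental update preserves the $\Oh(\vert E\vert)$ per-$\Inc$ cost of Lemma~\ref{lem:TCInc} and that summing over the per-level disjoint $\Inc$-nodes multiplies this by at most $\td(H)$ rather than by the total number of $\Inc$-nodes. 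By comparison, the triangle case analysis and the handling of isolated vertices are routine.
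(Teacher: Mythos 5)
Your proposal is correct and follows essentially the same route as the paper: the same case analysis yielding the closed-form expressions for $n_{G}$, $m_{G}$, and the first two summands of $t_{G}$ in time $\Oh(\vert E(H)\vert)$, followed by computing the weighted triangle sum $\sum n_{i_1}n_{i_2}n_{i_3}$ by running the tree-depth framework on $H$ with the $\IncxEx$ routine of Lemma~\ref{lem:TCInc} modified to add $n_{i_1}n_{i_2}n_{i_3}$ per detected triangle. Your additional care about isolated vertices of $H$ is a detail the paper glosses over, but it does not change the argument.
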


\begin{proof}
Let $G = \SubstH(G_1, \ldots, G_t)$ for a $t$-vertex graph $H$ with $V(H) = \{v_1, \ldots, v_t\}$ and let $n_{i}$, $m_{i}$, and $t_{i}$ be the values returned by $f(G_i)$.
As shown in \cite{KratschN18}, for the number of vertices in $G$ it holds that $n_G = \sum_{v_i \in V(H)}n_i$, for the number of edges in $G$ it holds that $m_G = \sum_{v_i \in V(H)} m_i + \sum_{\{v_{i_1}, v_{i_2}\} \in E(H)} n_{i_1} \cdot n_{i_2}$, and finally for the number of triangles it holds that $t_G = \sum_{v_i \in V(H)} t_i + \sum_{\{v_{i_1}, v_{i_2}\} \in E(H)} (m_{i_1}n_{i_2}+n_{i_1}m_{i_2}) +  \sum_{\{v_{i_1}, v_{i_2}, v_{i_3}\} \in T(H)} n_{i_1} \cdot n_{i_2} \cdot n_{i_3}$, where $T(H)$ denotes the set of all triangles in $H$. Thus, $n_G$, $m_G$, and the first two summands of $t_G$ can be computed in time $\Oh(\vert E(H) \vert)$. To compute the final summand of $t_G$, 
we use the tree-depth expression of $H$ by adjusting the algorithm for $\IncxEx$ (shown in Lemma~\ref{lem:TCInc}), such that whenever we find a triangle $\{v_{i_1}, v_{i_2}, v_{i_3}\}$ we increment the number of triangles by $n_{i_1} \cdot n_{i_2} \cdot n_{i_3}$ (instead of by one). Using the tree-depth running-time framework by Iwata et al.~\cite{IwataOO18}, we have proven the lemma.
\end{proof}

\begin{proof}[Proof of Theorem~\ref{thm:TriangleCounting}]
We use Theorem~\ref{thm:runningtime} to prove the claim.
Lemma~\ref{lem:TCInc}, \ref{lem:TCSubst}, resp.\ Lemma~\ref{lem:TCSubstMTD} provide the algorithms $A_{\mathrm{Inc}}$, $A_{\mathrm{Sub}}$, resp.\ $A_{\mathrm{SubTd}}$. Moreover, $f(\circ)$ and $f(\bullet)$ can be computed in constant time. Thus, by Theorem~\ref{thm:runningtime}, the claim follows.
\end{proof}

\subsection{Negative Cycle Detection}\label{sec:NegCycle}

Before stating the algorithm for $\NCD$, we transfer the graph classes defined in Section~\ref{section:heterogeneousstructure} to directed graphs. To do so, we require each pattern graph $H$ to be directed and we define the operation $\IncxEx$ for an edge set $E_x \subseteq \{ (x,u) \mid u \in V\} \cup \{ (u,x) \mid u \in V\}$. 
In this section, we prove the following theorem.

\begin{theorem}\label{thm:NegCycleDetection}
Let $G = (V,E)$ be a directed graph such that $G \in \TdkMwhMtdl$ with a given expression for some $k,h,\ell \in \N$, and let $w \colon V \rightarrow \mathbb{R}$. Then one can solve \NCD combinatorially in time $\Oh(h^2 n + (k+\ell)(m+n\log n) + n^2)$ resp. in time $\Oh(h^{1.842} n + (k+\ell)(m+n\log n) + n^2)$ using fast matrix multiplication.
\end{theorem}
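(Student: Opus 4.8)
The plan is to apply Theorem~\ref{thm:runningtime}, so the entire task reduces to designing the three routines $A_{\mathrm{Inc}}$, $A_{\mathrm{Sub}}$, and $A_{\mathrm{SubTd}}$ with running times whose aggregation yields the claimed bound. The natural choice of $f(G)$ is a compact distance summary of the subgraph $G = val(\Tv{\expr}{v})$: since \NCD is about detecting a negative-weight closed walk and the weights are on vertices, and since distances within a module fully determine how that module interacts with the rest of the graph, I would store the all-pairs shortest-path distances $dist_{G,w}(u,v)$ for all $u,v \in V(G)$, together with a flag indicating whether $G$ already contains a negative cycle. Once a negative cycle is found anywhere, it is inherited by every supergraph, so the flag propagates trivially upward. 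The reason an $n \times n$ distance matrix is affordable as the stored object, rather than recomputed from scratch, is precisely the divide-and-conquer structure: each operation only needs to combine the child summaries.

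For $A_{\mathrm{Inc}}$, adding a single vertex $x$ with in/out-edges $E_x$ and vertex weight $w(x)$: I would run a single-source shortest-path computation (Dijkstra after a potential transformation, or exploiting that only distances through the new vertex change) to update the distance summary, giving the $\Oh(m + n\log n)$ per-application cost that aggregates to $\Oh(k(m+n\log n))$ over the $k$ nesting levels, exactly matching the Iwata et al.\ \NCD bound. For $A_{\mathrm{Sub}}$ with a small pattern $H$ on $t \le h$ vertices: substituting graphs $G_1,\dots,G_t$ into $H$ means that a shortest path entering module $i$ and leaving module $j$ pays the internal distance of $G_i$, crosses to $G_j$ via the (uniform) adjacency dictated by $H$, and so on; contracting each $G_i$ to a representative and solving shortest paths on the $\Oh(t^2)$-sized ``quotient'' graph gives the $\Oh(h^2)$-per-vertex term $\Oh(h^2 n)$, or $\Oh(h^{1.842})$ via the fast min-plus/Boolean techniques behind the matrix-multiplication variant. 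For $A_{\mathrm{SubTd}}$, the pattern $H$ is large but has tree-depth $\le \ell$; here I would reuse the tree-depth machinery (as in Lemma~\ref{lem:TCSubstMTD}) to run the quotient-graph shortest-path computation through the tree-depth expression of $H$, yielding a cost of the form $T_{\mathrm{SubTd}}(n,m,\ell) = \Oh(\ell(m+n\log n) + n^2)$.

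Assembling these via Theorem~\ref{thm:runningtime} gives total time $\Oh(k\,T_{\mathrm{Inc}}(n,m) + \tfrac{n}{h}T_{\mathrm{Sub}}(h,m) + T_{\mathrm{SubTd}}(n,m,\ell))$, which with the bounds above collapses to $\Oh(h^2 n + (k+\ell)(m + n\log n) + n^2)$, and to the $h^{1.842}$ variant when the substitution step uses fast matrix multiplication. I would need to verify that each $T$ is superhomogeneous in every component so that the framework's aggregation lemma (Lemma~\ref{lem:properySuperhomogeneous}) applies; the additive $n^2$ and $n\log n$ terms are compatible with this since they dominate the trivial $\Omega(n+m)$ lower bound the framework already assumes.

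The main obstacle I anticipate is correctly handling \emph{negative} vertex weights inside the module-combining steps: the existing modular-width \VWAPSP result of Kratsch and Nelles is explicitly restricted to non-negative weights, so I cannot invoke Dijkstra naively in $A_{\mathrm{Sub}}$ and $A_{\mathrm{SubTd}}$. The resolution is that for \NCD I only need to \emph{detect} a negative cycle, not output finite shortest distances, so I would maintain a consistent feasible potential (à la Johnson) on the quotient graphs and treat the discovery of any infeasibility/negative cycle as the termination condition that sets the flag — this is why the theorem says ``solve \NCD combinatorially'' and carries an additive $n^2$ term (from an initial Bellman–Ford-style pass or potential setup) rather than matching the cleaner non-negative \VWAPSP bound. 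Making the potential reparametrization interact correctly with substitution — so that a negative cycle spanning several nested modules is never missed and the reweighting done at one level remains valid at the next — is the delicate part, and it is where I would focus the careful argument.
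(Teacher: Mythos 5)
Your overall skeleton matches the paper's: apply Theorem~\ref{thm:runningtime}, use the Iwata et al.\ potential-update for $A_{\mathrm{Inc}}$, contract each substituted part to a single weighted vertex of a quotient graph for $A_{\mathrm{Sub}}$, and run the increment routine along the tree-depth expression of the pattern for $A_{\mathrm{SubTd}}$. However, there is a genuine gap in your choice of the invariant $f(G)$, and it is not cosmetic: you commit to storing the full all-pairs distance matrix $dist_{G,w}(u,v)$ for all $u,v\in V(G)$. This breaks the claimed running time in two ways. First, the framework requires $T_{\mathrm{Sub}}$ to be a function of $(\vert V(H)\vert,\vert E(H)\vert)$ only, so that the telescoping bound $\sum_{v_H} T_{\mathrm{Sub}}(n_H,m_H)\le 2n\cdot T_{\mathrm{Sub}}(h,m)/h$ applies; merely writing down the distance matrix of $\SubstH(G_1,\ldots,G_t)$ costs $\Theta\bigl((\sum_i n_i)^2\bigr)$, which depends on the children rather than on $H$, and over an unbalanced chain of substitutions with two-vertex patterns this sums to $\Theta(n^3)$ --- already exceeding the target bound of $\Oh(n^2)$ for constant $h,k,\ell$. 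Second, even at the $\IncxEx$ nodes, updating an $n\times n$ matrix costs $\Theta(n^2)$ per application, giving $(k+\ell)n^2$ rather than $(k+\ell)(m+n\log n)$; that is precisely the \VWAPSP bound of Theorem~\ref{thm:VWAPSP}, not the \NCD bound. Your closing paragraph gestures at the right tool (feasible potentials on quotient graphs), but it does not retract the full-matrix invariant on which your running-time accounting rests, so the proposal as written does not settle on a summary that is both sufficient and cheap enough.

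The paper's resolution is to make $f(G)$ exactly a \emph{shortest-path feasible potential} (one number per vertex) plus the single scalar $msp(G)$, the minimum weight of any path in $G$ (single vertices included), or else a negative cycle. The potential certifies the absence of a negative cycle and is updated in $\Oh(m+n\log n)$ at an $\IncxEx$ node via the Iwata et al.\ routine plus two Dijkstra runs (Lemma~\ref{lem:NCDInc}); the scalar $msp(G_i)$ is exactly the weight that must be assigned to the contracted vertex $v_i$ of the quotient graph $\Hw$ so that $\SubstH(G_1,\ldots,G_t)$ has a negative cycle if and only if $\Hw$ does, with $msp$ preserved (Lemma~\ref{lem:NegCycleInH}, resting on the structural Lemma~\ref{lem:spEntersModuleOnce} that shortest paths visit each module consecutively). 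You say a crossing path ``pays the internal distance of $G_i$'' but never pin down that the correct vertex weight is $msp(G_i)$ and that this one value per child suffices. Finally, the additive $n^2$ does not come from a Bellman--Ford-style pass: it comes from the $\Oh(n)$-time reassembly of the potential $\pi(v)=\pi_i(v)+\pi_H(v_i)$ at each of the at most $n$ substitution nodes. Identifying this minimal invariant is the crux of why \NCD is solvable faster than \VWAPSP in this framework, and it is the step your proposal is missing.
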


First, we need to recall some notations about (feasible) potentials.
For a directed graph $G= (V,E)$ with edge weights $c \colon E \rightarrow \mathbb{R}$ and a function $\pi \colon V \rightarrow \mathbb{R}$, we define $c_\pi \colon E \rightarrow \mathbb{R}$ with $c_\pi((x,y)) := c((x,y)) + \pi(x) - \pi(y)$ as the the \emph{reduced costs} with respect to $c$ and $\pi$. 

\begin{observation}\label{obs:potentialPaths}
 Let $G=(V,E)$ be a directed graph with edge weights $c: E \rightarrow \mathbb{R}$, let $\pi \colon V \rightarrow \mathbb{R}$ be a function defined on the vertices of $G$, and let $P$ be an $u$-$v$ path in $G$ for $u,v \in V$. Then $c_\pi(P) = c(P) + \pi(u) - \pi(v)$.  
\end{observation}

If for a function $\pi \colon V \rightarrow \mathbb{R}$ it holds that $c_\pi(e) \geq 0$ for all $e \in E$, we call $\pi$ a \emph{feasible potential} for $c$. It is well known that an edge-weighted graph $G$ has no negative cycle if and only if $G$ admits a feasible potential, see e.g.~\cite{Schrijver02}.
To compute a feasible potential for a graph without a negative cycle, one can use the following lemma:

\begin{lemma}[\cite{korteV11}]\label{lem:sppotential}
Let $G = (V,E)$ be a directed graph that has no negative cycles relative to edge weights $c \colon E \rightarrow \mathbb{R}$. Let $x \notin V$ be a new vertex that is connected to all other vertices with edges of weight zero. Then $\pi \colon V \rightarrow \mathbb{R}$ with $\pi(v) = dist_{G,c}(x,v)$ is a feasible potential.
\end{lemma}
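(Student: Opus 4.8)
The plan is to verify the two defining requirements of a feasible potential for $\pi$ separately: first that $\pi$ is well-defined, meaning each distance $dist_{G,c}(x,v)$ is a finite real number, and then that $c_\pi(e) \geq 0$ holds for every edge $e \in E$.

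For well-definedness, first I would argue that augmenting $G$ with the new source $x$ and its outgoing weight-zero edges introduces no negative cycle. Since $x$ has only outgoing edges and no incoming ones, it cannot lie on any directed cycle, so the directed cycles of the augmented graph coincide exactly with those of $G$, all of which are non-negative by hypothesis. Consequently no shortest-path distance from $x$ can be $-\infty$. Moreover, each vertex $v \in V$ is reachable from $x$ directly via the edge of weight zero, so no distance is $+\infty$ either. Hence $\pi(v) = dist_{G,c}(x,v) \in \mathbb{R}$ for every $v \in V$.

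For feasibility, I would invoke the triangle inequality for shortest-path distances. Fixing an arbitrary edge $e = (u,v) \in E$, concatenating a shortest $x$-$u$ path with the edge $(u,v)$ produces an $x$-$v$ walk of weight $dist_{G,c}(x,u) + c(e)$; as $dist_{G,c}(x,v)$ is the minimum weight over all $x$-$v$ paths, this yields $\pi(v) = dist_{G,c}(x,v) \leq dist_{G,c}(x,u) + c(e) = \pi(u) + c(e)$. Rearranging gives $c_\pi(e) = c(e) + \pi(u) - \pi(v) \geq 0$, exactly the feasibility condition.

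The argument is short and essentially routine; the only step requiring genuine care is establishing finiteness of the distances, which is precisely where the no-negative-cycle hypothesis is used. Once that is secured, the feasibility inequality follows immediately from the minimality inherent in the definition of shortest-path distance, with no further computation needed.
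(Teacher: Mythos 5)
Your proof is correct and is exactly the standard argument; the paper does not prove this lemma at all but imports it by citation from Korte--Vygen, so there is no in-paper proof to diverge from. The only point worth tightening is the step where you compare the weight of the concatenated $x$-$v$ \emph{walk} against $dist_{G,c}(x,v)$ defined as a minimum over \emph{paths}: this comparison is valid precisely because, as you established, the augmented graph has no negative cycles, so any walk can be shortened to a path of no greater weight --- it would be worth saying this explicitly rather than leaving it implicit.
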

We call a feasible potential computed as in Lemma~\ref{lem:sppotential} a \emph{shortest-path feasible potential}.
To define a feasible potential also for vertex-weighted graphs, we consider the edge-shifted weights instead, i.e., where the vertex weight of a vertex is shifted to all outgoing edges. 

\begin{mydefinition}\label{def:EdgeShiftedWeights}
Let $G = (V,E)$ be a directed graph with vertex weights $w \colon V \rightarrow \mathbb{R}$. For any edge $(x,y) \in E$, we define the \emph{edge-shifted weights} by $\shiftedw((x,y)) := w(x)$. 
\end{mydefinition}

Note that for a vertex-weighted graph $G$ with vertex weights $w \colon V(G) \rightarrow \mathbb{R}$, a path $P$ is a shortest path between two vertices $u,v \in V(G)$ w.r.t.~the vertex weights $w$ if and only if it is a shortest path w.r.t.~the edge-shifted weights $\shiftedw$.

\begin{observation}\label{obs:edgeshiftedpath}
Let $G = (V,E)$ be a directed graph with vertex weights $w \colon V \rightarrow \mathbb{R}$, let $P$ be a $u$-$v$ path in $G$ for $u, v \in V$, and let $C$ be a cycle in $G$.  Then  $\shiftedw(P) =w(P) - w(v)$ and $\shiftedw(C) = w(C)$.
\end{observation}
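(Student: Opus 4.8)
The plan is to unfold the definition of the edge-shifted weights along the edges of $P$ and $C$ and to observe that $\shiftedw$ assigns to each edge precisely the weight of its \emph{tail} vertex. The whole argument then reduces to counting, for each vertex, how often it occurs as the tail of an edge traversed by $P$ (respectively $C$).

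First I would write the path as $P = (v_0, v_1, \ldots, v_r)$ with $v_0 = u$ and $v_r = v$, so that its edge set is $\{(v_{i-1}, v_i) \mid i \in [r]\}$. By Definition~\ref{def:EdgeShiftedWeights} we have $\shiftedw((v_{i-1}, v_i)) = w(v_{i-1})$, hence
\begin{align*}
\shiftedw(P) = \sum_{i=1}^{r} \shiftedw((v_{i-1}, v_i)) = \sum_{i=1}^{r} w(v_{i-1}) = \sum_{i=0}^{r-1} w(v_i).
\end{align*}
The right-hand side is exactly the total vertex weight $w(P) = \sum_{i=0}^{r} w(v_i)$ with the contribution of the terminal vertex $v = v_r$ removed, which gives $\shiftedw(P) = w(P) - w(v)$. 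The decisive point is that every vertex of $P$ except its head endpoint $v$ is the tail of precisely one traversed edge.

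For the cycle I would write $C = (v_0, v_1, \ldots, v_{s-1}, v_0)$ with edge set $\{(v_i, v_{(i+1) \bmod s}) \mid 0 \leq i \leq s-1\}$. Now \emph{every} vertex $v_0, \ldots, v_{s-1}$ is the tail of exactly one edge of $C$, so the identical unfolding yields $\shiftedw(C) = \sum_{i=0}^{s-1} w(v_i) = w(C)$, where $w(C)$ denotes the total vertex weight of $C$.

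There is no genuine obstacle here; the statement is a direct bookkeeping consequence of the definition of $\shiftedw$. The only point that needs minor care is the counting convention for $w(C)$, namely that it sums each vertex of the cycle once. Under that convention the closing edge of the cycle contributes exactly the tail whose head is the start vertex, so that no vertex is double-counted and none omitted, in contrast to the path case where the head endpoint $v$ is never a tail and is therefore the single missing summand.
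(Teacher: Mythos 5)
Your proof is correct and is exactly the intended bookkeeping argument: each edge contributes the weight of its tail, so a path misses only the head endpoint $v$ while a cycle covers every vertex once. The paper states this as an observation without proof precisely because this direct unfolding of Definition~\ref{def:EdgeShiftedWeights} is immediate, so there is nothing to compare against and nothing to add.
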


In the following, if we speak about a potential for vertex-weighted graphs we implicitly consider the edge-shifted weights. Alongside to a feasible potential we will compute the value of a minimum shortest path of a subgraph in the algorithm, which we define next.
Note that we consider a single vertex as a path.

\begin{mydefinition}
Let $G = (V,E)$ be a graph and let $w\colon V\to\mathbb{R}$ be vertex weights. We define the \emph{minimum shortest path} value as $msp(G) = \min_{P \in \mathcal{P}} w(P)$ where $\mathcal{P}$ denotes the set of all paths (including single vertices) in $G$.
\end{mydefinition}

In a slight abuse of notation we also use $G$ to stand for a weighted, directed graph, consisting of a directed graph $(V,E)$ and vertex weights $w\colon V\to\mathbb{R}$. For such a vertex-weighted, directed graph $G$, we define $f(G)$ as the function that either returns a negative cycle in $G$ or that returns a shortest-path feasible potential and the minimum shortest path value $msp(G)$. (Formally $f(G):=f(G,w)$ and $msp(G):=msp(G,w)$.)

\subparagraph{Increment.}
Let $G= (V,E)$ be a directed graph and let $w\colon V\to\mathbb{R}$ be vertex weights.
Note that one cannot use Dijkstra's algorithm directly to compute a shortest-path feasible potential since there may be negative weights.
However, in \cite{IwataOO18} it was shown how to compute a feasible potential for $\IncxEx(G)$ in time $\Oh(m + n \log n)$ that is not necessarily a shortest-path feasible potential. This can be used to compute $msp(\IncxEx(G))$ and a shortest-path feasible potential in the same running time.
 
\begin{lemma}\label{lem:NCDInc}
Let $G=(V,E)$ be a directed graph, let $x\notin V$, let $E_x \subseteq \{ (x,u) \mid u \in V\} \cup \{ (u,x) \mid u \in V\}$, let $w\colon V\cup\{x\}\to\mathbb{R}$, and let $f(G)$ be given. Then one can compute $f(\IncxEx(G))$ in time $\Oh(m + n \log n)$ with $n =\vert V \vert$ and $m = \vert E \cup E_x \vert $.
\end{lemma}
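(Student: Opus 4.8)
The plan is to reduce the vertex-weighted case to the edge-weighted setting by working throughout with the edge-shifted weights $\shiftedw$ introduced in \autoref{def:EdgeShiftedWeights}. By \autoref{obs:edgeshiftedpath}, negative cycles and shortest paths are preserved under this transformation, so it suffices to maintain a feasible potential and the value $msp(\IncxEx(G))$ with respect to $\shiftedw$. I would exploit the fact that $f(G)$ supplies a \emph{shortest-path feasible potential} $\pi$ for $G$; the key gain from having a feasible potential is that the reduced costs $c_\pi$ are nonnegative on all edges of $G$, which will let me run Dijkstra-type computations despite the presence of negative vertex weights.

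First I would form the augmented graph $\IncxEx(G)$ by adding $x$ together with the edges $E_x$ (directed in or out of $x$, per the directed adaptation). The only genuinely new edges are those incident to $x$, so the reduced-cost potential $\pi$ of $G$ extends almost everywhere; the obstacle is that edges into $x$, out of $x$, and the weight $w(x)$ can create negative reduced-cost edges, and indeed a negative cycle may now appear. I would invoke the result of Iwata et al.~\cite{IwataOO18} quoted at the end of the excerpt: in time $\Oh(m + n\log n)$ one can either detect a negative cycle in $\IncxEx(G)$ (which we then return) or compute \emph{some} feasible potential $\pi'$ for $\IncxEx(G)$. The point of their construction is that it only needs a single-source shortest-path computation with the old potential handling the negative old edges, so one Dijkstra call on nonnegative reduced costs suffices, giving the $\Oh(m + n\log n)$ bound.

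Given a feasible (but not necessarily shortest-path) potential $\pi'$, I would upgrade it to a shortest-path feasible potential as in \autoref{lem:sppotential}: conceptually attach a super-source $s$ joined to every vertex by weight-zero edges and set $\pi''(v) = dist_{\IncxEx(G),\shiftedw}(s,v)$. Since $\shiftedw$ reduced by $\pi'$ is nonnegative, these distances are computable by a single Dijkstra run from $s$ on the reduced costs in time $\Oh(m + n\log n)$, and translating back through \autoref{obs:potentialPaths} recovers the true distances and hence a shortest-path feasible potential for $\IncxEx(G)$. Computing $msp(\IncxEx(G))$ in the same run is straightforward: every shortest path ends at some vertex $v$, and by \autoref{obs:edgeshiftedpath} its vertex-weight cost equals $\shiftedw(P) + w(v)$, so I minimize $dist_{\IncxEx(G),\shiftedw}(s,v) + w(v)$ over all $v$ (noting single vertices are included as paths, contributing $\min_v w(v)$). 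All steps are dominated by the two Dijkstra computations.

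The main obstacle is purely the negative-weight issue around the newly inserted vertex $x$: a naive Dijkstra run fails because $w(x)$ (equivalently the edge-shifted weights on edges leaving $x$) and the edges in $E_x$ may be negative, and this is exactly what forces the detour through Iwata et al.'s feasibility routine rather than a direct shortest-path call. Once a feasible potential is in hand, everything reduces to nonnegative-weight shortest paths and the running time $\Oh(m + n\log n)$ follows immediately, matching the $T_{\mathrm{Inc}}$ bound required by \autoref{thm:runningtime}.
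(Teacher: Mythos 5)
Your proposal is correct and follows essentially the same route as the paper: handle negative cycles, invoke the potential-update routine of Iwata et al.\ to obtain a feasible potential for $\IncxEx(G)$ w.r.t.\ the edge-shifted weights, and then run Dijkstra on the resulting non-negative reduced costs from a zero-weight super-source to recover a shortest-path feasible potential. The only (harmless) deviation is that you read off $msp(\IncxEx(G))$ as $\min_v\bigl(dist(s,v)+w(v)\bigr)$ from that same super-source run, whereas the paper computes $dist(x,\cdot)$ and $dist(\cdot,x)$ explicitly and takes the minimum of $msp(G)$ and $\min_v dist(v,x)+\min_u dist(x,u)-w(x)$; both are valid within the stated time bound.
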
  

\begin{proof}
If $f(G)$ returns a negative cycle, we do so for $f(\IncxEx(G))$. 
Let otherwise $\shiftedw$ be the edge-shifted weights in $G$.
In \cite{IwataOO18} (Theorem~4),
it was shown how to compute a feasible potential $\pi$ w.r.t.\ $\shiftedw$ for $\IncxEx(G)$ in time $\Oh(m + n \log n)$ if a feasible potential for $G$ is known. 
Let $\shiftedw_\pi \colon E \cup E_x \rightarrow \mathbb{R}$ be the reduced cost with respect to $\shiftedw$ and $\pi$.
Using Dijkstra's algorithm, one can compute the values $dist_{\shiftedw_\pi}(x,v)$ for all $v \in V$ in the same running time and, by reversing all edge directions, also the values $dist_{\shiftedw_\pi}(v,x)$. 
By Observation~\ref{obs:potentialPaths} and Observation~\ref{obs:edgeshiftedpath}, we can then reconstruct the distances w.r.t.\ the edge weights $\shiftedw$ and finally w.r.t.\ the original vertex weights $w$.

A minimum shortest path in $\IncxEx(G)$ either does use the new vertex $x$ or it does not, thus, it holds that $msp(\IncxEx(G))$ is the minimum of the two values $\min_{v \in V} dist(v,x) + \min_{u \in V} dist(x,u) - w(x)$ and $msp(G)$, and can be determined in linear time. 

Finally, the feasible potential $\pi$ that is computed in \cite{IwataOO18} does not need to be a shortest-path feasible potential, however, using the corresponding non-negative edge weights $\shiftedw_\pi$ and Lemma~\ref{lem:sppotential}, the problem of computing a shortest-path feasible potential reduces to the problem of computing a shortest path to all vertices from a new vertex that is connected to every vertex with an edge of weight zero in a graph with \emph{non-negative edge weights}. This can be done, using Dijkstra's algorithm once again, in time $\Oh(m + n \log n)$.
\end{proof}

\subparagraph{Substitution.}
Let $H$ be an unweighted, directed $t$-vertex graph and let $G_i$ be vertex-weighted, directed graphs for $i \in [t]$.
The following lemma is proven in \cite{KratschN20} for positive vertex-weighted graphs, Lemma~\ref{lem:spEntersModuleOnce} generalizes the result to arbitrary vertex weights.

\begin{lemma}[adapted from \cite{KratschN20}]\label{lem:spEntersModuleOnce}
Let $H$ be an unweighted, directed $t$-vertex graph and let $G_i$ be vertex-weighted, directed graphs such that $\SubstH(G_1, \ldots, G_t)$ does not contain a negative cycle. Let further $u, v \in V(\SubstH(G_1, \ldots, G_t))$ such that there exists a shortest $u$-$v$ path and  $\{u,v\} \nsubseteq V(G_i)$ for all $i \in [t]$.
Then there exists a shortest $u$-$v$ path $P$ in $\SubstH(G_1, \ldots, G_t)$ in which the occurrences of vertices in each $G_i$ occur consecutively.
\end{lemma}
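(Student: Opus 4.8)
The plan is to argue by an exchange (uncrossing) argument together with a minimality assumption. The only structural fact about substitution I will use is that each $V(G_i)$ is a \emph{module} of $\SubstH(G_1,\ldots,G_t)$ with \emph{uniform} external adjacency: for every vertex $x\notin V(G_i)$ and all $a,b\in V(G_i)$ there is an arc $(x,a)$ iff there is an arc $(x,b)$, and an arc $(a,x)$ iff an arc $(b,x)$. This is immediate from the definition of $H[v_i\leftarrow G_i]$, since all vertices of $G_i$ inherit the in- and out-neighbourhood of $v_i$ in $H$. Among all shortest $u$-$v$ paths (which exist, as there is no negative cycle) I would fix a path $P$ with the fewest vertices, and show that such a $P$ already visits each $V(G_i)$ consecutively; the statement then follows. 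Throughout I work with the edge-shifted weights $\shiftedw$, for which shortest paths coincide with shortest paths under $w$.

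Suppose not, and let $M=V(G_i)$ be a module whose vertices do not occur consecutively on $P$. Then $P$ contains at least two maximal blocks of $M$-vertices separated by a nonempty \emph{excursion} that stays outside $M$. I would isolate one such block $B$ together with the excursion $E$ preceding it, say $P$ traverses $\dots\to a\to \underbrace{x\rightsquigarrow y}_{E}\to \underbrace{b\rightsquigarrow b'}_{B}\to z\to\dots$ with $a,b,b'\in M$ and $x,y,z\notin M$. The surgery replaces the whole piece $a\to x\rightsquigarrow y\to b\rightsquigarrow b'\to z$ by the single arc $a\to z$; this arc is present because $b'\to z$ is present and $M$ has uniform external adjacency. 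To control the weight I would exhibit the deleted part as a closed walk: since $a\to x$ exists, uniform adjacency also gives $b'\to x$, so $C\colon b\rightsquigarrow b'\to x\rightsquigarrow y\to b$ is a closed walk using exactly the deleted vertices $V(B)\cup V(E)$. By Observation~\ref{obs:edgeshiftedpath} its weight $\shiftedw(C)$ equals the sum of the vertex weights on $C$, which is precisely the amount by which the surgery lowers the edge-shifted weight of the path; and because there is no negative cycle, $\shiftedw(C)\ge 0$. Hence the new $u$-$v$ walk is no longer than $P$, so (after shortcutting it to a simple path, which cannot increase its weight) it is again a shortest path, but it lives on a strict subset of $V(P)$ — contradicting the minimality of $\vert V(P)\vert$.

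The single point that uses the hypothesis $\{u,v\}\not\subseteq V(G_i)$ is the existence of the external attachment points in the reconnection. If $v\notin M$ then every $M$-block is followed by a vertex outside $M$, so the exit $z$ above always exists and I may delete the \emph{second} block; if instead $v\in M$ then $u\notin M$, every block is \emph{preceded} by a vertex outside $M$, and the mirror-image surgery (reconnecting the predecessor of the first block to the start of the second block, using $\shiftedw(C)\ge 0$ for the symmetric closed walk) applies. Since not both endpoints lie in $M$, at least one of the two reconnections is always available. The main obstacle — and the reason the argument deletes an entire block rather than simply splicing two blocks together — is that two vertices $a,b\in M$ need not be joined by an arc inside $G_i$, as the module may be internally disconnected; one therefore cannot merge two $M$-visits directly. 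The resolution is exactly the observation that the superfluous block together with its excursion forms a closed walk, which by the no-negative-cycle assumption and Observation~\ref{obs:edgeshiftedpath} has nonnegative weight and can hence be skipped; checking that this closed walk is genuinely nonnegative is the crux of the proof.
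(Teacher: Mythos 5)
Your proof is correct and follows essentially the same route as the paper's: an exchange argument that deletes one maximal block of a module together with the adjacent excursion, uses the module's uniform external adjacency to close the deleted segment into a cycle whose weight is nonnegative by the no-negative-cycle assumption, and handles the endpoint condition $\{u,v\}\nsubseteq V(G_i)$ by a case split on which side of the block an external attachment vertex is guaranteed to exist. The only cosmetic differences are that you phrase the termination via a vertex-minimal shortest path rather than iterating the surgery, and you make the weight accounting explicit through the edge-shifted weights and Observation~\ref{obs:edgeshiftedpath}.
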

\begin{proof}
Denote $G = \SubstH(G_1, \ldots, G_t)$ and let $P = (u=p_1, p_2, \ldots, p_n = v)$ be a shortest $u$-$v$ path in $G$. Assume for contradiction that there exists indices $\alpha, \alpha', \beta, \gamma, \gamma' \in [n]$ with $\alpha \leq \alpha' < \beta < \gamma \leq \gamma'$ such that $p_j \in V(G_i)$ for all $\alpha \leq j \leq \alpha'$ and $\gamma \leq j \leq \gamma'$, but $p_{\beta} \notin V(G_i)$ for some $i \in [t]$. 
We distinguish between the two cases $\alpha \neq 1$ and $\alpha =1$.

If $\alpha \neq 1$ it holds that  $(p_{\alpha-1}, p_{\alpha}) \in E(G)$.
Since $V(G_i)$ is a module in $G$ it also holds that $(p_{\alpha - 1}, p_{\gamma}) \in E(G)$. 
Thus, consider the path $P'=(p_1, \ldots, p_{\alpha - 1}, p_{\gamma}, p_{\gamma+1}, \ldots, p_n)$ and denote by $C' = (p_{\alpha}, \ldots, p_{\gamma-1})$ the skipped path. It holds that $\omega(P) = \omega(P') + \omega(C')$. Since $V(G_i)$ is a module in $G$ with $p_\alpha, p_\gamma \in V(G_i)$ and $(p_{\gamma-1},p_\gamma)\in E(G)$, it holds that also $(p_{\gamma-1}, p_{\alpha}) \in E(G)$. Thus, $C'$ is a cycle and it holds that  $\omega(C') \geq 0$, since $G$ has no negative cycle. Hence, it holds that $\omega(P') \leq \omega(P)$.

If $\alpha = 1$ it holds that $\gamma' \neq n$, since $u$ and $v$ are in different modules and therefore $(p_\gamma, p_{\gamma+1}) \in E(G)$. Since $V(G_i)$ is a module, it also holds that $(p_{\alpha'}, p_{\gamma+1}) \in E(G)$. Thus, consider the path $P'' = (p_1, \ldots, p_{\alpha'}, p_{\gamma'+1}, \ldots, p_n)$ and denote by $C'' = (p_{\alpha'+1}, \ldots, p_{\gamma})$ the skipped path. It holds thats $w(P) = \omega(P'') + \omega(C'')$. 
Since $V(G_i)$ is a module in $G$ with $p_{\alpha'}, p_{\gamma'} \in V(G_i)$ and $(p_{\alpha'},p_{\alpha'+1}) \in E(G)$, it holds that also $(p_{\gamma},p_{\alpha'+1}) \in E(G)$. Thus, $C''$ is a cycle and it holds that $\omega(C'') \geq 0$, yielding again that $\omega(P'') \leq \omega(P)$.

We iterate this procedure for every pair of maximal sequences of vertices that are in a same module but interrupted by at least one vertex of a different module. Since the vertices in $P'$ resp. $P''$ are a strict subset of the vertices in $P$, the number of vertices in the $u$-$v$ path strictly reduces each time.
\end{proof}

Lemma~\ref{lem:spEntersModuleOnce} motivates the following definition.

\begin{mydefinition}\label{def:Homega}
Let $H$ be an unweighted, directed graph with $V(H) = \{v_1, \ldots, v_t\}$ and let $G_i$ be vertex-weighted, directed graphs for $i \in [t]$. Define $\Hw$ as the graph $H$ with vertex weights $w\colon V(H) \rightarrow \mathbb{R}$ defined by $w(v_i) = msp(G_i)$. 
\end{mydefinition}

Let $\Hw$ be the graph defined in Definition~\ref{def:Homega}. 
The next lemma shows that there is a negative cycle in $\SubstH(G_1, \ldots, G_t)$ if and only if there is a negative cycle in $\Hw$ and that the minimum shortest path value in $\SubstH(G_1, \ldots, G_t)$ and $\Hw$ coincide.

\begin{lemma}\label{lem:NegCycleInH}
Let $H$ be an unweighted, directed $t$-vertex graph, let $G_i$ be vertex-weighted, directed graphs without negative cycles, and let $\Hw$ be the graph as defined in Definition~\ref{def:Homega}. Then there exists a negative cycle in $\SubstH(G_1, \ldots, G_t)$ if and only if there exists a negative cycle in  $\Hw$.  Moreover, if $\SubstH(G_1, \ldots, G_t)$ does not admit a negative cycle, $msp(\Hw) = msp(\SubstH(G_1, \ldots, G_t))$.
\end{lemma}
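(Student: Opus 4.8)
The plan is to prove the equivalence of negative cycles first and then the equality $msp(\Hw)=msp(\SubstH(G_1,\ldots,G_t))$, the latter reusing the former. The computations become cleanest once one observes, via Observation~\ref{obs:edgeshiftedpath} together with the definition of $msp$, that under the edge-shifted weights both relevant quantities are merely sums of vertex weights: the weight $w(P)$ of a path equals $\sum_{v\in P}w(v)$, and a cycle $C$ is negative exactly when $\sum_{v\in C}w(v)<0$. The bridge between $\SubstH(G_1,\ldots,G_t)$ and $\Hw$ is the \emph{projection} that collapses each maximal block of consecutive vertices lying in a single module $G_i$ to the vertex $v_i$ of $H$, which by Definition~\ref{def:Homega} carries weight $w(v_i)=msp(G_i)$.

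For the direction that lifts structure from $\Hw$ into $G$, I would realize any cycle or path of $\Hw$ inside $G$ without changing its weight. Given a cycle $v_{i_1},\ldots,v_{i_k}$ of $\Hw$, I choose for each $j$ a path $P_j$ of $G_{i_j}$ attaining $msp(G_{i_j})$ and concatenate them: since $(v_{i_j},v_{i_{j+1}})\in E(H)$ forces \emph{all} edges between the two modules into $G$, the end of $P_j$ joins the start of $P_{j+1}$, and as the $P_j$ lie in distinct modules the concatenation is a genuine cycle of $G$ whose vertex-weight sum is $\sum_j msp(G_{i_j})$, exactly the weight of the $\Hw$-cycle. Hence a negative cycle of $\Hw$ yields one of $G$, and the same concatenation applied to a minimum path of $\Hw$ produces a path of $G$ of equal weight, giving $msp(G)\le msp(\Hw)$.

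For the reverse direction I would take a cycle (resp.\ path) of $G$ and split it into its maximal same-module blocks $B_1,\ldots,B_r$. Each $B_j$ is a path inside some $G_{m_j}$, so $w(B_j)\ge msp(G_{m_j})=w(v_{m_j})$, and collapsing the blocks yields a closed walk (resp.\ walk) $W$ in $H$ whose weight $\sum_j msp(G_{m_j})$ does not exceed the original weight in $G$. Since each $G_i$ is free of negative cycles, a negative cycle of $G$ cannot be contained in a single module, so it projects to a genuine negative closed walk of $\Hw$; as a negative closed walk always contains a negative cycle, this completes the equivalence. For the $msp$ bound I then use that, by the equivalence just shown, $\Hw$ has no negative cycle either, which lets me shorten $W$ to a simple path $Q$ by repeatedly deleting its (necessarily non-negative) closed sub-walks without increasing the weight; then $msp(G)\ge w(W)\ge w(Q)\ge msp(\Hw)$, and together with the lift direction this gives the claimed equality. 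Where convenient, Lemma~\ref{lem:spEntersModuleOnce} may instead be invoked at the outset to assume a minimizing path already traverses each module in one consecutive block, so that its projection is directly a path of $H$.

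I expect the reverse direction for $msp$ to be the crux. First, Lemma~\ref{lem:spEntersModuleOnce} is available only when the two endpoints lie in different modules, whereas an $msp$-minimizer may start and end in the same module yet still leave it; the block-collapse-and-walk-shortening argument is precisely what bypasses this gap and must be set up with care. Second, reducing the collapsed (closed) walk to a simple (cycle) path is the one place where the absence of negative cycles is indispensable, so I must be sure to transport that hypothesis from $G$ to $\Hw$ through the already-established equivalence before invoking it: deleting a closed sub-walk is weight-decreasing only because no such sub-walk is negative. The remaining steps are routine vertex-weight bookkeeping once the edge-shifting convention is fixed.
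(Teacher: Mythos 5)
Your proof is correct, but it takes a genuinely different route from the paper's for the harder direction. The paper argues directly with \emph{simple} cycles and paths: it picks a negative cycle of \emph{minimum weight} in $\SubstH(G_1,\ldots,G_t)$, observes that every subpath of it is a shortest path, and invokes Lemma~\ref{lem:spEntersModuleOnce} to conclude the cycle meets each module in one consecutive block of weight exactly $msp(G_i)$, so that the projection is immediately a simple cycle of $\Hw$ of the same weight; for the $msp$ inequality it then runs an explicit three-way case analysis (path inside one module; path starting and ending in the same module, where the returning tail is shown non-negative; path between distinct modules, again via Lemma~\ref{lem:spEntersModuleOnce}). You instead project an \emph{arbitrary} cycle or path block-by-block to a closed walk or walk in $\Hw$ whose weight only decreases, and then appeal to the standard facts that a negative closed walk contains a negative cycle and that, absent negative cycles, a walk can be shortened to a simple path without increasing its weight. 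Your version buys uniformity: it dispenses with the minimum-weight-cycle selection, with the case analysis, and essentially with Lemma~\ref{lem:spEntersModuleOnce} (which, as you note, only applies when the endpoints lie in different modules), at the cost of having to carry the walk-to-path reduction and to transport the no-negative-cycle hypothesis to $\Hw$ before invoking it --- both of which you handle correctly. The lifting direction (realizing cycles and paths of $\Hw$ inside $\SubstH(G_1,\ldots,G_t)$ by concatenating $msp$-attaining paths across fully joined modules) is the same in both arguments.
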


\begin{proof}
Assume that $\SubstH(G_1, \ldots, G_t)$ has a negative cycle. 
Let $C$ be a negative cycle in $\SubstH(G_1, \ldots, G_t)$ of minimum weight. Since no $G_i$ contains a negative cycle, $C$ cannot be completely inside a single $G_i$. Moreover, because $C$ is chosen as a cycle of minimum weight, each subpath of $C$ is a shortest path and thus, by Lemma~\ref{lem:spEntersModuleOnce}, $C$ enters each $G_i$ at most once. 
Denote by $P_i$ the maximal subpaths of $C$ with vertices in $V(G_i)$.
Clearly, $w(P_i) \geq msp(G_i)$, and in fact it even holds that $w(P_i) = msp(G_i)$, since otherwise we could replace $P_i$ by the path $P_i'$ with $w(P_i') = msp(G_i)$, because $V(G_i)$ is a module in $\SubstH(G_1, \ldots, G_t)$, which contradicts the choice of $C$ as a cycle of minimum weight. 
Thus, there is a corresponding cycle $C'$ in $\Hw$ with $w(C) = w(C')$. 
Conversely, each cycle in $\Hw$ directly corresponds to a cycle in $\SubstH(G_1, \ldots, G_t)$ of the same length by replacing each vertex by a corresponding minimum shortest path, hence there is a negative cycle in $\SubstH(G_1, \ldots, G_t)$ if and only if there is one in $\Hw$.

We are left to show $msp(\Hw) = msp(\SubstH(G_1, \ldots, G_t))$ if $\SubstH(G_1, \ldots, G_t)$ does not contain a negative cycle. 
Clearly, $msp(\SubstH(G_1, \ldots, G_t)) \leq msp(\Hw)$, since every path in $\Hw$ corresponds to a path in $\SubstH(G_1, \ldots, G_t)$ of the same length. For the other direction let $P$ be a path in $\SubstH(G_1, \ldots, G_t)$ with $w(P) = msp(\SubstH(G_1, \ldots, G_t))$. 
We distinguish three cases:
(1) $V(P) \subseteq V(G_i)$ for some $i \in [t]$. Then the corresponding vertex $v_i \in V(\Hw)$ has weight at most $w(P)$ and thus $ w(P) \geq msp(G_i) \geq msp(\Hw)$. 
(2) $P$ starts and ends in the same vertex set $V(G_i)$ for some $i \in [t]$. Let $P'$ be the first part of $P$ that is completely in $V(G_i)$ and let $P''$ be the remainder of $P$. Then it holds that $w(P'') \geq 0$ since otherwise there would be a negative cycle in $\SubstH(G_1, \ldots, G_t)$ on the same vertex set of $P''$ ($V(G_i)$ is a module). Thus, $w(P) = w(P') + w(P'') \geq w(P') \geq msp(G_i) \geq msp(\Hw)$. 
(3) $P$ starts in $V(G_i)$ and ends in $V(G_j)$ for $i,j \in [t]$ with $i \neq j$. Then, by Lemma~\ref{lem:spEntersModuleOnce}, $P$ visits each $G_i$ at most once. Let $P^*$ be the path that replaces each maximal subpath of $P$ in some $G_i$ by the minimum shortest path in $G_i$. Now, it holds that $w(P) \geq w(P^*)$ and since there is a corresponding path $P'$ in $\Hw$ with $w(P') = w(P^*)$ we have shown that $  w(P) \geq w(P^*) \geq msp(\Hw)$.
\end{proof}

We can now prove the following lemma:

\begin{lemma}\label{lem:NCDSubst}
Let $H$ be an unweighted, directed $t$-vertex graph, let $G_i$ be vertex-weighted, directed graphs, and let $f(G_i)$ be given for each $G_i$. Then one can compute $f(\SubstH(G_1, \ldots, G_t))$ combinatorially in time $\Oh(t^3 + n)$ or in time $\Oh(t^{2.842} + n)$ using fast matrix multiplication.
\end{lemma}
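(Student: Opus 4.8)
The plan is to reduce the whole computation to the $t$-vertex quotient graph $\Hw$ of Definition~\ref{def:Homega} and then to reassemble the output for $G := \SubstH(G_1,\ldots,G_t)$ in linear time. First I would dispose of the easy case: if some $f(G_i)$ already returns a negative cycle, then this cycle also lies in $G$ (since $G_i$ is an induced subgraph of $G$ with its vertex weights preserved) and we return it. Otherwise each $f(G_i)$ supplies a shortest-path feasible potential $\pi_i$ for $G_i$ together with the value $msp(G_i)$, and I build $\Hw$ from the weights $w(v_i) = msp(G_i)$ in time $\Oh(\vert E(H)\vert)$. By Lemma~\ref{lem:NegCycleInH}, $G$ has a negative cycle if and only if $\Hw$ does, and $msp(G) = msp(\Hw)$ whenever no negative cycle exists, so it suffices to analyse $\Hw$.

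On $\Hw$, whose edge-shifted weights $\shiftedw$ are constant on the out-edges of each vertex (they equal the vertex weights), I would attach a super-source $y$ joined to every vertex by a zero-weight edge and compute single-source shortest paths from $y$. Combinatorially this is one run of Bellman--Ford in time $\Oh(t\cdot\vert E(\Hw)\vert)=\Oh(t^3)$, which either certifies a negative cycle reachable from $y$ (hence any negative cycle, since $y$ reaches all of $\Hw$) or returns a shortest-path feasible potential $\rho$ for $\Hw$; for the faster bound I replace this by a fast-matrix-multiplication based shortest-path routine on the $t$-vertex graph (as in \cite{KratschN20}), running in $\Oh(t^{2.842})$ and exploiting the vertex-weighted structure of $\shiftedw$. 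If a negative cycle of $\Hw$ is found, I expand it into a negative cycle of $G$ by substituting for each of its vertices $v_i$ a minimum shortest path of $G_i$ (Lemma~\ref{lem:NegCycleInH}), in $\Oh(n)$. Otherwise I read off $msp(G)=msp(\Hw)=\min_{v_i}\bigl(\rho(v_i)+w(v_i)\bigr)$ in time $\Oh(t)$, using Observation~\ref{obs:edgeshiftedpath} (the weight of a path ending at $v_i$ equals its $\shiftedw$-cost plus $w(v_i)$, and $\rho(v_i)=dist_{\Hw,\shiftedw}(y,v_i)$).

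It remains to produce a shortest-path feasible potential for $G$ \emph{without} inspecting its (up to $m$) edges, so that the reconstruction stays within the $\Oh(n)$ budget. The clean choice is $\pi(a):=\rho(v_i)+\pi_i(a)$ for every $a\in V(G_i)$, computable in $\Oh(\sum_i n_i)=\Oh(n)$. To check feasibility of $\pi$ w.r.t.\ $\shiftedw$ on $G$ I would split edges into intra-module and inter-module ones: for an intra-module edge the offset $\rho(v_i)$ cancels and feasibility reduces to that of $\pi_i$ in $G_i$; for an inter-module edge from $a\in V(G_i)$ to $b\in V(G_j)$ I combine three facts, namely $w(a)+\pi_i(a)\ge\min_{a'}\bigl(w(a')+\pi_i(a')\bigr)=msp(G_i)=w(v_i)$ (which holds because $\pi_i$ is a shortest-path feasible potential, via Observation~\ref{obs:edgeshiftedpath}), the feasibility of $\rho$ on $\Hw$ in the form $\rho(v_i)-\rho(v_j)\ge -w(v_i)$, and $\pi_j(b)\le 0$ (true for shortest-path potentials, since the super-source reaches $b$ by a zero-weight edge). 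Finally, to confirm that $\pi$ is genuinely a shortest-path feasible potential, i.e.\ equals $dist_{G,\shiftedw}(x^*,\cdot)$ from a super-source $x^*$, I would prove both inequalities: expanding an optimal $\Hw$-walk to $v_i$ into a path of $G$ that threads minimum shortest paths through the intermediate modules and ends at $a$ gives $dist_{G,\shiftedw}(x^*,a)\le\rho(v_i)+\pi_i(a)$, while extending $\pi$ by $\pi(x^*)=0$ (feasible because $\pi\le 0$ everywhere) yields the reverse inequality through the standard potential bound $dist\ge\pi(\cdot)-\pi(x^*)$.

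The main obstacle is the fast-matrix-multiplication shortest-path step on $\Hw$ that delivers the $\Oh(t^{2.842})$ bound; the combinatorial $\Oh(t^3)$ version is a routine Bellman--Ford computation. A secondary point requiring care is keeping the reconstruction edge-free: both the potential (via the closed form $\rho(v_i)+\pi_i(a)$) and the value $msp(G)$ (via $\min_{v_i}(\rho(v_i)+w(v_i))$) must be obtained from the per-module data $f(G_i)$ and the quotient solution alone, so that the overhead beyond the work on $\Hw$ is only $\Oh(n)$, giving the total $\Oh(t^3+n)$, respectively $\Oh(t^{2.842}+n)$.
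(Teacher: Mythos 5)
Your proposal is correct and follows essentially the same route as the paper: reduce to the weighted quotient graph $\Hw$ via Lemma~\ref{lem:NegCycleInH}, solve the shortest-path problem there in $\Oh(t^3)$ combinatorially or $\Oh(t^{2.842})$ via a vertex-weighted FMM routine (the paper uses Floyd resp.\ Yuster, where you use Bellman--Ford from a super-source, an inessential difference), and lift the potential by $\pi(a)=\pi_H(v_i)+\pi_i(a)$. Your explicit edge-by-edge feasibility check and the two-sided verification of the shortest-path property are more detailed than the paper's argument (which invokes Lemma~\ref{lem:spEntersModuleOnce} to split a shortest $x'$-$a$ path at the module boundary) but establish the same claim.
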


\begin{proof}
If any $f(G_i)$ indicates a negative cycle, we do so for $f(\SubstH(G_1, \ldots, G_t))$. Let otherwise $\pi_{i}$ be a shortest-path feasible potential for $G_i$ and let $msp(G_i)$ be the minimum shortest path value of $G_i$ for $i \in [t]$.

By Lemma~\ref{lem:NegCycleInH}, it suffices to solve \VWAPSP on $\Hw$ to check if there is a negative cycle in $\SubstH(G_1, \ldots, G_t)$ and (if not) to compute the minimum shortest path value of $\SubstH(G_1, \ldots, G_t)$. This can be done in  time $\Oh(t^{2.842})$ by the algorithm of Yuster~\cite{Yuster09} resp.\ combinatorially in time $\Oh(t^3)$ using a vertex-weighted variant of Floyd's algorithm~\cite{Floyd62}.

We are left to compute a shortest-path feasible potential for $\SubstH(G_1, \ldots, G_t)$. 
To do so, we add a new vertex $x$ to $\Hw$ of weight zero with directed arcs to all other vertices of $\Hw$. This corresponds to adding a new vertex $x'$ to the graph $\SubstH(G_1, \ldots, G_t)$ with directed arcs to all other vertices of $\SubstH(G_1, \ldots, G_t)$.
Now, we compute
a shortest-path feasible potential (regarding the edge-shifted weights) $\pi_H$ for $\Hw$ in time $\Oh(t^{2.842})$ resp.\ $\Oh(t^{3})$.
We define $\pi \colon V(\SubstH(G_1, \ldots, G_t)) \rightarrow \mathbb{R}$ by $\pi(v) = \pi_i(v) + \pi_H(v_i)$ for $v \in G_i$, and claim that $\pi$ is a shortest-path feasible potential for $\SubstH(G_1, \ldots, G_t)$. 
To prove the claim, we show that for any vertex $v \in G_i$ for some $i \in [t]$, a shortest $x'$-$v$ path (regarding the edge-shifted weights) has weight $\pi_i(v) + \pi_H(v_i)$.
Since $x'$ is connected to every vertex in $V(\SubstH(G_1, \ldots, G_t))$, the singleton set $\{x'\}$ forms a module and thus we know by Lemma~\ref{lem:spEntersModuleOnce} that there exists a shortest $x'$-$v$ path that does not enter any $G_i$ twice. Thus, a shortest path from $x'$ to $v$ can be split into a shortest path to reach $V(G_i)$ and a shortest path from some vertex in $G_i$ to $v$. The latter part is exactly $\pi_i(v)$ and the former part can be constructed by using a minimum shortest path in each $G_j$ for $j \in [t]$, i.e., is equal to $\pi_H(v_i)$.    
\end{proof}

\begin{lemma}\label{lem:NCDSubstTd}
Let $H$ be an unweighted, directed $t$-vertex graph, let $G_i$ be vertex-weighted, directed graphs, and let $f(G_i)$ be given for each $G_i$. Then one can compute $f(\SubstH(G_1, .., G_t))$ in time $\Oh(\td(H) \cdot (\vert E(H) \vert + t\log t) + n)$ with $n = \vert \SubstH(G_1, .., G_t)) \vert$.
\end{lemma}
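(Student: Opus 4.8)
The plan is to mirror the structure of the proof of Lemma~\ref{lem:NCDSubst}, but to replace the dense-graph subroutines (vertex-weighted APSP via Floyd or Yuster on the pattern $H$) by tree-depth-based routines applied to $\Hw$, exploiting that $\td(H) \leq \ell$ and that a tree-depth expression for $H$ is given (as assumed in the remark preceding this lemma). The key observation, already supplied by Lemma~\ref{lem:NegCycleInH}, is that the entire substitution step reduces to a computation on the weighted pattern $\Hw$ of Definition~\ref{def:Homega}: there is a negative cycle in $\SubstH(G_1,\ldots,G_t)$ if and only if there is one in $\Hw$, and otherwise $msp(\SubstH(G_1,\ldots,G_t)) = msp(\Hw)$. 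So first I would, exactly as before, handle the trivial case where some $f(G_i)$ already reports a negative cycle, and otherwise read off each $\pi_i$ and $msp(G_i)$, build $\Hw$ in time $\Oh(\vert E(H)\vert)$, and reduce to solving the weighted problem on $\Hw$.

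The heart of the argument is then to solve the weighted problem on $\Hw$ efficiently using its tree-depth structure rather than treating it as an arbitrary $t$-vertex graph. Here I would invoke the tree-depth running-time framework of Iwata et al.~\cite{IwataOO18}: applied to $\Hw \in \Tdl$, their machinery computes a (not necessarily shortest-path) feasible potential for $\Hw$, detects a negative cycle if one exists, and computes $msp(\Hw)$, all in time $\Oh(\td(H)\cdot(\vert E(H)\vert + t\log t))$. This is precisely the per-node cost that the increment routine of Lemma~\ref{lem:NCDInc} incurs, multiplied by the nesting depth $\td(H)$ of the $\IncxEx$ operations in the tree-depth expression for $H$; indeed, one can think of this as running Algorithm~\ref{alg:Algo} on the homogeneous class $\Tdl$ applied to $\Hw$, where the single-operation cost is the $\Oh(\vert E\vert + t\log t)$ Dijkstra-based increment step. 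Having obtained a feasible potential and $msp(\Hw)$ for $\Hw$, I would reconstruct a shortest-path feasible potential $\pi_H$ for $\Hw$ exactly as in Lemma~\ref{lem:NCDInc} (reduce to non-negative edge weights via the reduced costs and apply Lemma~\ref{lem:sppotential}).

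Finally, I would lift $\pi_H$ back to a shortest-path feasible potential $\pi$ for the full substituted graph by the identical construction as in Lemma~\ref{lem:NCDSubst}: set $\pi(v) = \pi_i(v) + \pi_H(v_i)$ for $v \in G_i$, and appeal verbatim to Lemma~\ref{lem:spEntersModuleOnce} to argue that a shortest $x'$-$v$ path splits into a part reaching $V(G_i)$ (captured by $\pi_H(v_i)$, built from minimum shortest paths in each traversed module) and a part internal to $G_i$ (captured by $\pi_i(v)$). The additive $+n$ term in the running time covers the linear-time translation between the edge-shifted weights $\shiftedw$ and the original vertex weights $w$ and the assembly of $\pi$ across all modules. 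The main obstacle is making the invocation of the Iwata et al.\ framework on $\Hw$ precise enough to justify the $\Oh(\td(H)\cdot(\vert E(H)\vert + t\log t))$ bound while correctly threading the shortest-path-feasible-potential requirement through it; conceptually nothing new is needed beyond Lemma~\ref{lem:NCDInc} and Lemma~\ref{lem:NegCycleInH}, but care is required to confirm that the $t\log t$ factor (from Dijkstra on the $t$-vertex pattern) rather than an $n\log n$ factor is what appears, since the per-level work is on $\Hw$ and not on the fully substituted graph.
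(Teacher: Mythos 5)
Your proposal is correct and follows essentially the same route as the paper's own proof: reduce the substitution step to the weighted pattern $\Hw$ via Lemma~\ref{lem:NegCycleInH}, run the Iwata et al.\ tree-depth framework on $\Hw$ with the increment routine of Lemma~\ref{lem:NCDInc} as the per-operation step to get the $\Oh(\td(H)\cdot(\vert E(H)\vert + t\log t))$ bound, and lift the shortest-path feasible potential back to $\SubstH(G_1,\ldots,G_t)$ in linear time exactly as in Lemma~\ref{lem:NCDSubst}. Your version is somewhat more explicit about the potential reconstruction, but conceptually identical.
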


\begin{proof}
If any $f(G_i)$ indicates a negative cycle, we do so for $f(\SubstH(G_1, \ldots, G_t))$. Otherwise, by Lemma~\ref{lem:NegCycleInH}, $\SubstH(G_1, \ldots, G_t)$ has a negative cycle if and only if $\Hw$ does and it holds that $msp(\SubstH(G_1, \ldots, G_t)) = msp(\Hw)$. Moreover, as seen in the proof of Lemma~\ref{lem:NCDSubst}, one can compute a shortest-path feasible potential for $\SubstH(G_1, \ldots, G_t)$ in linear time after we have computed a shortest-path feasible potential for $\Hw$. Thus, we can exploit the tree-depth expression of $\Hw$ by using Lemma~\ref{lem:NCDInc}. Using the tree-depth running-time framework by Iwata et al.~\cite{IwataOO18}, we have proven the lemma.
\end{proof}

\begin{proof}[Proof of Theorem~\ref{thm:NegCycleDetection}.]
Let $\expr$ be the given expression with $val(\expr) = G$. We use Theorem~\ref{thm:runningtime} to prove the claim. Lemma~\ref{lem:NCDInc}, Lemma~\ref{lem:NCDSubst}, resp.\ Lemma~\ref{lem:NCDSubstTd} provide the algorithms $A_{\mathrm{Inc}}$, $A_{\mathrm{Sub}}$, resp.\ $A_{\mathrm{SubTd}}$. Since $\vert \T{\expr} \vert \leq n$, the linear portions of the running time of $A_{\mathrm{Sub}}$ and $A_{\mathrm{SubTd}}$ sum up to $\Oh(n^2)$. Moreover, $f(\circ)$ and $f(\bullet)$ can be computed in constant time. 
Thus, by Theorem~\ref{thm:runningtime}, the claim follows. 
\end{proof}

We emphasize that for the algorithm $A_{\mathrm{Inc}}$ only a feasible potential is needed and for the algorithm $A_{\mathrm{Sub}}$ the value of a minimum shortest path alone would suffice, but in the heterogeneous case we need to compute both. 
We consider a \emph{shortest-path} feasible potential to ease the computation of a (shortest-path) feasible potential in $A_{\mathrm{Sub}}$ and $A_{\mathrm{SubTD}}$.

\subsection{Vertex-Weighted All-Pairs Shortest Paths}\label{section:vwapsp}

In this section we will extend the algorithm of Section~\ref{sec:NegCycle} and compute for all pairs of vertices the shortest-path distance. We will prove the following Theorem:

\begin{theorem}\label{thm:VWAPSP}
Let $G = (V,E)$ be a directed graph such $G \in \TdkMwhMtdl$ with a given expression for some $k,h,\ell \in \N$, and let $w \colon V \rightarrow \mathbb{R}$. Then one can either conclude that $G$ contains a negative cycle or one can solve \VWAPSP combinatorially in time $\Oh(h^2 n + (k+\ell) n^2)$ resp.\ in time $\Oh(h^{1.842} n + (k+\ell) n^2)$ using fast matrix multiplication.
\end{theorem}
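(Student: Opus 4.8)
The plan is to mirror the structure of the \NCD proof (Theorem~\ref{thm:NegCycleDetection}) but redefine $f$ so that it stores enough distance information to reconstruct \emph{all} pairwise shortest paths, and then invoke Theorem~\ref{thm:runningtime} again with the three operation-specific routines $A_{\mathrm{Inc}}$, $A_{\mathrm{Sub}}$, $A_{\mathrm{SubTd}}$. The natural choice is to let $f(G)$ return either a negative cycle or the full matrix of shortest-path distances $dist_{G,w}(u,v)$ for all $u,v\in V(G)$ (equivalently, w.r.t.\ the edge-shifted weights $\shiftedw$, using Observation~\ref{obs:edgeshiftedpath} to translate). Storing the whole $n\times n$ distance matrix is what changes the bookkeeping relative to \NCD, where only $msp(G)$ and a feasible potential were kept; this is the source of the extra $n^2$-type terms in the target running time $\Oh(h^2 n + (k+\ell) n^2)$.

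The key structural fact to exploit is Lemma~\ref{lem:spEntersModuleOnce}: in any $\SubstH(G_1,\dots,G_t)$ without a negative cycle, every shortest path enters each module $V(G_i)$ at most once (consecutively). First I would handle the $A_{\mathrm{Sub}}$ routine. Given the internal distance matrices of each $G_i$ and a shortest-path feasible potential for each (so that reduced edge-shifted weights are nonnegative), one forms the weighted pattern $\Hw$ as in Definition~\ref{def:Homega}, detects a negative cycle or computes all-pairs distances on $\Hw$ via the vertex-weighted APSP algorithm (Floyd-style in $\Oh(t^3)$, or Yuster's $\Oh(t^{2.842})$). By Lemma~\ref{lem:spEntersModuleOnce} a shortest $u$-$v$ path with $u\in V(G_i)$, $v\in V(G_j)$ decomposes as: a shortest path inside $G_i$ to the exit vertex, a module-to-module portion captured by $\Hw$ using minimum shortest path values as vertex weights, and a shortest path inside $G_j$ from the entry vertex to $v$; when $i=j$ the distance is simply the internal distance in $G_i$. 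This yields each cross-module distance as $dist_{G_i}(u,\cdot) + dist_{\Hw}(v_i,v_j) + dist_{G_j}(\cdot,v)$-type combinations, computable by combining the $\Hw$-distances with per-module exit/entry information. The per-pair reconstruction costs $\Oh(1)$ once the $\Hw$ distances are known, contributing the $\Oh(n)$ term; combined with $\Oh(t^3)$ this matches $T_{\mathrm{Sub}}(t,\cdot)=\Oh(t^3+n)$ as in Lemma~\ref{lem:NCDSubst}.

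For $A_{\mathrm{Inc}}$ and $A_{\mathrm{SubTd}}$ I would reuse the tree-depth machinery exactly as in the \NCD section. The increment routine, given $f(G)$ and the new vertex $x$ with edge set $E_x$, recomputes the distance matrix of $\IncxEx(G)$; by converting to nonnegative reduced edge-shifted weights via the stored feasible potential and running Dijkstra once from and once to $x$ (as in Lemma~\ref{lem:NCDInc}), together with the stored $dist_G$ matrix, one obtains all new distances through $x$ in time $\Oh(n^2)$ per increment, matching $T_{\mathrm{Inc}}(n,m)=\Oh(n^2)$ and hence the $(k+\ell)n^2$ terms after multiplying by nesting depth $k$. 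The $A_{\mathrm{SubTd}}$ routine applies the same idea inside $\Hw$: since $\Hw$ has tree-depth at most $\ell$, one uses the Iwata et al.\ tree-depth framework on $\Hw$ to compute its all-pairs distances, then lifts to $\SubstH$ by the same module-decomposition as in $A_{\mathrm{Sub}}$. Plugging $T_{\mathrm{Inc}}=\Oh(n^2)$, $T_{\mathrm{Sub}}(h,\cdot)=\Oh(h^3+n)$ (or $h^{2.842}$), and $T_{\mathrm{SubTd}}$ into Theorem~\ref{thm:runningtime}, and noting $\tfrac{n}{h}\cdot h^3 = h^2 n$, gives exactly $\Oh(h^2 n + (k+\ell)n^2)$ after absorbing the linear $\Oh(n)$ terms into $\Oh(n^2)$ as in the \NCD proof. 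The main obstacle I anticipate is bookkeeping the distance matrix correctly across substitutions: ensuring that the exit/entry vertex information per module is maintained so that cross-module distances can be assembled in $\Oh(1)$ per pair rather than re-solving an all-pairs instance on the full graph, and verifying that the potential-shifting in $A_{\mathrm{Inc}}$ recovers genuine (unreduced) distances via Observations~\ref{obs:potentialPaths} and~\ref{obs:edgeshiftedpath}.
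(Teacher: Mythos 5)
Your high-level plan (reuse the \NCD machinery, decompose shortest paths via Lemma~\ref{lem:spEntersModuleOnce} and the weighted quotient $\Hw$, and plug operation-specific routines into Theorem~\ref{thm:runningtime}) is the right starting point, but the choice to let $f(G)$ carry the \emph{full} distance matrix at every node of the expression tree creates a genuine gap in the running-time analysis. If $A_{\mathrm{Sub}}$ must output all pairwise distances of $\SubstH(G_1,\ldots,G_t)$, it costs $\Omega(n_G^2)$ per substitution node just to write the output, where $n_G=|\SubstH(G_1,\ldots,G_t)|$. Your claim that the per-pair reconstruction ``contributes the $\Oh(n)$ term'' is off: there are $\Theta(n_G^2)$ pairs. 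These quadratic costs do not telescope, because the graphs at nested $\SubstH$ nodes are not disjoint: for a left-deep chain of substitutions with two-vertex patterns (already a cograph, hence in $\TdMwMtd{0}{2}{0}$) one pays $\sum_{i} i^2 = \Theta(n^3)$, far exceeding the claimed $\Oh(h^2n + (k+\ell)n^2)$. This also does not fit the signature of Theorem~\ref{thm:runningtime}, where $T_{\mathrm{Sub}}$ may depend only on $|V(H)|$ and $|E(H)|$ and is charged $\frac{n}{h}T_{\mathrm{Sub}}(h,m)$ in total.

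The missing idea is the paper's deferred materialization of the distance matrix. The paper maintains two different value sets: at $\SubstH$ nodes only a compact summary $f_S$ (a shortest-path feasible potential, $msp(G)$, the row and column minima $\min_v dist_G(u,v)$ and $\min_v dist_G(v,u)$ for each $u$, and the pairwise distances in the quotient $\Hw$ rather than in $G$), computable in $\Oh(t^3+n)$ resp.\ $\Oh(\td(H)t^2+n)$ (Lemmas~\ref{lem:VWAPAPSubst} and~\ref{lem:VWAPSPSubstTd}); the full matrix $f_I$ is computed only at $\IncxEx$ nodes, where the $\Oh(n^2)$ cost is absorbed into the $k\cdot T_{\mathrm{Inc}}(n,m)$ budget because subtrees at equal $\Inc$-nesting depth are disjoint. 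The nontrivial glue you are missing is Lemma~\ref{lem:fStofI}: when an $\IncxEx$ node sits above a maximal subtree of $\SubstH$ nodes, one must upgrade the stored $f_S$ summaries to a full distance matrix in $\Oh(n^2)$, which requires a top-down computation of, for each module, the length of a shortest detour through the complement of that module (the values $c_x$). Without this two-tier bookkeeping your routines are correct but the total running time degrades to $\Oh(n^3)$ in the worst case.
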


We will compute slightly different values $f(G)$ depending on the operation.
Most notably, we will compute the shortest path values between all pair of vertices only before and after an operation $\IncxEx$, whereas for operations $\SubstH$, we restrict the  computation to auxiliary values between modules and single vertices.

\subparagraph{Substitution.}

For a directed, unweighted $t$-vertex graph $H$ with vertex set $V(H) = \{v_1, \ldots, v_t\}$ and directed, vertex-weighted graphs $G_i$ let $\Hw$ be the graph defined in Definition~\ref{def:Homega} and let $G = \SubstH(G_1, \ldots, G_t)$.

For $u \in V(G_i)$ and $v \in V(G_j)$ with $i \neq j$, a shortest $u$-$v$ path in $G$ will consists of a shortest path from $u$ to some vertex in $V(G_i)$ (possibly only $u$), followed by a shortest path from $v_i$ to $v_j$ in $\Hw$ (i.e., using the minimum shortest path in all intermediate modules\footnote{Hence, we do not consider the weights of the start- and endvertex.}), completed by a shortest path from some vertex in $V(G_j)$ to $v$ (possibly only $v$).
This motivates the following definition:
We define the function $f_S(G)$ as the function that returns the same values that we have computed in Section~\ref{sec:NegCycle}, i.e., a shortest-path feasible potential and the value $msp(G)$ of a minimum shortest path in $G$. Additionally, $f_S(G)$ returns for each pair of vertices $v_i,v_j \in V(\Hw)$ the length of a shortest $v_i$-$v_j$ path in $\Hw$ for $i, j \in [t]$, and for each vertex $u \in V(G)$ the values $\min_{v \in V(G)}dist_G(u,v)$ and $\min_{v \in V(G)}dist_G(v,u)$. See also the left side of Table~\ref{tab:valuesfSfI} for the list of values returned by the function $f_S$.

\begin{lemma}\label{lem:VWAPAPSubst}
Let $H$ be an unweighted, directed $t$-vertex graph, let $G_i$ be vertex-weighted, directed graphs, and let $f_S(G_i)$ be given for each $G_i$. Then one can compute $f_S(\SubstH(G_1, .., G_t))$ in $\Oh(t^3 + n)$ combinatorially time or $\Oh(t^{2.842} + n)$ time using fast matrix multiplication.
\end{lemma}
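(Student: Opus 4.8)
The plan is to reduce the computation of $f_S(\SubstH(G_1,\ldots,G_t))$ to a shortest-path computation on the small pattern graph $\Hw$, exactly as was done for the simpler function $f$ in \autoref{lem:NCDSubst}, and then to extract the extra pairwise and boundary values that $f_S$ demands. First I would handle the negative-cycle case: if some $f_S(G_i)$ reports a negative cycle, we report one for $\SubstH(G_1,\ldots,G_t)$; otherwise, by \autoref{lem:NegCycleInH}, there is a negative cycle in $\SubstH(G_1,\ldots,G_t)$ if and only if there is one in $\Hw$, and $msp(\SubstH(G_1,\ldots,G_t))=msp(\Hw)$. So the shortest-path feasible potential and $msp$-value are obtained precisely as in \autoref{lem:NCDSubst}, costing $\Oh(t^3+n)$ combinatorially or $\Oh(t^{2.842}+n)$ with fast matrix multiplication.

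The core new ingredient is to solve \VWAPSP on $\Hw$, i.e.\ compute all pairwise distances $dist_{\Hw}(v_i,v_j)$ for $i,j\in[t]$; this is the same cubic (resp.\ $\Oh(t^{2.842})$) computation already invoked in \autoref{lem:NCDSubst}, so no asymptotic overhead is incurred. These values form one of the three blocks of data that $f_S$ returns. The justification that they are the correct intermodule distances is \autoref{lem:spEntersModuleOnce}: for $u\in V(G_i)$, $v\in V(G_j)$ with $i\neq j$, a shortest $u$-$v$ path enters each module consecutively, so it decomposes into an exit from $G_i$, a path in $\Hw$ from $v_i$ to $v_j$ using the minimum shortest paths (hence the weights $w(v_i)=msp(G_i)$ of \autoref{def:Homega}) in the intermediate modules, and an entry into $G_j$. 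Here I would be careful that the $\Hw$-distance by convention omits the weight of the endpoints' own modules, matching the footnote in the definition of $f_S$, so that the endpoint contributions are accounted for separately by the boundary values below.

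It remains to compute, for each $u\in V(G)$, the two boundary quantities $\min_{v\in V(G)}dist_G(u,v)$ and $\min_{v\in V(G)}dist_G(v,u)$. The plan is to combine, for $u\in V(G_i)$, the \emph{local} boundary values of $G_i$ (which are exactly the analogous quantities stored in $f_S(G_i)$: the minimum distance from $u$ to any vertex of $G_i$ and to any vertex leaving $G_i$) with the best continuation through $\Hw$. Concretely, the minimum over all targets $v$ splits according to whether $v\in V(G_i)$ or $v\in V(G_j)$ for some $j\neq i$; in the latter case the cost is the local exit cost from $u$ out of $G_i$, plus $\min_j\big(dist_{\Hw}(v_i,v_j)+(\text{min entry cost into } G_j)\big)$, and I would precompute for every module the best entry/exit offsets so that each such minimization costs $\Oh(t)$ per vertex, $\Oh(nt)$ overall, which is dominated by $\Oh(t^3+n)$. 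The symmetric computation with reversed edges yields $\min_{v}dist_G(v,u)$. The main obstacle I expect is bookkeeping the offsets correctly: ensuring that when a path both exits $G_i$ and enters $G_j$ the endpoint weights are added exactly once (neither double-counted nor dropped), given that $\Hw$-distances suppress endpoint module weights while the local $f_S(G_i)$ values already incorporate $u$'s own contribution. Once these offsets are fixed consistently via \autoref{obs:edgeshiftedpath}, combining the three blocks is routine and the total running time is $\Oh(t^3+n)$ combinatorially, resp.\ $\Oh(t^{2.842}+n)$ with fast matrix multiplication, proving the lemma.
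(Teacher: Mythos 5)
Your overall route is the same as the paper's: dispatch the negative-cycle case, reuse the computation of the shortest-path feasible potential and of $msp$ from Lemma~\ref{lem:NCDSubst} via Lemma~\ref{lem:NegCycleInH}, solve \VWAPSP on $\Hw$ to obtain the pairwise values $dist_{\Hw}(v_i,v_j)$, and then assemble the boundary values $\min_{v}dist_G(u,v)$ and $\min_{v}dist_G(v,u)$ from the local data in $f_S(G_i)$ together with the $\Hw$-distances, justified by Lemma~\ref{lem:spEntersModuleOnce}. However, your running-time accounting for the last step contains a genuine error: you spend $\Oh(t)$ per vertex $u$ on the minimization over target modules, for $\Oh(nt)$ in total, and claim this is dominated by $\Oh(t^3+n)$. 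It is not: for $t=\log n$, say, one gets $nt=n\log n$ while $t^3+n=\Oh(n)$, so the stated bound is not established, and the excess would also break the aggregation over the expression tree in Theorem~\ref{thm:runningtime}. The fix, which is exactly what the paper does, is to move the minimization into a per-module precomputation: compute $\mu_i:=\min_{v_j\in V(\Hw)}dist_{\Hw}(v_i,v_j)$ for all $i\in[t]$ once, in $\Oh(t^2)$ total time, after which each vertex $u\in V(G_i)$ requires only constant time via
\begin{equation*}
\min_{v\in V(G)}dist_G(u,v)\;=\;\min_{u'\in V(G_i)}dist_{G_i}(u,u')\;+\;\mu_i\;-\;w(v_i),
\end{equation*}
yielding $\Oh(t^2+n)$ for this phase.

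A secondary point: you flag but do not resolve the offset bookkeeping, and your sketch adds a separate ``min entry cost into $G_j$'' on top of $dist_{\Hw}(v_i,v_j)$, which would double-count, since the weight $w(v_j)=msp(G_j)$ already carried by $dist_{\Hw}(v_i,v_j)$ is precisely the best entry-plus-internal-path cost into the target module ($V(G_j)$ being a module, any of its vertices can serve as entry point). The formula above settles the bookkeeping in one stroke: the term $-w(v_i)$ cancels the start-module weight counted inside $dist_{\Hw}$, and no extra entry term is needed. It also covers the case you split off, where the optimal target lies inside $V(G_i)$ itself, because a path from $u$ to some $u'\in V(G_i)$ that leaves and re-enters $V(G_i)$ and is strictly cheaper than $\min_{u'\in V(G_i)}dist_{G_i}(u,u')$ would close a negative cycle. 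These are fixable details, but as written they are where your proof is incomplete; the running-time claim is the part that is actually wrong.
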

\begin{proof}
If any $f(G_i)$ indicates a negative cycle, we do so for $f_S(\SubstH(G_1, .., G_t))$. 
Otherwise, let $\Hw$ be as defined in Definition~\ref{def:Homega} and denote $G = \SubstH(G_1, \ldots, G_t)$.
First of all, we solve the vertex-weighted all-pairs shortest paths problem on $H_w$, check for a negative cycle, compute a shortest-path feasible potential, and compute the minimum shortest path value $msp(G)$ as done in Section~\ref{sec:NegCycle} for \NCD. We are left to compute for each vertex $u \in V(G)$ the values $\min_{v \in V(G)}dist_G(u,v)$ and $\min_{v \in V(G)}dist_G(v,u)$. 

Let $i \in [t]$ such that $u \in V(G_i)$. If $\argmin_{v \in V(G)}dist_G(u,v) \notin V(G_i)$ we know due to Lemma~\ref{lem:spEntersModuleOnce} that there exists a shortest $u$-$v$ path in $G$ in which the occurences of vertices in each $G_i$ occur consecutively. Thus, in this case it holds that 
\begin{align}
\min_{v \in V(G)}dist_G(u,v) = \min_{u' \in V(G_i)}dist(u,u') + \min_{v_j \in V(\Hw)}dist_{\Hw}(v_i, v_j) - w(v_i).  \label{align:mindistfromu}  
\end{align}

If $\argmin_{v \in V(G)}dist_G(u,v) \in V(G_i)$, we observe that in this case it holds that the length of a shortest $u$-$v$ path is equal to  $\min_{u' \in V(G_i)}dist(u,u')$; otherwise there would be a negative cycle in $G$, since w.l.o.g.\ one can assume that each shortest $u$-$v$ path does start with a path in $V(G_i)$ of length $\min_{u' \in V(G_i)}dist(u,u')$. Thus, equation (\ref{align:mindistfromu}) does hold in general.
To compute the value in (\ref{align:mindistfromu}) for all $u \in V(G)$, we first determine for each $v_i \in V(\Hw)$ the value $\min_{v_j \in V(\Hw)}dist_{\Hw}(v_i, v_j)$ in time $\Oh(t^2)$ and store them.
Since the value of $\min_{u' \in V(G_i)}dist(u,u')$ is known by $f(G_i)$, we can compute the value in (\ref{align:mindistfromu}) for each $u \in V(G)$ in constant time.

The value $\min_{v \in V(G)}dist_G(v,u)$ can be computed analogously by executing the whole algorithm at any time also for the edge-flipped graph.
\end{proof}

\begin{table}[t]
 \begin{tabular}{l|l}
  \multicolumn{1}{c|}{Values returned by $f_S(G)$}& \multicolumn{1}{c}{Values returned by $f_I(G)$}\\
  \hline
  &\\[-1.1em]
  $\bullet$ shortest-path feasible potential $\pi$     &$\bullet$  shortest-path feasible potential $\pi$\\
  $\bullet$ $msp(G) = \min_{u,v \in V(G)}dist_G(u,v)$     &$\bullet$  $msp(G) = \min_{u,v \in V(G)}dist_G(u,v)$\\
  $\bullet$ $\min_{v \in V(G)} dist_G(u,v)$ for all $u \in V(G)$    &$\bullet$  $\min_{v \in V(G)} dist_G(u,v)$ for all $u \in V(G)$\\
  $\bullet$ $\min_{v \in V(G)} dist_G(v,u)$ for all $u \in V(G)$    &$\bullet$  $\min_{v \in V(G)} dist_G(v,u)$ for all $u \in V(G)$\\
  $\bullet$ $dist_{H_\omega}(v_i, v_j)$ for all $v_i, v_j \in V(H_\omega)$ &$\bullet$  $dist_G(u,v)$ for all $u,v \in V(G)$
 \end{tabular}
 \caption{An Overview of the values returned by the functions $f_S$ resp.\ $f_I$ for the problem $\VWAPSP$. All but the last bullet are identical. For the problem \NCD the first two bullets are sufficient.}
 \label{tab:valuesfSfI}
\end{table}

\subparagraph{Increment.}
For the operation $\IncxEx$, we define $f_I(G)$ as a function that returns the same values as $f_S(G)$ but instead of the pairwise distances in the graph $H_\omega$ it returns the pairwise distance for all pairs $u,v \in V(G)$. See  also Table~\ref{tab:valuesfSfI} for an summary of the values returned by the function $f_I$.
We compute $f_I(G)$ before and after the $\Inc$-operation. 

Recall that for an expression $\expr$, as defined in Definition~\ref{definition:furtherclasses}, we denote by $\T{\expr}$ the corresponding expression tree and for a node $r\in V(\T{\expr})$ we denote by $\Tv{\expr}{r}$ the subexpression tree of $\T{\expr}$ with root $r$. Further, we will denote by $\Gv{\expr}{r} = val(\Tv{\expr}{r})$ the resulting graph after evaluating $\Tv{\expr}{r}$.
Consider a node $r \in V(\T{\expr})$ labeled according to an operation $\SubstH$ whose parent node is labeled according to an operation $\IncxEx$.

To compute $f_I(\Gv{\expr}{r})$ we will consider the maximal subtree of $\Tv{\expr}{r}$ that only admits nodes labeled according to an operation $\SubstH$. 
Considering this subexpression tree, one can prove in a similar way as done in \cite{KratschN20} the following lemma:

\begin{lemma}\label{lem:fStofI}
Let $\T{\expr}$ be an expression tree of an expression $\expr$ as defined in Definition~\ref{definition:furtherclasses}, let $r \in V(\T{\expr})$ labeled according to an operation $\SubstH$ whose parent node is labeled according to an operation $\IncxEx$, and let $\Gv{\expr}{r}$ be the resulting graph after evaluating $\Tv{\expr}{r}$.
Let further the function $f_S$ resp.\ $f_I$ be known for all graphs corresponding to nodes in $V(\Tv{\expr}{r})$ labeled according to an operation $\SubstH$ resp.\ $\IncxEx$. Then, one can compute $f_I(\Gv{\expr}{r})$ in time $\Oh(n^2)$ with $n = \vert V(\Gv{\expr}{r}) \vert$.
\end{lemma}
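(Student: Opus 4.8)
The plan is to compute the only component of $f_I$ that is not already contained in $f_S$, namely the full matrix of pairwise distances $dist_G(u,v)$ for $G=\Gv{\expr}{r}$; the shortest-path feasible potential, the value $msp(G)$, and the two vectors of per-vertex minimum in- and out-distances all appear verbatim in $f_S(\Gv{\expr}{r})$ and are simply copied. Following the hint, I would work on the maximal subtree $T'\subseteq\Tv{\expr}{r}$ all of whose nodes carry a $\SubstH$ operation and process it bottom-up. The boundary of $T'$ consists of single vertices and of nodes labeled $\IncxEx$; for the latter we are handed $f_I$, hence a complete distance matrix, and single vertices are trivial. Thus at the lowest $\SubstH$ nodes every child already possesses a full distance matrix, and I would maintain this invariant while moving up $T'$, so that at a generic node with immediate substitution $\SubstH(G_1,\dots,G_t)$ each child $G_i$ comes with its full matrix (obtained recursively for $\SubstH$ children) while the pattern distances $dist_{\Hw}(v_i,v_j)$ are read off from $f_S$ at that node.

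The core step is the assembly of the distance matrix of $G=\SubstH(G_1,\dots,G_t)$ from the children's matrices and the precomputed quotient distances $dist_{\Hw}$. For a pair $u\in V(G_i)$, $v\in V(G_j)$ with $i\neq j$, Lemma~\ref{lem:spEntersModuleOnce} guarantees a shortest path that traverses each module contiguously, so the path splits into an internal part of $G_i$ from $u$ to an exit vertex, an optimal passage through the intermediate modules recorded by $dist_{\Hw}(v_i,v_j)$, and an internal part of $G_j$ ending at $v$; reading the pieces off the children's matrices and the pattern distances, and using $msp(G_i)$ together with the per-vertex minimum distances stored in $f_S(G_j)$ to glue them without double counting the edge-shifted weight that $\Hw$ already charges per module, yields $dist_G(u,v)$ by an $\Oh(1)$ formula. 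For a pair $u,v\in V(G_i)$ in the same module I would take the minimum of the internal distance $dist_{G_i}(u,v)$ and the best path that leaves $G_i$ and returns, the latter expressed through $dist_{\Hw}$ via a shortest nonempty closed walk at $v_i$ (which is nonnegative because $\Hw$ has no negative cycle by Lemma~\ref{lem:NegCycleInH}). After an $\Oh(t^2)$ precomputation of the per-vertex minima in $\Hw$ and an $\Oh(\sum_i |G_i|^2)\le\Oh(n^2)$ precomputation of the within-module exit distances, each of the $n^2$ entries is filled in constant time, so one node of size $n_v$ costs $\Oh(n_v^2)$.

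The remaining point is the global running time. At a $\SubstH$ node of size $n_v$ with children of sizes $n_1,\dots,n_t$ (so $\sum_i n_i=n_v$) the work is $\Oh(n_v^2)$ plus the recursive cost on the $\SubstH$ children; by induction the latter is $\sum_i \Oh(n_i^2)$, and since $\sum_i n_i^2\le(\sum_i n_i)^2=n_v^2$ the total telescopes to $\Oh(n_v^2)$, giving $\Oh(n^2)$ at the root $r$. I expect the main obstacle to be the correctness of the distance formula rather than the bookkeeping of the recursion: one must argue carefully that, with the edge-shifted weights, the per-module $msp$ values baked into $\Hw$ are charged exactly once, and in particular handle the same-module case, where---unlike the setting of Lemma~\ref{lem:spEntersModuleOnce}---a shortest path may genuinely profit from leaving and re-entering $G_i$, so the excursion term through $\Hw$ cannot be dropped. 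This is precisely where the non-negativity of closed walks in $\Hw$ (equivalently, the absence of negative cycles guaranteed by Lemma~\ref{lem:NegCycleInH}) is needed to keep the formula finite and correct.
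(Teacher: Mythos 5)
There is a genuine gap, and it is in the running-time analysis rather than in the distance formulas. Your plan processes the maximal $\SubstH$-subtree bottom-up and, at every $\SubstH$ node of size $n_v$, fills in \emph{all} $n_v^2$ entries of the distance matrix of $\Gv{\expr}{v}$ -- in particular it revisits every same-module pair to add the ``leave and re-enter'' excursion available at that level. The recurrence is therefore $C(v)\leq a\,n_v^2+\sum_i C(v_i)$ with $\sum_i n_i=n_v$, and the induction you invoke does not close: to conclude $C(v)\leq b\,n_v^2$ you would need $a\,n_v^2\leq b\,(n_v^2-\sum_i n_i^2)$, but with two children of sizes $n_v-1$ and $1$ the right-hand side is only $b\,(2n_v-2)$. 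Concretely, on a caterpillar-shaped $\SubstH$-subtree (each node has one large $\SubstH$ child and one small one) the node sizes are $n,n-1,n-2,\dots$ and $\sum_v n_v^2=\Theta(n^3)$, so your algorithm as described is cubic, not $\Oh(n^2)$. The issue is unavoidable for any scheme that finalizes same-module pairs again at every ancestor, because a pair deep in the tree is touched once per ancestor.

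The paper sidesteps this by reversing the direction of the ``excursion'' information. It first computes, in a single top-down pass over the set $\U_r$ of relevant nodes, a value $c_x$ for each $x$: the cheapest closed detour that leaves the module $V(\Gv{\expr}{x})$ entirely, travels through $V(\Gv{\expr}{r})\setminus V(\Gv{\expr}{x})$, and returns (with $c_r=\infty$); each $c_{v_i}$ is obtained in $\Oh(1)$ from $c_x$ of the parent and the already-known distances in the parent's $\Hw$, so this pass costs only $\Oh(|\U_r|)\subseteq\Oh(n)$. With all $c_x$ in hand, every pair $u,v$ is then finalized \emph{exactly once}: at the unique $\SubstH$ node where $u$ and $v$ first fall into different child modules (taking the minimum of the within-subgraph route through $\Hw$ and the route $\min_{u'}dist(u,u')+c_x+\min_{v'}dist(v',v)$), or at the boundary node of $\U_r$ containing both, where $f_I$ and hence the internal matrix is already known. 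That gives $\Oh(1)$ per pair and $\Oh(n^2)$ overall. Your local formulas (contiguous traversal of modules via Lemma~\ref{lem:spEntersModuleOnce}, non-negativity of closed walks in $\Hw$ via Lemma~\ref{lem:NegCycleInH}) are essentially the right ingredients; what is missing is this top-down propagation of a single per-module excursion value so that no pair is ever recomputed.
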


\begin{proof}
Since $f_S(\Gv{\expr}{r})$ is known, we are left to compute $dist_{\Gv{\expr}{r}}(u,v)$ for all $u,v \in V(\Gv{\expr}{r})$ in order to compute $f_I(\Gv{\expr}{r})$. 
For this, we traverse the expression tree of $\Tv{\expr}{r}$ downwards as long as one encounters a node that is not labeled according to an operation $\Subst_H$. W.l.o.g.\ we can assume that such a node is either labeled according to an operation $\IncxEx$ or $\bullet$ (a single vertex), since one can assume that no argument of an operation $\SubstH$ is the empty graph and one can replace the operations $\Union$ and $\Jn$ by operations $\SubstH$ with pattern graphs of size two, as seen in the proof of Theorem~\ref{thm:runningtime}. 
We denote the vertex set of this part of the expression tree (including the first nodes that are not labeled according to an operation $\SubstH$) by $\U_r$.
Note that for a vertex $x \in \U_r$ the corresponding vertex set $V(\Gv{\expr}{x})$ forms a module in $\Gv{\expr}{r}$.
We will determine for each such $\Gv{\expr}{x}$ the value of a shortest $u$-$v$ path in $\Gv{\expr}{r}$ only using vertices in $V(\Gv{\expr}{r}) \setminus V(\Gv{\expr}{x})$ with the property that $u$ and $v$ are adjacent to (all vertices of) $V(\Gv{\expr}{x})$. We denote this value by $c_x$ and we set $c_r = \infty$.

Assume for now that one has computed the values $c_x$ for all $x \in \U_r$ in time $\Oh(n^2)$ for $n = \vert V(\Gv{\expr}{r}) \vert$.
Consider a node $x \in \U_r$ labeled according to an operation $\SubstH$ and let $\Gv{\expr}{x} = \SubstH(G_1, \ldots, G_t)$ be the corresponding graph with each $G_i$ corresponds to a child of $x$. Let $H_w$ be the graph defined in Definition~\ref{def:Homega}.

For two vertices $u,v$ with $u \in V(G_i)$ and $v \in V(G_j)$ for $i \neq j$, it holds that a shortest $u$-$v$ path in $\Gv{\expr}{r}$ is either completely in $\Gv{\expr}{x}$ or it does use vertices in $V(\Gv{\expr}{r})\setminus V(\Gv{\expr}{x})$.
Since there is no negative cycle in $\Gv{\expr}{r}$ and due to Lemma~\ref{lem:spEntersModuleOnce}, a shortest $u$-$v$ path completely in $\Gv{\expr}{x}$ has length 
$\min_{u' \in V(G_i)}dist_{G_i}(u,u') + dist_{\Hw}(v_i, v_j) - w(v_i) - w(v_j) + \min_{v' \in V(G_j)}dist_{G_j}(v',v)$. 
If a shortest $u$-$v$ path does use vertices in $V(\Gv{\expr}{r})\setminus V(\Gv{\expr}{x})$, we can determine the length by $\min_{u' \in V(\Gv{\expr}{x})}dist_{\Gv{\expr}{x}}(u,u') + c_x + \min_{v' \in V(\Gv{\expr}{x})}dist_{\Gv{\expr}{x}}(v',v)$.
Thus, for a node $x \in \U_r$ labeled according to an operation $\SubstH$, we can compute the shortest-path distance of two vertices that are in different modules by taking the minimum of those two values.

For a node $x \in \U_r$ that is not labeled according to an operation $\SubstH$, we have already computed $f_I(\Gv{\expr}{x})$ and thus, $dist_{\Gv{\expr}{x}}(u,v)$ is known for all $u,v \in V(\Gv{\expr}{x})$ and it holds that $dist_{\Gv{\expr}{r}}(u,v) = \min\{dist_{\Gv{\expr}{x}}(u,v), \min_{u' \in V(\Gv{\expr}{x})}dist_{\Gv{\expr}{x}}(u,u') + c_x + \min_{v' \in V(\Gv{\expr}{x})}dist_{\Gv{\expr}{x}}(v',v) \}$.

Note that for a pair of vertices $u,v \in V(\Gv{\expr}{r})$ it either holds that $u,v \in V(\Gv{\expr}{x})$ for some $x \in \U_r$ that is not labeled according to an operation $\SubstH$ or it holds that $u,v \in V(\Gv{\expr}{x})$ for some $x \in \U_r$ that is labeled according to an operation $\SubstH$ and $u$ and $v$ are in different modules. 
Since all the considered values are known, we can compute $dist_{G_r}(u,v)$ for all $u,v \in V(G_r)$ in time $\Oh(n^2)$.

We are left to compute the values $c_x$ for each node $x \in \U_r$.
We do this in a top-down traversal of the nodes in $\U_r$. For the root $r \in \U_r$ it holds that $c_r =  \infty$.
Let $x \in \U_r$ be a node labeled according to an operation $\SubstH$ and let $v_1, \ldots, v_t \in \Tv{\expr}{r}$ be the children of $x$, i.e., $\Gv{\expr}{x} = \SubstH(\Gv{\expr}{v_1}, \ldots, \Gv{\expr}{v_t})$. Further, let $\Hw$ as defined in Definition~\ref{def:Homega} with vertex set $\{v_1, \ldots, v_t\}$ and vertex weights $\omega$. Inductively, we can assume that $c_x$ is know.
Now, $c_{v_i}$ corresponds to a cycle that either only uses vertices in $V(\Gv{\expr}{x})$ or it uses vertices in $V(\Gv{\expr}{r}) \setminus V(\Gv{\expr}{x})$. Thus, $c_{v_i}$ is the minimum of the two values: 
\begin{itemize}
\item $\min_{C \in \mathcal{C}_{v_i}}\omega(C) - \omega(v_i)$ with $\mathcal{C}_{v_i}$ is the set of all cycles from $v_i$ to $v_i$ in $\Hw$
\item $\min_{v_j \in V(\Hw)} dist_{\Hw}(v_i, v_j) - \omega(v_i) + c_{x} + \min_{v_j \in V(\Hw)}(v_j ,v_i) - \omega(v_i)$.
\end{itemize} 
Since all considered values are known, we can compute the values $c_x$ for each $x \in U_r$ by a top-down traversal in time $\Oh(\vert \U_r \vert) \subseteq \Oh(n)$.
\end{proof}

\begin{lemma}\label{lem:VWAPSPInc}
Let $G' = (V',E')$ be a directed graph, let $x \notin V(G')$ , let $E_x \subseteq \{ (x,u) \mid u \in V'\} \cup \{ (u,x) \mid u \in V'\}$, let $w \colon V' \cup \{x\} \rightarrow \mathbb{R}$, and let $f_I(G')$ be given. Then one can compute $f_I(\IncxEx(G'))$ in time $\Oh(n^2)$, with $n = \vert V' \vert$.
\end{lemma}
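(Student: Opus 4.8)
The plan is to view this as a single step of an incremental all-pairs shortest path computation. Since $f_I(G')$ already supplies the full distance matrix $dist_{G'}(u,v)$ for all $u,v\in V'$, together with the row/column minima and a feasible potential, inserting the one vertex $x$ with its incident edges $E_x$ should only require relaxing all pairs through $x$, which fits the $\Oh(n^2)$ budget. Throughout I would work with the edge-shifted weights $\shiftedw$ so that each vertex weight is paid exactly once on a path (Observation~\ref{obs:edgeshiftedpath}). Writing $G=\IncxEx(G')$, $N^-(x)=\{u\in V'\mid (u,x)\in E_x\}$ and $N^+(x)=\{u\in V'\mid (x,u)\in E_x\}$, I would first relay any negative cycle already reported by $f_I(G')$.

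First I would compute the distances involving $x$. As long as $G$ has no negative cycle, shortest paths are attained by simple paths and so use $x$ at most once; hence a shortest $u$-$x$ path ends with an edge $(a,x)$, $a\in N^-(x)$, and is otherwise a path of $G'$, and symmetrically for $x$-$v$ paths. This yields
\[
dist_G(u,x)=w(x)+\min_{a\in N^-(x)}dist_{G'}(u,a),\qquad
dist_G(x,v)=w(x)+\min_{b\in N^+(x)}dist_{G'}(b,v),
\]
each computable for all $u$ resp.\ $v$ in total time $\Oh(n^2)$ by one scan of the neighbourhood of $x$ per vertex.

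Next I would handle negative cycles created by $x$. Every cycle through $x$ has the form $x\to b\rightsquigarrow a\to x$ with $b\in N^+(x)$, $a\in N^-(x)$ and weight $w(x)+dist_{G'}(b,a)$, so $G$ has a new negative cycle iff $w(x)+\min_{b\in N^+(x),\,a\in N^-(x)}dist_{G'}(b,a)<0$, a test that is independent of the distance computation above and runs in $\Oh(n^2)$; if it triggers I would reconstruct a witness by greedily tracing a shortest $b^{*}$-$a^{*}$ path through the distance matrix of $G'$. Otherwise $G$ is negative-cycle free and I would fill in the remaining entries via
\[
dist_G(u,v)=\min\bigl\{\,dist_{G'}(u,v),\ dist_G(u,x)+dist_G(x,v)-w(x)\,\bigr\}\qquad(u,v\in V'),
\]
where the $-w(x)$ corrects the double counting of $x$ when concatenating a $u$-$x$ and an $x$-$v$ path; correctness follows by splitting a shortest $u$-$v$ path according to whether it uses $x$, and the cost is $\Oh(1)$ per pair.

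Finally I would assemble the rest of $f_I(G)$. The row minima $\min_v dist_G(u,v)$, the column minima $\min_v dist_G(v,u)$, and $msp(G)=\min_{u,v}dist_G(u,v)$ are read off the completed matrix in $\Oh(n^2)$. For the shortest-path feasible potential I would use that a super-source $s$ joined to every vertex by weight-zero edges satisfies $dist_{G,\shiftedw}(s,v)=\min_{u\in V(G)}dist_G(u,v)-w(v)$ (Observation~\ref{obs:edgeshiftedpath} together with Lemma~\ref{lem:sppotential}); since the column minima are already at hand, setting $\pi(v):=\min_{u\in V(G)}dist_G(u,v)-w(v)$ gives a shortest-path feasible potential in $\Oh(n)$. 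The main obstacle I anticipate is the careful weight bookkeeping---tracking exactly where each vertex weight is paid so that the $-w(x)$ correction and the potential identity come out exact---together with reconstructing an honest negative cycle from distance information alone; both, however, stay within the $\Oh(n^2)$ bound.
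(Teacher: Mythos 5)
Your proof is correct and shares the paper's central step: once the distances to and from $x$ are known, relax every pair $u,v\in V'$ through $x$ in $\Oh(1)$ per pair, then read off the row/column minima, $msp$, and the potential. Where you genuinely diverge is in how the auxiliary data is produced. The paper invokes Lemma~\ref{lem:NCDInc} to obtain $msp(\IncxEx(G'))$ and a feasible potential (via the potential-update routine of Iwata et al.), and then runs Dijkstra twice on the reduced costs to get $dist_G(x,v)$ and $dist_G(v,x)$ in $\Oh(m+n\log n)$ time; you bypass all of that and read everything off the distance matrix of $G'$: $dist_G(u,x)=w(x)+\min_a dist_{G'}(u,a)$ over in-neighbours $a$ of $x$ (and symmetrically for $dist_G(x,v)$), the negative-cycle test $w(x)+\min_{b,a}dist_{G'}(b,a)<0$, and the shortest-path feasible potential $\pi(v)=\min_u dist_G(u,v)-w(v)$ from the column minima. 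Your route is more elementary and self-contained (no Dijkstra, no reliance on the \NCD machinery), and it is more careful about the vertex-weight bookkeeping: your $-w(x)$ correction in the pairwise update is needed under the paper's convention that a path's weight counts both endpoints, whereas the paper's displayed update glosses over this (and contains a typo, $dist_G(x,u)$ for $dist_G(x,v)$). What the paper's route buys is reuse of already-established subroutines and a cheaper $\Oh(m+n\log n)$ computation of the distances involving $x$ --- immaterial here, since the $\Oh(n^2)$ pairwise update dominates either way. The one soft spot in your write-up is the witness extraction for a new negative cycle: tracing it needs the adjacency structure of $G'$ (which is available) and a little care with zero-weight cycles when following the matrix greedily, but this is routine and stays within the stated bound.
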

\begin{proof}
Let $G = \IncxEx(G')$.
By Lemma~\ref{lem:NCDInc}, we can compute the value $msp(G)$ of a minimum shortest path and a feasible potential for $G$ in time $\Oh(m + n \log n)$. Apply Dijkstra's algorithm twice to compute the values $dist_G(x,v)$ and $dist_G(v,x)$ for all $v \in V'$ in the same running time. For $u,v \in V'$, a shortest $u$-$v$ path in $G$ does either use the vertex $x$, or does not use the vertex $x$. Thus, we can update the shortest-path distance for each pair $u,v \in V'$ by $ dist_G(u,v) = \min \{dist_{G'}(u,v), dist_G(u,x) + dist_G(x,u)\}$, which can be done for each pair in constant time. Additionally, for each $v \in V(G)$, the values $\min_{u \in V(G)} dist(v,u)$ and $\min_{u \in V(G)} dist(u,v)$ can be looked up in linear time, yielding a total running time of $\Oh(n^2)$ to compute $f_I(G)$.
\end{proof}

\begin{lemma}\label{lem:VWAPSPSubstTd}
Let $H$ be an unweighted, directed $t$-vertex graph, $G_i$ be vertex-weighted, directed graphs, and $f_S(G_i)$ be given for each $G_i$. Then one can compute $f_S(\SubstH(G_1, \ldots, G_t))$ in time $\Oh(\td(H) \cdot t^2 + n)$ with $n = \vert V(\SubstH(G_1, \ldots, G_t)) \vert$.
\end{lemma}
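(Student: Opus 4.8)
The plan is to follow the proof of Lemma~\ref{lem:VWAPAPSubst} almost verbatim, changing only the one step in which vertex-weighted all-pairs shortest paths is solved on the pattern graph $\Hw$. There the pattern was small ($t\le h$), so one could afford a cubic (or $\Oh(t^{2.842})$) solver on $\Hw$; here $H$ may be arbitrarily large but has $\td(H)\le\ell$, so instead I would exploit the given tree-depth expression of $H$ to obtain all pairwise distances in $\Hw$ in time $\Oh(\td(H)\cdot t^2)$. This is precisely the relationship between Lemma~\ref{lem:NCDSubst} and Lemma~\ref{lem:NCDSubstTd}, carried out one level up in the $f_S$ hierarchy.

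First I would dispose of the negative-cycle case: if some $f_S(G_i)$ reports a negative cycle, propagate it. Otherwise I would form $\Hw$ as in Definition~\ref{def:Homega}, reading the weights $w(v_i)=msp(G_i)$ off the given $f_S(G_i)$ in $\Oh(t)$ time. Since the underlying graph of $\Hw$ is $H$, the supplied tree-depth expression of $H$ is also one for $\Hw$, of nesting depth $\td(H)\le\ell$. I would then push the increment routine of Lemma~\ref{lem:VWAPSPInc}, whose cost on a $t$-vertex graph is $\Oh(t^2)$, through the tree-depth running-time framework of Iwata et al.~\cite{IwataOO18} along this expression, exactly as in the proof of Lemma~\ref{lem:NCDSubstTd} but computing the VWAPSP function $f_I$ in place of the \NCD values. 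As there are at most $\td(H)\le\ell$ levels of $\IncxEx$ nesting and each contributes $\Oh(t^2)$, this produces $f_I(\Hw)$ in time $\Oh(\td(H)\cdot t^2)$; in particular it detects a negative cycle in $\Hw$ (equivalently in $\SubstH(G_1,\ldots,G_t)$, by Lemma~\ref{lem:NegCycleInH}), outputs a shortest-path feasible potential $\pi_H$ for $\Hw$, the value $msp(\Hw)=msp(\SubstH(G_1,\ldots,G_t))$, and all pairwise distances $dist_{\Hw}(v_i,v_j)$.

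From $f_I(\Hw)$ the remaining entries of $f_S(\SubstH(G_1,\ldots,G_t))$ follow as in Lemma~\ref{lem:NCDSubst} and Lemma~\ref{lem:VWAPAPSubst}, each in additional time $\Oh(n)$. The pairwise distances $dist_{\Hw}(v_i,v_j)$ are returned directly, as they are the last entry of $f_S$; the shortest-path feasible potential for $\SubstH(G_1,\ldots,G_t)$ is assembled as $\pi(v)=\pi_i(v)+\pi_H(v_i)$ for $v\in G_i$ from the potentials $\pi_i$ in $f_S(G_i)$; and the values $\min_{v}dist_G(u,v)$ and $\min_{v}dist_G(v,u)$ are computed through equation~(\ref{align:mindistfromu}), after precomputing $\min_{v_j}dist_{\Hw}(v_i,v_j)$ for all $i$ in $\Oh(t^2)$ time and then spending $\Oh(1)$ per vertex $u$ with the value $\min_{u'\in V(G_i)}dist(u,u')$ supplied by $f_S(G_i)$. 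The incoming direction is handled symmetrically on the edge-flipped graph.

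Summing, the computation on $\Hw$ costs $\Oh(\td(H)\cdot t^2)$ and the reconstruction costs $\Oh(t^2+n)=\Oh(n)$ since $t\le n$, giving the claimed $\Oh(\td(H)\cdot t^2+n)$. The only genuinely new ingredient over Lemma~\ref{lem:VWAPAPSubst} is routing the all-pairs computation on $\Hw$ through its tree-depth expression rather than a generic solver, so the step I would check most carefully is that carrying the full distance function $f_I$ (rather than merely a single-source feasible potential as in \NCD) through the tree-depth framework still costs only $\Oh(t^2)$ per increment level, and hence $\Oh(\td(H)\cdot t^2)$ overall; everything downstream is a direct reprise of the modular-width substitution argument.
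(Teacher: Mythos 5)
Your proposal follows the paper's proof essentially verbatim: the paper likewise observes that $\td(\Hw)=\td(H)$, computes $f_I(\Hw)$ via Lemma~\ref{lem:VWAPSPInc} and the tree-depth running-time framework (Corollary~\ref{cor:runningtimes}) in time $\Oh(\td(H)\cdot t^2)$, and then reconstructs the remaining entries of $f_S$ exactly as in Lemma~\ref{lem:VWAPAPSubst} and Section~\ref{sec:NegCycle}. The only quibble is your claim that $\Oh(t^2+n)=\Oh(n)$, which is false in general, but immaterial since the $t^2$ term is absorbed into $\Oh(\td(H)\cdot t^2)$.
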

\begin{proof}

Denote $G = \SubstH(G_1, \ldots, G_t)$. Consider the graph $H_\omega$ as defined in Definition~\ref{def:Homega} with $V(H_\omega) = \{v_1, \ldots, v_t\}$. Note that $\td(H_\omega)=\td(H)$, i.e., $H_\omega \in \Td{\td(H)}$, thus, by Lemma~\ref{lem:VWAPSPInc} and Corollary~\ref{cor:runningtimes}, one can compute $f_I(H_\omega)$, especially \VWAPSP, in $H_\omega$ in time $\Oh(\td(H) \cdot t^2)$. Since $f_S(G_i)$ is known, one can compute the values $\min_{v \in V(G)}dist_G(u,v)$ resp. $\min_{v \in V(G)}dist_G(v,u)$ as done in the proof of Lemma~\ref{lem:VWAPAPSubst} in time $\Oh(n)$. Moreover, a shortest-path feasible potential for $G$ and the value $msp(G)$ can also be computed as done in Section~\ref{sec:NegCycle} for \NCD in $\Oh(t^2)$.  
\end{proof}

\begin{proof}[Proof of Theorem~\ref{thm:VWAPSP}]
Let $\expr$ be the given expression after replacing each occurrence of an operation $\Jn$ or $\Union$ by (possible multiple) operations $\SubstH$ with a pattern graph $H$ of size two. We traverse the expression tree $\T{\expr}$ from bottom to top. The values $f_I(\bullet)$,$f_S(\bullet)$, $f_I(\circ)$, and $f_I(\circ)$ can be trivially computed in constant time. For a node $x\in \T{\expr}$ labeled according to an operation $\SubstH$, the algorithms $A_\mathrm{Sub}$ resp.\ $A_{SubTd}$ are as described in Lemma~\ref{lem:VWAPAPSubst} resp.\ Lemma~\ref{lem:VWAPSPSubstTd} and compute $f_S(\Tv{\expr}{x})$. For each node $x \in \T{\expr}$ labeled according to an operation $\IncxEx$, let $y \in \T{\expr}$ be the unique child of $x$. 
If $y$ is labeled according to an operation $\SubstH$, the algorithm $A_{\mathrm{Inc}}$ first computes $f_I(\Tv{\expr}{y})$ using Lemma~\ref{lem:fStofI}. Then, the algorithm computes $f_I(\Tv{\expr}{x})$ using Lemma~\ref{lem:VWAPSPInc}.
By Theorem~\ref{thm:runningtime}, we have proven the theorem.
\end{proof}

\section{Comparing the graph classes}\label{section:relations}

In this section we compare the graph classes defined in Section~\ref{section:heterogeneousstructure}.
It is well known that the graph classes $\Tdk$ and $\Mwh$ are incomparable: 
For any $k,h \in \N$ it holds that $K_{k+1} \in \Mw{0}$, but $K_{k+1} \notin \Tdk$, for $K_{k+1}$ being the clique of size $k+1$. Conversely, let $G$ be a subdivided star of degree $h$, i.e, the graph $K_{1,h}$ with a pendant vertex attached to each vertex of degree one, cf.\ Figure~\ref{Fig:ExampleGraph}. Then it holds that $G \in \Td{3}$, but $G \notin \Mwh$.

Naturally it holds that $\Mw{h} \subseteq \TdMw{0}{h}$ and $\Td{k} \subseteq \TdMw{k}{0}$ for any $k,h \geq 0$; in fact, it can be observed that $\Mwh=\TdMw{0}{h}$ and $\Td{k}\subsetneq\TdMw{k}{0}$.\footnote{Similar trivial relations are true for other classes, we do not state them explicitly.}
Conversely, the following lemma shows that for any $k,h\in\N$ there exist graphs that are neither in $\Tdk$ nor in $\Mwh$, but that are contained in $\TdMw{k'}{h'}$ even for constant $k'$ and $h'$. 

\begin{lemma}\label{lem:TDMWvsTDandMW}
For any $k, h \in \N$, there exists a graph $G$ such that $G \notin \Td{k}$ and $G \notin \Mwh$, but $G \in \TdMw{1}{0}$.
\end{lemma}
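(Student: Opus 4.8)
The plan is to exhibit a single explicit witness and verify the three requirements separately. I would take
\[
  G \;=\; K_{k+1} \disjointunion S_h,
\]
the disjoint union of the clique on $k+1$ vertices and the subdivided star $S_h$ of degree $h$ (center $z$, leaves $\ell_1,\dots,\ell_h$, and a pendant $p_i$ attached to each $\ell_i$), i.e.\ exactly the two graphs already used at the start of this section to separate $\Td{k}$ from $\Mw{h}$.

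To see $G \notin \Td{k}$ and $G \notin \Mw{h}$ I would use that both parameters are hereditary, i.e.\ do not increase when passing to an induced subgraph; equivalently, for a disconnected graph each equals the maximum over its connected components. Since $K_{k+1}$ is a component of $G$ with $\td(K_{k+1}) = k+1$, we get $\td(G) \ge k+1 > k$, hence $G \notin \Td{k}$. Since $S_h$ is a component of $G$ and the excerpt already records $S_h \notin \Mw{h}$ (the natural obstruction being that each cherry $\{\ell_i,p_i\}$ fails to be a module, as $z$ sees $\ell_i$ but not $p_i$), we get $\mw(G) \ge \mw(S_h) > h$, hence $G \notin \Mw{h}$.

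It remains to place $G$ in $\TdMw{1}{0}$, i.e.\ to give an algebraic expression over $\{\circ,\bullet,\Union,\Jn,\IncxEx\}$ (no substitutions are available for $h=0$) in which every $\IncxEx$ has nesting depth at most $1$. The clique $K_{k+1} = \Jn(\bullet,\dots,\bullet)$ is a cograph and uses no $\IncxEx$ at all. For $S_h$ I would first form the disjoint union of the $h$ edges $\{\ell_i,p_i\}$, each edge being $\Jn(\bullet,\bullet)$, which is again a cograph, and then apply a \emph{single} increment $\IncxEx$ adding the center $z$ with $E_z = \{\{z,\ell_1\},\dots,\{z,\ell_h\}\}$; this realizes $S_h$ with increment nesting depth exactly $1$. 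Finally $G = \Union(K_{k+1}, S_h)$ combines the two expressions under one $\Union$ node, which introduces no $\IncxEx$, so the overall increment nesting depth is still $1$ and $G \in \TdMw{1}{0}$.

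The only genuinely non-routine step is the last one: checking that the subdivided star — although it has tree-depth $3$ and thus ``wants'' three nested increments in a pure tree-depth expression — can instead be produced by a \emph{single} increment placed on top of the cograph consisting of $h$ disjoint edges, and that merging it with the high-tree-depth clique via $\Union$ keeps the increment nesting depth at $1$. The clique then supplies the tree-depth purely through its cograph part, while the star supplies the large modular-width, so no operation needs nesting depth beyond one.
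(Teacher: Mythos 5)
Your proof is correct in substance but takes a genuinely different route from the paper. The paper does \emph{not} use a disjoint union: it builds a single \emph{connected, prime} witness on $2p+1$ vertices with $p=\max\{k,h,2\}$, namely $p$ non-adjacent pairs that are pairwise fully joined, plus an apex $x$ adjacent to one vertex of each pair. Primality forces $\mw(G)=2p+1>h$, the clique $\{v_{1,1},\dots,v_{p,1}\}$ forces $\td(G)>k$, and deleting $x$ leaves a cograph, so a single increment suffices. Your union $K_{k+1}\disjointunion S_h$ is arguably more economical for this lemma, and your expression $\Union\bigl(\Jn(\bullet,\dots,\bullet),\,\Inc_{z,E_z}(\Union(K_2,\dots,K_2))\bigr)$ correctly certifies membership in $\TdMw{1}{0}$. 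Three caveats, in decreasing order of importance. First, your construction fails for $h\in\{0,1\}$: $S_0$ and $S_1=P_3$ are cographs, so $G=K_{k+1}\disjointunion S_h$ is itself a cograph and lies in $\Mw{0}\subseteq\Mwh$. You need to subdivide a star of degree $\max(h,2)$ (this is precisely why the paper sets $p=\max\{k,h,2\}$), since $S_{h'}$ is prime only for $h'\geq 2$. Second, you invoke that $\td$ and $\mw$ are each the maximum over connected components (equivalently, monotone under induced subgraphs); this is true and follows from the expression-based definitions by restricting each operation to the surviving vertices, but the paper never states it, so a sentence of justification is warranted. Third, a structural remark on what each approach buys: the paper's connected prime witness is reused verbatim in Lemma~\ref{lem:TDMWsmallMTDlarge} to show $G\notin\Mtdl$, which crucially needs primality \emph{together with} large tree-depth in one indecomposable piece; your disjoint union cannot serve there, since $K_{k+1}\in\Mtd{0}$ and $S_h\in\Mtd{3}$ give $K_{k+1}\disjointunion S_h\in\Mtd{3}$. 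For the present lemma alone, however, your argument (with the $\max(h,2)$ fix) is sound.
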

\begin{proof}
 Let $p = \max\{k, h, 2\}$. We construct a graph $G$ with $2p+1$ vertices as follows: Consider $p$ pairs of non-adjacent vertices $v_{i,1}$ and $v_{i,2}$ for $i \in [p]$ and one additional vertex called $x$, i.e.,  
 $V = \{v_{1,1}, v_{1,2}, v_{2,1}, v_{2,2}, \ldots, v_{p,1}, v_{p,2}, x\}$.
 There is an edge between any $v_{i,r}$ and $v_{j,s}$ if and only if $i \neq j$ for $r,s \in \{1,2\}$.
 The vertex $x$ is connected to each $v_{i,1}$ for $i \in [p]$.
 See also Figure~\ref{Fig:ExampleGraph}.
 Now, in this constructed graph $G$ the vertex set $\{v_{i,1} \mid i \in [p] \}$ (as the set $\{v_{i,2} \mid i \in [p] \}$) is a clique of size $p$, thus, $\td(G) \geq p$. On the other side, the graph $G$ does not admit any non-trivial module, since it is easy to see that every minimal extension of a two-vertex set results in the whole vertex set $V(G)$. Thus, $\mw(G) = \vert V(G) \vert = 2p+1$.
 
 At the same time, $G \setminus \{x\} \in \TdMw{0}{0}$ since $G \setminus \{x\}$ is a cograph, implying that $G \in \TdMw{1}{0}$.
\end{proof}

Clearly, also the notion of modular tree-depth generalizes both modular-width and tree-depth:
For a graph $G$, the modular tree-depth matches the tree-depth for the special case of $G$ being prime and since it holds that $\td(G) \leq \vert V(G) \vert$ for any graph $G$, it also holds that $\mtd(G) \leq \mw(G)$.
The next lemma shows that in general neither the modular-width nor the tree-depth can be bounded by a function depending on the modular-tree-depth of a graph. 

\begin{lemma}\label{lem:MtdMoreGeneralAsTdMw}
For any $k,h\in\N$, there exists a graph $G$ such that $G \notin \Tdk$ and $G \notin \Mwh$, but $G \in \Mtd{3}$.
\end{lemma}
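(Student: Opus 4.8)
The plan is to exhibit a single family built by substituting cliques into a subdivided star; the subdivided star supplies large modular-width and has tree-depth exactly $3$, while the cliques blow up the tree-depth, and the whole object still admits a depth-$3$ modular tree-depth expression. Concretely, given $k,h\in\N$, I would set $c=k+1$ and $p=\max\{h+1,3\}$, let $S_p$ denote the subdivided star with $p$ arms (the graph $K_{1,p}$ with a pendant vertex attached to each degree-one vertex, so $S_p$ has a center $a$, middle vertices $b_1,\dots,b_p$, and leaves $c_1,\dots,c_p$, with edges $ab_i$ and $b_ic_i$), and define $G:=\Subst_{S_p}(K_c,\ldots,K_c)$, i.e.\ the result of substituting a clique $K_c$ into every vertex of $S_p$.

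First I would verify $G\in\Mtd{3}$. Each $K_c$ is a cograph and hence has an expression over $\{\bullet,\Jn\}$, and $\td(S_p)=3$ (as already noted for subdivided stars in the introduction, and as is immediate from the recursive deletion definition: deleting $a$ leaves a disjoint union of edges $\{b_i,c_i\}$, which has tree-depth $2$). Composing these with the operation $\SubstH$ for $H=S_p$, which is permitted in $\Mtd{3}$ precisely because $\td(S_p)\le 3$, yields an algebraic expression for $G$ over $\{\bullet,\Union,\Jn\}\cup\{\SubstH\mid\td(H)\le 3\}$, so $\mtd(G)\le 3$.

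Next, the tree-depth is forced up by the substitution itself: the arm edge $ab_1$ of $S_p$ becomes, after substitution, the join $K_c\joinop K_c=K_{2c}$, so $G$ contains $K_{2c}$ as a subgraph and therefore $\td(G)\ge\td(K_{2c})=2c=2(k+1)>k$, giving $G\notin\Td{k}$.

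The step requiring the most care is the modular-width lower bound, and this is where I expect the main obstacle to lie: I must rule out that the substituted cliques merge into coarser modules that would shrink the prime quotient. The clean way to handle this is via primality of the pattern. I would first argue that $S_p$ is \emph{prime} for $p\ge 3$, i.e.\ its only modules are singletons and $V(S_p)$: any module containing two leaves $c_i,c_j$ is forced to contain $b_i,b_j,a$ (since each of these distinguishes $c_i$ from $c_j$ or $b_i$ from $c_i$), and then a further leaf/middle pair $b_k,c_k$ ($k\neq i,j$) is dragged in via the distinguishing adjacencies to $a$ and $c_i$, so the module grows to all of $V(S_p)$; the remaining cases (a module containing $a$, or two middle vertices) collapse analogously. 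This is exactly the reason subdivided stars have unbounded modular-width. Since $S_p$ is prime, the standard description of modular decomposition under substitution shows that the strong modules of $G=\Subst_{S_p}(K_c,\ldots,K_c)$ are exactly the substituted cliques together with $V(G)$, and the associated prime quotient is $S_p$ itself; hence $\mw(G)=\max\{\,|V(S_p)|,\ \mw(K_c)\,\}=|V(S_p)|=2p+1\ge 2(h+1)+1>h$, so $G\notin\Mwh$. Combining the three properties yields $G\notin\Td{k}$, $G\notin\Mwh$, and $G\in\Mtd{3}$, completing the proof.
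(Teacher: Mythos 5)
Your proposal is correct and follows essentially the same route as the paper: substitute cliques $K_{k+1}$ into a subdivided star $H$ on more than $h$ vertices, note $\td(H)=3$ and $K_{k+1}\in\Mtd{0}$ to get $G\in\Mtd{3}$, and use the large clique respectively the large prime induced subdivided star to exclude $\Tdk$ and $\Mwh$. Your treatment of the modular-width lower bound via primality of $S_p$ and the quotient structure under substitution is merely a more detailed version of the paper's one-line appeal to the induced subdivided star.
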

\begin{proof}
Let $p \in \mathbb{N}$ such that $2 p + 1 > h$. We construct a pattern graph $H$ with $2 p + 1$ vertices as follows: Let $H$ be a subdivided star of degree $p$ with $2p + 1$ vertices, cf.\ Figure~\ref{Fig:ExampleGraph}. Denote $V(H) = \{v_1, v_2, \ldots, v_{2p+1}\}$, and let $G = \SubstH(K_{k+1},\ldots, K_{k+1})$.
Now, $G \notin \Td{k}$ since $G$ contains an clique of size $k+1$ and $G \notin \Mw{h}$, since $G$ contains an induced subdivided star with more than $h$ vertices.

At the same time, since $\td(H) = 3$ and $K_{k+1} \in \Mtd{0}$, it holds that $G\in\Mtd{3}$.
\end{proof}

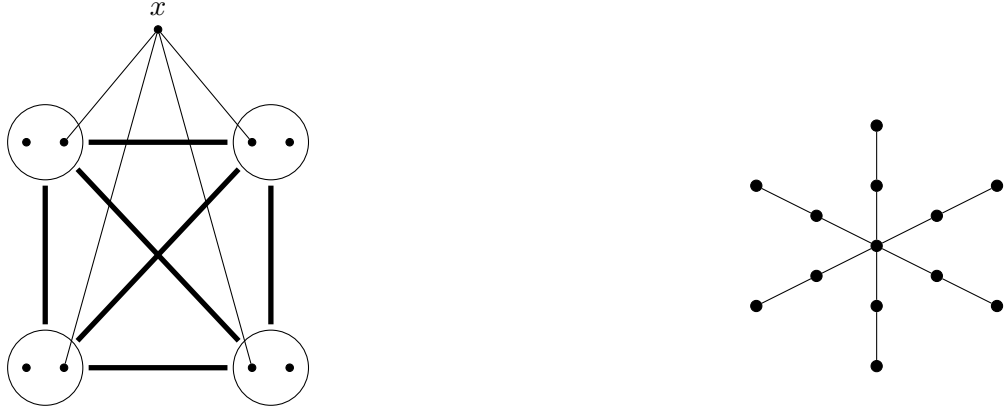
\begin{figure}[t]
\centering
\begin{tikzpicture}
   \def \gap {2}
   \def \distance {3}
   \node (dummy) at (-2,0) {};
   \node[draw, circle, minimum size = 1cm] (dl) at (0,0) {};
   \node[draw, circle, minimum size = 1cm] (ul) at (0,\distance) {};
   \node[draw, circle, minimum size = 1cm] (ur) at (\distance,\distance) {};
   \node[draw, circle, minimum size = 1cm] (dr) at (\distance,0) {};
   \node[draw, circle,fill, inner sep = 1pt] (dlr) at (0.25,0) {};
   \node[draw, circle,fill, inner sep = 1pt] (dll) at (-0.25,0) {};  
   \node[draw, circle,fill, inner sep = 1pt] (drl) at (\distance - 0.25,0) {};
   \node[draw, circle,fill, inner sep = 1pt] (drr) at (\distance + 0.25,0) {}; 
   \node[draw, circle,fill, inner sep = 1pt] (url) at (\distance - 0.25,\distance) {};
   \node[draw, circle,fill, inner sep = 1pt] (urr) at (\distance + 0.25,\distance) {}; 
    \node[draw, circle,fill, inner sep = 1pt] (ull) at (- 0.25,\distance) {};
    \node[draw, circle,fill, inner sep = 1pt] (ulr) at ( 0.25,\distance) {};
     
    \node[draw, circle,fill, inner sep = 1pt, label = {$x$}] (x) at (\distance/2,\distance + \distance/2) {}; 
    \draw[>=latex,,line width=2.000] ([yshift=\gap] dl.north) to ([yshift=-\gap] ul.south);
    \draw[>=latex,line width=2.000] ([xshift=\gap] ul.east) to ([xshift=-\gap] ur.west);
    \draw[>=latex,line width=2.000] ([yshift=-\gap] ur.south) to ([yshift=\gap] dr.north);
    \draw[>=latex,line width=2.000] ([xshift=-\gap] dr.west) to ([xshift=\gap] dl.east);
    \draw[>=latex,line width=2.000] ([xshift=-\gap] dr.north west) to ([xshift=\gap] ul.south east);
    \draw[>=latex,line width=2.000] ([xshift=-\gap] ur.south west) to ([xshift=\gap] dl.north east);  
    \draw[>=latex] (x) to (url);
    \draw[>=latex] (x) to (ulr);
    \draw[>=latex] (x) to (drl);
    \draw[>=latex] (x) to (dlr);
\end{tikzpicture}
\hfill
\begin{tikzpicture}[scale = 0.2]
	    \node  (dummy) at (16,-10) {};
		\node [draw, circle,fill, inner sep = 1.5pt] (0) at (0, 0) {};
		\node [draw, circle,fill, inner sep = 1.5pt] (1) at (-4, 2) {};
		\node [draw, circle,fill, inner sep = 1.5pt] (2) at (-8, 4) {};
		\node [draw, circle,fill, inner sep = 1.5pt] (3) at (0, 4) {};
		\node [draw, circle,fill, inner sep = 1.5pt] (4) at (0, 8) {};
		\node [draw, circle,fill, inner sep = 1.5pt] (5) at (0, -4) {};
		\node [draw, circle,fill, inner sep = 1.5pt] (6) at (0, -8) {};
		\node [draw, circle,fill, inner sep = 1.5pt] (7) at (-4, -2) {};
		\node [draw, circle,fill, inner sep = 1.5pt] (8) at (-8, -4) {};
		\node [draw, circle,fill, inner sep = 1.5pt] (9) at (4, 2) {};
		\node [draw, circle,fill, inner sep = 1.5pt] (10) at (8, 4) {};
		\node [draw, circle,fill, inner sep = 1.5pt] (11) at (4, -2) {};
		\node [draw, circle,fill, inner sep = 1.5pt] (12) at (8, -4) {};

		\draw[] (8) to (7);
		\draw[] (7) to (0);
		\draw[] (5) to (0);
		\draw[] (5) to (6);
		\draw[] (0) to (3);
		\draw[] (3) to (4);
		\draw[] (1) to (0);
		\draw[] (1) to (2);
		\draw[] (0) to (9);
		\draw[] (9) to (10);
		\draw[] (0) to (11);
		\draw[] (11) to (12);	
\end{tikzpicture}
\caption{Left: The graph constructed in the proof of Lemma~\ref{lem:TDMWvsTDandMW} for $p = 4$. The bold edges indicate a full join. Right: A subdivided star of degree six.}
\label{Fig:ExampleGraph}
\end{figure}

As a consequence of Lemma~\ref{lem:TDMWvsTDandMW} and Lemma~\ref{lem:MtdMoreGeneralAsTdMw} it follows that also the classes $\MwhMtdl$ and $\TdkMtdl$ are strictly more general then the classes $\Mwh$ and $\Tdk$.

Next, we will compare the class $\Mtdl$ from Definition~\ref{definition:modulartreedepth} with the class $\TdkMwh$ from Definition~\ref{definition:tdkmwh}. We will see that they are indeed incomparable. A simple extension of the proof of Lemma~\ref{lem:TDMWvsTDandMW} yields the following lemma.

\begin{lemma}\label{lem:TDMWsmallMTDlarge}
 For any $\ell\in\N$, there exists a graph $G$ such that $G \notin \Mtdl$, but $G \in \TdMw{1}{0}$
\end{lemma}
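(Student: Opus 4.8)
The plan is to reuse the graph family from the proof of Lemma~\ref{lem:TDMWvsTDandMW}, only choosing its size parameter relative to $\ell$ rather than to $k$ and $h$. Concretely, I would set $p=\max\{\ell+1,2\}$ and let $G$ be the graph on $2p+1$ vertices built from $p$ non-adjacent pairs $v_{i,1},v_{i,2}$ (with $v_{i,r}$ adjacent to $v_{j,s}$ iff $i\neq j$) together with a vertex $x$ adjacent to each $v_{i,1}$. Exactly as in Lemma~\ref{lem:TDMWvsTDandMW}, deleting $x$ leaves a complete multipartite graph (a join of $p$ size-two independent sets), which is a cograph and hence lies in $\TdMw{0}{0}$; attaching $x$ via a single $\IncxEx$ operation (of nesting depth $1$) then witnesses $G\in\TdMw{1}{0}$. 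Since $\{v_{i,1}\mid i\in[p]\}$ is a clique of size $p$, we moreover have $\td(G)\ge p>\ell$.

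The new ingredient is to show $G\notin\Mtdl$, for which I would prove the general fact that $\mtd(G')=\td(G')$ whenever $G'$ is prime, connected, and co-connected. The graph $G$ above is prime by the same argument as in Lemma~\ref{lem:TDMWvsTDandMW} (every minimal module containing two vertices already equals $V(G)$), it is connected (each $v_{i,1}$ reaches $x$ and joins all other pairs), and it is co-connected, since in the complement the only edges among the $v$-vertices are the pair edges $v_{i,1}v_{i,2}$ (a perfect matching) while $x$ is adjacent to every $v_{i,2}$. Assuming, as is standard, that a modular-tree-depth expression uses no single-vertex substitution patterns and no empty substitution arguments, I would inspect the top operation of any algebraic expression for $G$ over $\{\bullet,\Union,\Jn\}\cup\{\SubstH\mid\td(H)\le\ell\}$: it cannot be $\bullet$ (as $|V(G)|\ge 5$), it cannot be $\Union$ (which would disconnect $G$), and it cannot be $\Jn$ (which would disconnect $\overline{G}$). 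Hence it is some $\SubstH$ with $t=|V(H)|\ge 2$, whose arguments give a partition of $V(G)$ into $t\ge 2$ nonempty modules $V(G_1),\dots,V(G_t)$; by primality each such proper module is a singleton, so $t=|V(G)|$ and $H\cong G$, forcing $\td(H)=\td(G)>\ell$.

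This contradicts the restriction $\td(H)\le\ell$ on the allowed substitution operations, so no valid expression exists and $\mtd(G)>\ell$, i.e.\ $G\notin\Mtdl$, while $G\in\TdMw{1}{0}$, as required. I expect the main obstacle to be the clean formalization of this top-operation case analysis — in particular the primality-based reduction forcing the substitution pattern to be $G$ itself — together with the routine but essential verifications that $G$ is both connected and co-connected, which is precisely what rules out $\Union$ and $\Jn$ at the top and thereby pins $\mtd(G)$ down to $\td(G)$.
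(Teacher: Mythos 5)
Your proposal is correct and follows essentially the same route as the paper: the paper likewise reuses the graph from Lemma~\ref{lem:TDMWvsTDandMW} (with $p=\ell$), exhibits $G\in\TdMw{1}{0}$ the same way, and concludes $\mtd(G)\geq p$ from primality together with $\td(G)\geq p$. Your write-up merely makes explicit the top-operation case analysis (connectedness rules out $\Union$, co-connectedness rules out $\Jn$, primality forces $H\cong G$) that the paper leaves implicit, and your choice $p=\max\{\ell+1,2\}$ safely sidesteps an off-by-one that the paper's $p=\ell$ only avoids because $\{x\}\cup\{v_{i,1}\mid i\in[p]\}$ is in fact a clique of size $p+1$.
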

\begin{proof}
 Let $G$ be the graph constructed in the proof of Lemma~\ref{lem:TDMWvsTDandMW} with $p = \ell$, cf.\ left side of Figure~\ref{Fig:ExampleGraph} left. As mentioned in the proof of Lemma~\ref{lem:TDMWvsTDandMW},
 it holds that $G \in \TdMw{1}{0}$. However, the constructed graph $G$ is prime and $\td(G) \geq p$. Thus, $\mtd(G) \geq p$. 
\end{proof}

Conversely, for each $k,h\in\N$, already the graph class $\Mtd{3}$ contains graphs that are not in $\TdkMwh$.

\begin{lemma}\label{lem:MTDandTDMWincomparable}
For any $k, h \in\N$, there exists a graph $G$ such that $G \notin \TdMw{k}{h}$, but $G \in \Mtd{3}$.
\end{lemma}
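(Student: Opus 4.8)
The plan is to reuse the construction behind Lemma~\ref{lem:MtdMoreGeneralAsTdMw}: fix $p$ with $2p+1 > h$ and $p \ge 2$, let $H$ be the subdivided star of degree $p$ (so $\td(H)=3$ and $|H| = 2p+1$), and set $G = \SubstH(K_{k+1},\ldots,K_{k+1})$, substituting a clique $K_{k+1}$ into \emph{every} vertex of $H$. Membership $G \in \Mtd{3}$ is immediate, since $\td(H)=3$ and each $K_{k+1}\in\Mtd{0}$. The whole difficulty is the lower bound $G \notin \TdMw{k}{h}$, which I would establish by contradiction against an arbitrary expression.

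The key device is a compressibility invariant. Call a graph \emph{uncompressible} if it has no partition of its vertex set into at least two modules whose quotient graph has modular-width at most $h$. First I would prove a one-step structural lemma: if an uncompressible graph $G'$ with $|G'|\ge 2$ lies in $\TdMw{k}{h}$, then the root of any expression for it must be an $\IncxEx$ operation, and hence $G'-x \in \TdMw{k-1}{h}$ for the introduced vertex $x$. Indeed, a root labelled $\Union$, $\Jn$, or $\SubstH$ with $|H|\le h$ would exhibit precisely such a forbidden partition (the children are modules, and the quotient is edgeless, complete, respectively equal to $H$, each of modular-width at most $h$), contradicting uncompressibility; a nullary root is excluded by $|G'|\ge 2$. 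Since $\TdMw{0}{h}=\Mwh$ by Proposition~\ref{proposition:tdkmwh:smallvalues}, and any uncompressible graph on at least two vertices fails to be in $\Mwh$ (its own modular-width expression would supply the forbidden partition), iterating this lemma $k$ times forces $G$ down to a graph $G-X$ with $|X|\le k$ that simultaneously lies in $\Mwh$ and is uncompressible, which is the desired contradiction.

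It remains to supply the combinatorial heart: \emph{every} graph obtained from $G$ by deleting at most $k$ vertices is uncompressible. Here I would use that each of the $2p+1$ substituted cliques has $k+1$ vertices, so deleting at most $k$ vertices leaves every cell nonempty and preserves all adjacencies between cells; thus $G-X$ is again a subdivided star of (smaller, nonempty) cliques whose modular quotient is the prime graph $H$. Because $H$ is prime, the only modules of $G-X$ are subsets of a single cell or the whole vertex set, so any nontrivial modular partition refines the cell partition and its quotient is $H$ with each node blown up into a clique, a graph of modular-width $|H| = 2p+1 > h$. Hence no nontrivial partition attains quotient modular-width at most $h$, i.e.\ $G-X$ is uncompressible.

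The main obstacle is getting this last invariant exactly right: one must guarantee that the cliques are large enough ($|K_{k+1}| = k+1 > k$) that no deletion of at most $k$ vertices can empty a cell or collapse the prime quotient $H$, because it is precisely the persistence of the large prime quotient across all $k$ deletions that prevents any $\TdMw{k}{h}$ expression from ever reaching its modular-width base case. Care is also needed to confirm that the subdivided star is prime for $p \ge 2$ and that deleting vertices inside the clique cells does not accidentally merge distinct cells into a new nontrivial module.
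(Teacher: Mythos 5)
Your proposal is correct and follows essentially the same route as the paper's proof: the same graph $G=\SubstH(K_{k+1},\ldots,K_{k+1})$ for a subdivided star $H$ on $2p+1>h$ vertices, the same key fact that every nontrivial modular partition of $G$ --- and of $G$ minus any set of at most $k$ vertices, since each cell has $k+1$ vertices and the quotient $H$ is prime --- has at least $2p+1>h$ parts, and the same peeling of the at most $k$ nested $\IncxEx$ operations. Your \emph{uncompressibility} invariant simply makes explicit the induction that the paper's (much terser) argument leaves implicit.
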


\begin{proof}
Let $p \in \mathbb{N}$ such that $2 p + 1 > h$. We construct $G$ as done in Lemma~\ref{lem:MtdMoreGeneralAsTdMw}, i.e., let $H$ be a subdivided star with $2 p + 1$ vertices with $V(H) = \{v_1, \ldots, v_{2p+1}\}$, and let $G = \SubstH(K_{k+1}, \ldots, K_{k+1})$.  Denote by $M_i$, for $i \in [2p+1]$, the $2p+1$ many modules in $G$. It was already shown that $G \in \Mtd{3}$.

We claim that $G \notin \TdkMwh$: 
Due to the structure of $G$, for any two vertices $u \in M_i$, $v \in M_j$, with $i \neq j$, it holds that there is no module in $G$ containing $u$ and $v$. Thus, any modular partition of $G$ consists of at least $2p+1$ many modules, which implies that $G$ cannot be in $\TdMw{0}{j}$ for $j \leq h < 2p+1$. 
Moreover, this statement holds even after deleting any $k$ vertices in $G$. Thus, $G$ cannot be in $\TdMw{i}{j}$ for any $i \leq k$ and $j \leq h$.
\end{proof}

Although the modular tree-depth is always upper bounded by the modular-width, it is possible that these two parameters only differ by a constant factor. Thus, depending on the application and the input, it may be beneficial to consider the modular-width instead of the modular tree-depth, since its (asymptotic) impact on the running time may be much lower.

\begin{lemma}\label{lem:MtdlToMw2l}
For any $\ell \in\N_{\geq 2}$, there exists a graph $G$ such that $G \notin \Mtdl$, but $G \in \Mw{2\ell}$.
\end{lemma}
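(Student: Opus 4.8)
The plan is to reduce the statement to the existence of a single prime graph with a prescribed tree-depth, and then to exhibit that graph explicitly. Recall that a graph is \emph{prime} if it has no nontrivial module; for at least two vertices this forces it to be both connected and co-connected. I would first record two facts. For any graph $G$ on $n$ vertices one trivially has $G \in \Mw{n}$, since $G = \Subst_G(\bullet,\ldots,\bullet)$ uses only a pattern of size $n$. Conversely, if $G$ is prime then any algebraic expression realizing $G$ must carry a substitution $\SubstH$ with $H \cong G$ at its root: the root cannot be a $\Union$ (that yields a disconnected graph) nor a $\Jn$ (that yields a join), and since $G$ admits only the trivial modular partition, the only possible pattern is $G$ itself. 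In the setting of Definition~\ref{definition:modulartreedepth} this root pattern must satisfy $\td(H)\le\ell$, so $\mtd(G)\ge\td(G)$; this is exactly the reasoning already used in the proof of Lemma~\ref{lem:TDMWsmallMTDlarge}. Hence it suffices to construct, for every $\ell\ge 2$, a prime graph on $2\ell$ vertices whose tree-depth is at least $\ell+1$.

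For the construction I would take the \emph{sun} $S_\ell$: a clique $K_\ell$ on vertices $a_1,\ldots,a_\ell$ together with a private pendant $b_i$ attached to each $a_i$, so that $N(b_i)=\{a_i\}$. This graph has exactly $2\ell$ vertices, which gives $S_\ell\in\Mw{2\ell}$ immediately by the first fact above. It then remains to show that $S_\ell$ is prime and that $\td(S_\ell)\ge\ell+1$; together with the first paragraph these yield $\mtd(S_\ell)\ge\ell+1>\ell$, i.e.\ $S_\ell\notin\Mtdl$, which completes the proof.

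Primality is a short module-chase. The pendants have pairwise distinct neighborhoods, so no two vertices are twins. One then checks that any module containing two clique vertices $a_i,a_j$ is forced to swallow all of $V(S_\ell)$ (the pendant $b_i$ distinguishes $a_i$ from $a_j$ and must be included, after which each remaining clique vertex and pendant is dragged in), while any module meeting the $b_i$'s is forced to contain the corresponding $a_i$'s and hence reduces to the previous case. The tree-depth bound is the only genuinely non-obvious point: $S_\ell$ contains only $K_\ell$, so the naive clique bound gives merely $\td\ge\ell$, and it is the pendants that raise this by one. I would prove $\td(S_\ell)\ge\ell+1$ by induction via the recursive characterization $\td(G)=1+\min_{v}\td(G-v)$ for connected $G$: for every vertex $v$, deleting $v$ from $S_\ell$ leaves a graph still containing $S_{\ell-1}$ as a subgraph (deleting a clique vertex $a_i$ isolates $b_i$ but preserves $S_{\ell-1}$ on the other indices, and deleting a pendant $b_i$ preserves $S_{\ell-1}$ on the remaining indices). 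By monotonicity of tree-depth under subgraphs and the inductive hypothesis $\td(S_{\ell-1})\ge\ell$, every single-vertex deletion leaves tree-depth at least $\ell$, whence $\td(S_\ell)\ge\ell+1$; the base cases $S_1=K_2$ and $S_2=P_4$ are immediate.

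The main obstacle is thus the tree-depth lower bound rather than the (routine) primality verification: the key insight is recognizing that the pendants push the tree-depth strictly above the clique number, which is precisely what allows tree-depth $\ell+1$ to be reached on only $2\ell$ vertices while keeping the graph prime, and confirming this cleanly through the deletion induction.
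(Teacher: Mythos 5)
Your proposal is correct and uses essentially the same construction as the paper: the clique $K_\ell$ with a pendant attached to each clique vertex, which is prime on $2\ell$ vertices (hence trivially in $\Mw{2\ell}$) and has modular tree-depth equal to its tree-depth. Your deletion induction showing $\td(S_\ell)\ge\ell+1$ is in fact a useful supplement, since the paper justifies $\td(G)>\ell$ only by the presence of $K_\ell$, which by itself yields merely $\td(G)\ge\ell$; the pendants are needed for the strict inequality, exactly as you observe.
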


\begin{proof}
Consider a graph $G$ that is a clique $K_{\ell}$ of size $\ell$ with a pendant vertex for each vertex in $K_\ell$. Now, there is no non-trivial module in $G$ and since $G$ contains a clique of size $\ell$ it holds that $\td(G) > \ell$, thus, $G \notin \Mtdl$. However, since $\vert V(G) \vert = 2\ell$ it trivially holds that $\mw(G) = 2\ell$.
\end{proof}

Next, we show that the classes $\TdkMwh$ and $\MwhMtdl$ are incomparable.

\begin{lemma}
For any $h, \ell \in\N$, there exists a graph $G$ such that $G \notin \MwhMtdl$, but $G \in \TdMw{1}{0}$.
\end{lemma}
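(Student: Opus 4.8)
The plan is to reuse the graph $G$ constructed in the proof of Lemma~\ref{lem:TDMWvsTDandMW}, but with its parameter $p$ chosen large relative to both $h$ and $\ell$. Concretely, I would set $p=\max\{h,\ell,2\}+1$ and let $G$ be the resulting graph on $2p+1$ vertices. As established in that proof, $G$ is prime (it has no non-trivial module), satisfies $\td(G)\geq p$, and lies in $\TdMw{1}{0}$ because $G-x$ is a cograph and the single vertex $x$ is attached by one $\IncxEx$ operation of nesting depth one. By the choice of $p$ we additionally obtain $\td(G)\geq p>\ell$ and $\vert V(G)\vert=2p+1>h$. Hence $G\in\TdMw{1}{0}$ is immediate, and it remains only to show $G\notin\MwhMtdl$.

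For the latter I would argue about the root of any hypothetical algebraic expression $\expr$ over the operations of $\MwhMtdl$ (see Definition~\ref{definition:furtherclasses}) with $val(\expr)\cong G$. Since $\MwhMtdl$ has no $\circ$ operation, every subexpression evaluates to a non-empty graph. As $\vert V(G)\vert\geq 3$, the root is not $\bullet$, so it is one of $\Union_t$, $\Jn_t$, or $\SubstH$ with $t\geq 2$ arguments; in each case the $t$ vertex sets of the arguments form a modular partition of $G$. Because $G$ is prime, its only modular partition into at least two parts is the partition into singletons, forcing $t=\vert V(G)\vert$ with every argument a single vertex. A root $\Union_t$ would then make $G$ edgeless and a root $\Jn_t$ would make $G$ complete, both false, so the root must be $\SubstH$ with quotient $H\cong G$.

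Finally, a $\SubstH$ operation is admitted in $\MwhMtdl$ only if $\vert H\vert\leq h$ or $\td(H)\leq\ell$. Since $H\cong G$ and tree-depth is a graph isomorphism invariant, this would force $\vert V(G)\vert\leq h$ or $\td(G)\leq\ell$, contradicting $\vert V(G)\vert=2p+1>h$ and $\td(G)\geq p>\ell$. Therefore no valid $\MwhMtdl$-expression for $G$ exists and $G\notin\MwhMtdl$, which together with $G\in\TdMw{1}{0}$ proves the lemma.

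I expect the main obstacle to be the middle step: justifying cleanly, via the primality of $G$, that the top-level operation of any $\MwhMtdl$-expression is forced to be a substitution into a pattern isomorphic to all of $G$. This rests on the facts that the arguments of $\Union$, $\Jn$, and $\SubstH$ are always modules of the resulting graph and that a prime graph admits no proper non-singleton module; the absence of $\circ$ in $\MwhMtdl$ removes any empty-argument bookkeeping, and the small-order cases are ruled out by $\vert V(G)\vert\geq 5$, so these remain routine.
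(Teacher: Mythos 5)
Your proposal is correct and follows essentially the same route as the paper: both take the graph of Lemma~\ref{lem:TDMWvsTDandMW} (a prime graph containing a large clique that becomes a cograph after deleting $x$) and both derive non-membership from its primality. The substantive difference is your parameter choice $p=\max\{h,\ell,2\}+1$ together with the explicit analysis of the root operation. The paper sets $p=\ell$ and argues only that $G\notin\Mtdl$, which does not by itself exclude the operations $\SubstH$ with $|H|\leq h$ that $\MwhMtdl$ also admits; indeed, for $h>2\ell+1$ that choice of graph lies in $\Mwh\subseteq\MwhMtdl$, so the paper's argument as written is incomplete. Your version, which guarantees $|V(G)|=2p+1>h$ and $\td(G)\geq p>\ell$ simultaneously and then rules out every admissible root ($\bullet$, $\Union$, $\Jn$, and both kinds of $\SubstH$) via primality, is the complete argument. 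The step you flag as the main obstacle --- that primality forces the top-level substitution pattern to be isomorphic to all of $G$ --- is exactly the fact the paper also relies on, and the primality of $G$ itself is established in the proof of Lemma~\ref{lem:TDMWvsTDandMW}.
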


\begin{proof}
Consider the graph constructed in the proof of Lemma~\ref{lem:TDMWvsTDandMW} for $p = \ell$, cf.\ the left side of Figure~\ref{Fig:ExampleGraph}.
Since $G$ does not admit any non-trivial module and contains a clique of size $\ell$ it holds that $G \notin \Mtdl$, but $G \in \TdMw{1}{0}$ as shown in the proof of Lemma~\ref{lem:MTDandTDMWincomparable}.
\end{proof}

\begin{corollary}
For any $k, h \in\N$, there exists a graph $G$ such that $G \notin \TdkMwh$, but $G \in \MwMtd{0}{3}$.
\end{corollary}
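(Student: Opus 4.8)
The plan is to derive this corollary directly from Lemma~\ref{lem:MTDandTDMWincomparable}, after first observing that the class $\MwMtd{0}{3}$ coincides with $\Mtd{3}$. I would unfold the definition of $\MwMtd{0}{3}$ from Definition~\ref{definition:furtherclasses}: it is the class of graphs expressible over $\{\bullet,\Union,\Jn\}\cup\{\SubstH\mid|H|\leq 0\}\cup\{\SubstH\mid\td(H)\leq 3\}$. Since the only graph on at most $0$ vertices is the empty graph, the family $\{\SubstH\mid|H|\leq 0\}$ contributes no usable substitution operation, so this operation set is identical to $\{\bullet,\Union,\Jn\}\cup\{\SubstH\mid\td(H)\leq 3\}$, which is exactly the operation set defining $\Mtd{3}$ in Definition~\ref{definition:modulartreedepth}. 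Hence $\Mtd{3}=\MwMtd{0}{3}$, and in particular $\Mtd{3}\subseteq\MwMtd{0}{3}$.

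Next I would invoke Lemma~\ref{lem:MTDandTDMWincomparable}: for the given $k,h\in\N$ it produces a graph $G$ --- concretely $G=\SubstH(K_{k+1},\ldots,K_{k+1})$ for a subdivided star $H$ with more than $h$ vertices --- satisfying $G\notin\TdMw{k}{h}$ and $G\in\Mtd{3}$. Recalling that $\TdkMwh=\TdMw{k}{h}$, the first property is precisely $G\notin\TdkMwh$, and combining the second with the inclusion above yields $G\in\MwMtd{0}{3}$. This establishes the corollary.

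I expect no genuine obstacle here, as the statement is essentially a reformulation of Lemma~\ref{lem:MTDandTDMWincomparable} through the identification $\Mtd{3}=\MwMtd{0}{3}$. The only point requiring (trivial) care is checking that enlarging the operation set by the empty family $\{\SubstH\mid|H|\leq 0\}$ leaves the class unchanged; once this is noted, the result is immediate and no new graph construction is needed.
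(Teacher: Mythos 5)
Your proposal is correct and matches the paper's argument, which likewise derives the corollary directly from Lemma~\ref{lem:MTDandTDMWincomparable} using the graph $\SubstH(K_{k+1},\ldots,K_{k+1})$ for a large subdivided star $H$. The extra care you take in checking that $\Mtd{3}=\MwMtd{0}{3}$ (since $\{\SubstH\mid|H|\leq 0\}$ adds nothing) is a valid and slightly more explicit justification of the step the paper leaves implicit.
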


\begin{proof}
This follows directly from Lemma~\ref{lem:MTDandTDMWincomparable}.
\end{proof}

We are left to compare the graph classes with the class $\TdkMtdl$.

\begin{corollary}
For any $h, \ell \in\N$, there exists a graph $G$ such that $G \notin \MwhMtdl$, but $G \in \TdMtd{1}{0}$.
\end{corollary}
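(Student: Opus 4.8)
The plan is to reduce this corollary to the lemma stated just above it, which already produces, for any $h,\ell\in\N$, a graph $G$ with $G\notin\MwhMtdl$ and $G\in\TdMw{1}{0}$. The only thing that genuinely needs to be checked is that membership in $\TdMw{1}{0}$ coincides with membership in $\TdMtd{1}{0}$; once this is established, the same witness graph serves verbatim.

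First I would unpack the two definitions at these specific parameter values. By Definition~\ref{definition:furtherclasses}, $\TdMtd{1}{0}$ is defined via algebraic expressions over $\{\circ,\bullet,\Union,\Jn\}\cup\{\IncxEx\}\cup\{\SubstH\mid\td(H)\le 0\}$ whose $\IncxEx$ operations have nesting depth at most $1$, whereas $\TdMw{1}{0}$ (Definition~\ref{definition:tdkmwh}) uses $\{\circ,\bullet,\Union,\Jn\}\cup\{\IncxEx\}\cup\{\SubstH\mid|H|\le 0\}$ under the same nesting-depth restriction. The key observation is that $\td(H)\le 0$ holds exactly when $H$ is the empty graph, and likewise $|H|\le 0$ holds exactly when $H$ is empty; in both cases the associated substitution operation is trivial and contributes nothing beyond the empty expression. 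Hence the two operation sets coincide and $\TdMtd{1}{0}=\TdMw{1}{0}$, so the preceding lemma immediately yields the claim.

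For a self-contained argument I would, if desired, recall the concrete witness rather than cite the preceding lemma. Take the graph $G$ of Lemma~\ref{lem:TDMWvsTDandMW} built for a parameter $p$ chosen large enough that $2p+1>h$ and $\td(G)>\ell$, for instance $p=\max\{h,\ell,2\}+1$. This $G$ is connected, prime (its only modules are trivial), contains a clique of size exceeding $\ell$, and was shown there to lie in $\TdMw{1}{0}$, hence in $\TdMtd{1}{0}$ by the identity above. To see $G\notin\MwhMtdl$, inspect the root of any hypothetical $\MwhMtdl$-expression for $G$: a $\Union$ root would disconnect $G$, a $\Jn$ root with a part of size strictly between $1$ and $|V(G)|$ would create a non-trivial module (and all singleton parts would force $G$ complete, which it is not), and a $\SubstH$ root would, by primeness, force every argument to be a single vertex and thus $H\cong G$; but then $H$ is admissible neither for $\Mwh$ (since $|V(G)|=2p+1>h$) nor for $\Mtdl$ (since $\td(G)>\ell$). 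This exhausts all cases, giving $G\notin\MwhMtdl$.

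The proof is essentially obstacle-free; the single point demanding care is the bookkeeping of strictness, namely choosing $p$ so that $\td(G)>\ell$ (not merely $\ge\ell$) and $|V(G)|>h$, which is exactly what guarantees that the prime graph $G$ fails to be an admissible substitution pattern for either $\Mwh$ or $\Mtdl$ and hence cannot be assembled inside $\MwhMtdl$ at all.
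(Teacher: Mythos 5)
Your proposal is correct and follows essentially the same route as the paper: reuse the witness graph from Lemma~\ref{lem:TDMWvsTDandMW} (via the preceding lemma showing $G\notin\MwhMtdl$ and $G\in\TdMw{1}{0}$) and observe that $\TdMw{1}{0}=\TdMtd{1}{0}$ since both $|H|\le 0$ and $\td(H)\le 0$ force $H$ to be empty. In fact you are more careful than the paper's one-line proof, which cites a lemma only yielding $G\notin\Mtdl$; your primeness-based case analysis on the root operation is exactly the missing justification for the stronger conclusion $G\notin\MwhMtdl$.
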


\begin{proof}
This directly follows from Lemma~\ref{lem:TDMWsmallMTDlarge}.
\end{proof}

Clearly, for any graph $G \in \MwhMtdl$ it holds that $G \in \TdMtd{0}{\max\{h,\ell\}}$. Thus, it is not possible that for arbitrary $k, \ell \geq 0$ there exists graphs $G \notin \TdkMtdl$, but $G \in \MwMtd{h'}{\ell'}$ for constants $h'$ and  $\ell'$. However, it is possible to make such a claim with $\ell'$ being a constant. 

\begin{lemma}\label{lem:TdMtdToMw2l}
For any $k, \ell \in\N_{\geq 2}$, there exists a graph $G$ such that $G \notin \TdkMtdl$, but $G \in \MwMtd{2\ell}{0}$.
\end{lemma}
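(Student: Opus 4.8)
The plan is to reuse the ``clique-with-pendants'' gadget from the proof of Lemma~\ref{lem:MtdlToMw2l} and blow up each of its vertices into a large independent set, so that the offending prime structure survives the deletion of up to $k$ vertices. Concretely, I would let $G_0$ be the graph consisting of a clique $K_\ell$ together with one pendant vertex attached to each clique vertex; as established in the proof of Lemma~\ref{lem:MtdlToMw2l}, $G_0$ is prime, has exactly $2\ell$ vertices, is connected and co-connected, and satisfies $\td(G_0)>\ell$. Writing $V(G_0)=\{v_1,\ldots,v_{2\ell}\}$, I then set $G := \Subst_{G_0}(I_{k+1},\ldots,I_{k+1})$, i.e.\ I substitute an independent set $I_{k+1}$ on $k+1$ vertices into each vertex of $G_0$.

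Membership $G\in\MwMtd{2\ell}{0}$ is immediate: each $I_{k+1}$ lies in $\Mw{0}$ (built from $\bullet$ and $\Union$), the pattern $G_0$ has $|G_0|=2\ell\le 2\ell$, so $G\in\Mw{2\ell}$; and since the operation set of $\MwMtd{2\ell}{0}$ contains $\{\SubstH\mid |H|\le 2\ell\}$ we have $\Mw{2\ell}\subseteq\MwMtd{2\ell}{0}$.

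The heart of the argument is $G\notin\TdkMtdl$, which I would prove by induction on $k$ for the whole family of graphs of the form $\Subst_{G_0}(M_1,\ldots,M_{2\ell})$ in which each $M_i$ is an independent set with $|M_i|\ge k+1$. For the base case $k=0$, observe that $\TdMtd{0}{\ell}=\Mtdl$; since all $M_i$ are nonempty and $G_0$ is prime, the root of the modular decomposition of such a graph is a prime node with quotient $G_0$, so its modular tree-depth is at least $\td(G_0)>\ell$ and it is not in $\Mtdl$. For the inductive step, suppose such a $G$ (with modules of size $\ge k+1$) lay in $\TdkMtdl$ and consider the top operation of a witnessing expression. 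Because $G_0$ is connected and co-connected and all $M_i$ are nonempty, $G$ is connected and co-connected, which rules out $\Union$ and $\Jn$ at the root. Moreover, every nontrivial modular partition of $G$ refines $\{M_1,\ldots,M_{2\ell}\}$ (as $G_0$ is prime, no module of $G$ straddles two distinct $M_i$), so choosing for each $i$ one part contained in $M_i$ yields $2\ell$ quotient vertices that induce a copy of $G_0$; hence every such quotient has tree-depth $>\ell$, ruling out any $\SubstH$ with $\td(H)\le\ell$. The root must therefore be an $\IncxEx$ operation, i.e.\ $G=\IncxEx(G-x)$ with $G-x\in\TdMtd{k-1}{\ell}$. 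But $x$ lies in exactly one module, so $G-x=\Subst_{G_0}(M_1',\ldots,M_{2\ell}')$ where each $M_i'$ is again an independent set and $|M_i'|\ge k$; thus $G-x$ satisfies the hypothesis for parameter $k-1$ and, by induction, cannot lie in $\TdMtd{k-1}{\ell}$, a contradiction. Taking $M_i=I_{k+1}$ then gives $G\notin\TdkMtdl$.

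The step I expect to be the main obstacle is the rigorous verification that the top operation is forced to be $\IncxEx$, which rests on two structural facts about the blown-up gadget and each graph arising in the recursion: that it stays connected and co-connected, and that every nontrivial modular partition has a quotient of tree-depth $>\ell$ because it contains an induced $G_0$. The latter is exactly what forbids cheap substitutions, and together with the size bound $|M_i|>k$ — which guarantees that no module can be emptied within the $k$-fold increment budget, so that the prime quotient $G_0$ (and hence the obstruction) persists throughout the recursion — it drives the induction. The remaining points (that $G_0$ is prime with $\td(G_0)>\ell$, that $\TdMtd{0}{\ell}=\Mtdl$, and that deleting a single vertex shrinks exactly one module) are routine and essentially already recorded in the proof of Lemma~\ref{lem:MtdlToMw2l}.
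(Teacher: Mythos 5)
Your proof is correct and takes essentially the same approach as the paper, which sets $G=\SubstH(K_{k+1},\ldots,K_{k+1})$ for the same clique-with-pendants pattern $H$; blowing up each vertex into an independent set $I_{k+1}$ rather than a clique $K_{k+1}$ changes nothing in the argument. Your write-up in fact supplies the inductive non-membership argument that the paper's two-sentence proof leaves implicit.
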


\begin{proof}
Let $H$ be a clique $K_{\ell}$ of size $\ell$ with a pendant vertex for each vertex in $K_\ell$ and let $G = \SubstH(K_{k+1}, \ldots, K_{k+1})$. Now, it holds that $G \notin \TdkMtdl$ but $\mw(G) = 2\ell$.
\end{proof}

Finally, we compare the graph class $\TdkMtdl$ with the class $\TdkMwh$.

\begin{corollary}
For any $k, h \in\N$, there exists a graph $G$ such that $G \notin \TdkMwh$, but $G \in \TdMtd{0}{3}$.
\end{corollary}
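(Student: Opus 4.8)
The plan is to reduce the claim to the already-established Lemma~\ref{lem:MTDandTDMWincomparable} by observing a trivial containment of graph classes. First I would unpack the definition of $\TdMtd{0}{3}$: by Definition~\ref{definition:furtherclasses}, a graph in $\TdMtd{0}{3}$ has an algebraic expression over $\{\circ,\bullet,\Union,\Jn\}\cup\{\IncxEx\}\cup\{\SubstH\mid \td(H)\leq 3\}$ whose $\IncxEx$ operations have nesting depth at most $0$. Since nesting depth $0$ forbids the use of any $\IncxEx$ operation altogether, the effective operation set is $\{\circ,\bullet,\Union,\Jn\}\cup\{\SubstH\mid \td(H)\leq 3\}$, which is a superset of the operation set $\{\bullet,\Union,\Jn\}\cup\{\SubstH\mid \td(H)\leq 3\}$ defining $\Mtd{3}$ in Definition~\ref{definition:modulartreedepth}. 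Hence every expression witnessing membership in $\Mtd{3}$ is also a valid expression for $\TdMtd{0}{3}$, giving $\Mtd{3}\subseteq\TdMtd{0}{3}$.

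With this inclusion in hand, the corollary follows immediately. Lemma~\ref{lem:MTDandTDMWincomparable} already provides, for any $k,h\in\N$, a graph $G$ (namely $\SubstH(K_{k+1},\ldots,K_{k+1})$ for $H$ a suitably large subdivided star) such that $G\notin\TdkMwh$ while $G\in\Mtd{3}$. Combining this with $\Mtd{3}\subseteq\TdMtd{0}{3}$ yields $G\in\TdMtd{0}{3}$, which completes the proof.

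There is essentially no technical obstacle here: the entire content of the corollary is carried by Lemma~\ref{lem:MTDandTDMWincomparable}, and the only new ingredient is the routine observation that setting $k=0$ removes the increment operations so that $\TdMtd{0}{3}$ subsumes plain modular tree-depth $\Mtd{3}$. The one point worth stating carefully is that the nesting-depth-$0$ restriction indeed forbids \emph{all} $\IncxEx$ operations (rather than permitting a single top-level one), which is precisely what makes the containment clean and lets the separation be inherited wholesale from the modular-tree-depth side.
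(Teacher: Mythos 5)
Your proposal is correct and matches the paper's argument, which likewise derives the corollary directly from Lemma~\ref{lem:MTDandTDMWincomparable}; you merely spell out the implicit containment $\Mtd{3}\subseteq\TdMtd{0}{3}$ that the paper leaves unstated.
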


\begin{proof}
This directly follows from Lemma~\ref{lem:MTDandTDMWincomparable}.
\end{proof}

Clearly, for any graph $G \in \TdkMwh$ it holds that $G \in \TdkMtdl$ with $\ell = h$. Thus, it is again not possible that for arbitrary $k, \ell \geq 1$ there exists graphs $G \notin \TdkMtdl$, but $G \in \TdMw{k'}{h'}$ for constants $k'$ and  $h'$. However, it is possible to make such a claim with only $k'$ being a constant. 

\begin{corollary}
For any $k, \ell \in\N_{\geq 2}$, there exists a graph $G$ such that $G \notin \TdkMtdl$, but $G \in \TdMw{0}{2\ell}$.
\end{corollary}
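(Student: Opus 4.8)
The plan is to reduce this corollary directly to the already-established Lemma~\ref{lem:TdMtdToMw2l}, exploiting that both a tree-depth budget of $0$ and a modular-tree-depth budget of $0$ are vacuous. First I would recall from Proposition~\ref{proposition:tdkmwh:smallvalues} that $\TdMw{0}{2\ell}=\Mw{2\ell}$: forbidding all $\IncxEx$ operations (nesting depth $0$) leaves exactly the operations defining modular-width. Symmetrically, in $\MwMtd{2\ell}{0}$ the set $\{\SubstH\mid\td(H)\leq 0\}$ only permits substitution into the empty pattern and hence contributes nothing, so $\MwMtd{2\ell}{0}=\Mw{2\ell}$ as well. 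Consequently the target membership $G\in\TdMw{0}{2\ell}$ is literally the same condition as the membership $G\in\MwMtd{2\ell}{0}$ proved in Lemma~\ref{lem:TdMtdToMw2l}.

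Given these identities, the corollary is immediate: the graph $G=\SubstH(K_{k+1},\ldots,K_{k+1})$ from Lemma~\ref{lem:TdMtdToMw2l}, where $H$ is the clique $K_\ell$ with one pendant vertex attached to each of its $\ell$ vertices, already satisfies $G\notin\TdkMtdl$ and $\mw(G)=2\ell$, i.e.\ $G\in\Mw{2\ell}=\TdMw{0}{2\ell}$. I expect no genuine obstacle here; the only things to check are the two class identities, each of which is immediate from the definitions in Section~\ref{section:heterogeneousstructure}. This realizes, as announced in the preceding discussion, a separating graph lying in $\TdMw{k'}{h'}$ with the tree-depth budget $k'=0$ held constant while the modular-width budget $h'=2\ell$ is forced to grow with $\ell$.
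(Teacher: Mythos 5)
Your proposal is correct and matches the paper's proof, which simply states that the corollary follows directly from Lemma~\ref{lem:TdMtdToMw2l}; you use the same witness graph and merely spell out the class identities $\TdMw{0}{2\ell}=\Mw{2\ell}=\MwMtd{2\ell}{0}$ that the paper leaves implicit.
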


\begin{proof}
This directly follows from Lemma~\ref{lem:TdMtdToMw2l}.
\end{proof}

\section{Conclusion}\label{section:conclusion}

Coping with the limited generality of graph structure for which we know fast algorithms, we have presented a clean and robust way of defining more general, heterogeneous structure via graph operations and algebraic expressions. We gave a generic framework that allows to design algorithms with competitive running times by designing routines for any subset of the considered operations. As applications we showed algorithms for \TC, \NCD, and \VWAPSP. 

In future work, we aim to extend this approach so that further operations and problems can be studied. This often touches upon concrete algorithmic questions for well-studied parameters. For example, it is an open problem to design a non-trivial algorithm for \prob{Maximum Matching} relative to modular tree-depth or, equivalently, for \prob{$b$-Matching} relative to tree-depth. Such an algorithm would yield a fast algorithm for \prob{Maximum Matching} for all graphs in $\TdkMwhMtdl$; as it stands, known algorithms~\cite{IwataOO18,KratschN18} yield time $\Oh((h^2 \log h)n + km )$ via Theorem~\ref{thm:runningtime}.

Another important spin-off question is to determine the complexity of finding for a given graph, a set of operations, and suitable parameters, e.g., $k,h,\ell\in\N$, an equivalent algebraic expression, i.e., a decomposition of the heterogeneous structure of the graph. Especially fast polynomial-time exact or approximate algorithms would be valuable. Nevertheless, because algebraic expressions over some set of graph operations are also graph constructions, we already learn what processes create graphs of beneficial heterogeneous structure.

\bibliography{main}

\begin{thebibliography}{10}

\bibitem{Abu-KhzamLMHP17}
Faisal~N. Abu{-}Khzam, Shouwei Li, Christine Markarian, Friedhelm~Meyer auf~der
  Heide, and Pavel Podlipyan.
\newblock Modular-width: An auxiliary parameter for parameterized parallel
  complexity.
\newblock In Mingyu Xiao and Frances~A. Rosamond, editors, {\em Frontiers in
  Algorithmics - 11th International Workshop, {FAW} 2017, Chengdu, China, June
  23-25, 2017, Proceedings}, volume 10336 of {\em Lecture Notes in Computer
  Science}, pages 139--150. Springer, 2017.
\newblock \href {https://doi.org/10.1007/978-3-319-59605-1\_13}
  {\path{doi:10.1007/978-3-319-59605-1\_13}}.

\bibitem{BentertN19}
Matthias Bentert and Andr{\'{e}} Nichterlein.
\newblock Parameterized complexity of diameter.
\newblock In Pinar Heggernes, editor, {\em Algorithms and Complexity - 11th
  International Conference, {CIAC} 2019, Rome, Italy, May 27-29, 2019,
  Proceedings}, volume 11485 of {\em Lecture Notes in Computer Science}, pages
  50--61. Springer, 2019.
\newblock \href {https://doi.org/10.1007/978-3-030-17402-6\_5}
  {\path{doi:10.1007/978-3-030-17402-6\_5}}.

\bibitem{BuraiSA05}
P{\'a}l Burai and Arp{\'a}d Sz{\'a}z.
\newblock Relationships between homogeneity, subadditivity and convexity
  properties.
\newblock {\em Publikacije Elektrotehni{\v{c}}kog fakulteta. Serija
  Matematika}, pages 77--87, 2005.

\bibitem{CoudertDP18}
David Coudert, Guillaume Ducoffe, and Alexandru Popa.
\newblock Fully polynomial {FPT} algorithms for some classes of bounded
  clique-width graphs.
\newblock In Artur Czumaj, editor, {\em Proceedings of the Twenty-Ninth Annual
  {ACM-SIAM} Symposium on Discrete Algorithms, {SODA} 2018, New Orleans, LA,
  USA, January 7-10, 2018}, pages 2765--2784. {SIAM}, 2018.
\newblock \href {https://doi.org/10.1137/1.9781611975031.176}
  {\path{doi:10.1137/1.9781611975031.176}}.

\bibitem{CourcelleO00}
Bruno Courcelle and Stephan Olariu.
\newblock Upper bounds to the clique width of graphs.
\newblock {\em Discrete Applied Mathematics}, 101(1-3):77--114, 2000.
\newblock \href {https://doi.org/10.1016/S0166-218X(99)00184-5}
  {\path{doi:10.1016/S0166-218X(99)00184-5}}.

\bibitem{Ducoffe20}
Guillaume Ducoffe.
\newblock Optimal diameter computation within bounded clique-width graphs.
\newblock {\em CoRR}, abs/2011.08448, 2020.
\newblock URL: \url{https://arxiv.org/abs/2011.08448}, \href
  {http://arxiv.org/abs/2011.08448} {\path{arXiv:2011.08448}}.

\bibitem{DucoffeP21}
Guillaume Ducoffe and Alexandru Popa.
\newblock The use of a pruned modular decomposition for maximum matching
  algorithms on some graph classes.
\newblock {\em Discret. Appl. Math.}, 291:201--222, 2021.
\newblock \href {https://doi.org/10.1016/j.dam.2020.12.018}
  {\path{doi:10.1016/j.dam.2020.12.018}}.

\bibitem{Floyd62}
Robert~W Floyd.
\newblock Algorithm 97: shortest path.
\newblock {\em Communications of the ACM}, 5(6):345, 1962.

\bibitem{FluschnikKMNNT17}
Till Fluschnik, Christian Komusiewicz, George~B. Mertzios, Andr{\'{e}}
  Nichterlein, Rolf Niedermeier, and Nimrod Talmon.
\newblock When can graph hyperbolicity be computed in linear time?
\newblock In {\em Algorithms and Data Structures - 15th International
  Symposium, {WADS} 2017, St. John's, NL, Canada, July 31 - August 2, 2017,
  Proceedings}.

\bibitem{FominLSPW18}
Fedor~V. Fomin, Daniel Lokshtanov, Saket Saurabh, Michal Pilipczuk, and Marcin
  Wrochna.
\newblock Fully polynomial-time parameterized computations for graphs and
  matrices of low treewidth.
\newblock {\em {ACM} Trans. Algorithms}, 14(3):34:1--34:45, 2018.
\newblock \href {https://doi.org/10.1145/3186898} {\path{doi:10.1145/3186898}}.

\bibitem{GajarskyLO13}
Jakub Gajarsk{\'{y}}, Michael Lampis, and Sebastian Ordyniak.
\newblock Parameterized algorithms for modular-width.
\newblock In Gregory~Z. Gutin and Stefan Szeider, editors, {\em Parameterized
  and Exact Computation - 8th International Symposium, {IPEC} 2013, Sophia
  Antipolis, France, September 4-6, 2013, Revised Selected Papers}, volume 8246
  of {\em Lecture Notes in Computer Science}, pages 163--176. Springer, 2013.
\newblock \href {https://doi.org/10.1007/978-3-319-03898-8\_15}
  {\path{doi:10.1007/978-3-319-03898-8\_15}}.

\bibitem{GanianHKLOR09}
Robert Ganian, Petr Hlinen{\'{y}}, Joachim Kneis, Alexander Langer, Jan
  Obdrz{\'{a}}lek, and Peter Rossmanith.
\newblock On digraph width measures in parameterized algorithmics.
\newblock In Jianer Chen and Fedor~V. Fomin, editors, {\em Parameterized and
  Exact Computation, 4th International Workshop, {IWPEC} 2009, Copenhagen,
  Denmark, September 10-11, 2009, Revised Selected Papers}, volume 5917 of {\em
  Lecture Notes in Computer Science}, pages 185--197. Springer, 2009.
\newblock \href {https://doi.org/10.1007/978-3-642-11269-0\_15}
  {\path{doi:10.1007/978-3-642-11269-0\_15}}.

\bibitem{GanianHNOM19}
Robert Ganian, Petr Hlinen{\'{y}}, Jaroslav Nesetril, Jan Obdrz{\'{a}}lek, and
  Patrice~Ossona de~Mendez.
\newblock Shrub-depth: Capturing height of dense graphs.
\newblock {\em Log. Methods Comput. Sci.}, 15(1), 2019.
\newblock \href {https://doi.org/10.23638/LMCS-15(1:7)2019}
  {\path{doi:10.23638/LMCS-15(1:7)2019}}.

\bibitem{GiannopoulouMN15}
Archontia~C. Giannopoulou, George~B. Mertzios, and Rolf Niedermeier.
\newblock Polynomial fixed-parameter algorithms: {A} case study for longest
  path on interval graphs.
\newblock {\em CoRR}, abs/1506.01652, 2015.
\newblock URL: \url{http://arxiv.org/abs/1506.01652}, \href
  {http://arxiv.org/abs/1506.01652} {\path{arXiv:1506.01652}}.

\bibitem{IwataOO18}
Yoichi Iwata, Tomoaki Ogasawara, and Naoto Ohsaka.
\newblock On the power of tree-depth for fully polynomial {FPT} algorithms.
\newblock In Rolf Niedermeier and Brigitte Vall{\'{e}}e, editors, {\em 35th
  Symposium on Theoretical Aspects of Computer Science, {STACS} 2018, February
  28 to March 3, 2018, Caen, France}, volume~96 of {\em LIPIcs}, pages
  41:1--41:14. Schloss Dagstuhl - Leibniz-Zentrum fuer Informatik, 2018.
\newblock \href {https://doi.org/10.4230/LIPIcs.STACS.2018.41}
  {\path{doi:10.4230/LIPIcs.STACS.2018.41}}.

\bibitem{korteV11}
Bernhard~H Korte, Jens Vygen, B~Korte, and J~Vygen.
\newblock {\em Combinatorial optimization}, volume~1.
\newblock Springer, 2011.

\bibitem{KratschN18}
Stefan Kratsch and Florian Nelles.
\newblock Efficient and adaptive parameterized algorithms on modular
  decompositions.
\newblock In Yossi Azar, Hannah Bast, and Grzegorz Herman, editors, {\em 26th
  Annual European Symposium on Algorithms, {ESA} 2018, August 20-22, 2018,
  Helsinki, Finland}, volume 112 of {\em LIPIcs}, pages 55:1--55:15. Schloss
  Dagstuhl - Leibniz-Zentrum fuer Informatik, 2018.
\newblock \href {https://doi.org/10.4230/LIPIcs.ESA.2018.55}
  {\path{doi:10.4230/LIPIcs.ESA.2018.55}}.

\bibitem{KratschN20}
Stefan Kratsch and Florian Nelles.
\newblock Efficient parameterized algorithms for computing all-pairs shortest
  paths.
\newblock In Christophe Paul and Markus Bl{\"{a}}ser, editors, {\em 37th
  International Symposium on Theoretical Aspects of Computer Science, {STACS}
  2020, March 10-13, 2020, Montpellier, France}, volume 154 of {\em LIPIcs},
  pages 38:1--38:15. Schloss Dagstuhl - Leibniz-Zentrum f{\"{u}}r Informatik,
  2020.
\newblock \href {https://doi.org/10.4230/LIPIcs.STACS.2020.38}
  {\path{doi:10.4230/LIPIcs.STACS.2020.38}}.

\bibitem{Lampis20}
Michael Lampis.
\newblock Finer tight bounds for coloring on clique-width.
\newblock {\em {SIAM} J. Discret. Math.}, 34(3):1538--1558, 2020.
\newblock \href {https://doi.org/10.1137/19M1280326}
  {\path{doi:10.1137/19M1280326}}.

\bibitem{LampisM17}
Michael Lampis and Valia Mitsou.
\newblock Treewidth with a quantifier alternation revisited.
\newblock In Daniel Lokshtanov and Naomi Nishimura, editors, {\em 12th
  International Symposium on Parameterized and Exact Computation, {IPEC} 2017,
  September 6-8, 2017, Vienna, Austria}, volume~89 of {\em LIPIcs}, pages
  26:1--26:12. Schloss Dagstuhl - Leibniz-Zentrum f{\"{u}}r Informatik, 2017.
\newblock \href {https://doi.org/10.4230/LIPIcs.IPEC.2017.26}
  {\path{doi:10.4230/LIPIcs.IPEC.2017.26}}.

\bibitem{Mengel16}
Stefan Mengel.
\newblock Parameterized compilation lower bounds for restricted cnf-formulas.
\newblock In Nadia Creignou and Daniel~Le Berre, editors, {\em Theory and
  Applications of Satisfiability Testing - {SAT} 2016 - 19th International
  Conference, Bordeaux, France, July 5-8, 2016, Proceedings}, volume 9710 of
  {\em Lecture Notes in Computer Science}, pages 3--12. Springer, 2016.
\newblock \href {https://doi.org/10.1007/978-3-319-40970-2\_1}
  {\path{doi:10.1007/978-3-319-40970-2\_1}}.

\bibitem{MertziosNN16}
George~B. Mertzios, Andr{\'{e}} Nichterlein, and Rolf Niedermeier.
\newblock Fine-grained algorithm design for matching.
\newblock {\em CoRR}, abs/1609.08879, 2016.
\newblock URL: \url{http://arxiv.org/abs/1609.08879}, \href
  {http://arxiv.org/abs/1609.08879} {\path{arXiv:1609.08879}}.

\bibitem{NesetrilM12}
Jaroslav Ne{\v{s}}et{\v{r}}il and Patrice~Ossona de~Mendez.
\newblock {\em Sparsity - Graphs, Structures, and Algorithms}, volume~28 of
  {\em Algorithms and combinatorics}.
\newblock Springer, 2012.
\newblock \href {https://doi.org/10.1007/978-3-642-27875-4}
  {\path{doi:10.1007/978-3-642-27875-4}}.

\bibitem{PaulusmaSS16}
Dani{\"{e}}l Paulusma, Friedrich Slivovsky, and Stefan Szeider.
\newblock Model counting for {CNF} formulas of bounded modular treewidth.
\newblock {\em Algorithmica}, 76(1):168--194, 2016.
\newblock \href {https://doi.org/10.1007/s00453-015-0030-x}
  {\path{doi:10.1007/s00453-015-0030-x}}.

\bibitem{Schrijver02}
Alexander Schrijver.
\newblock Combinatorial optimization.
\newblock {\em preprint}, page~58, 2002.

\bibitem{Yuster09}
Raphael Yuster.
\newblock Efficient algorithms on sets of permutations, dominance, and
  real-weighted {APSP}.
\newblock In Claire Mathieu, editor, {\em Proceedings of the Twentieth Annual
  {ACM-SIAM} Symposium on Discrete Algorithms, {SODA} 2009, New York, NY, USA,
  January 4-6, 2009}, pages 950--957. {SIAM}, 2009.
\newblock URL: \url{http://dl.acm.org/citation.cfm?id=1496770.1496873}, \href
  {https://doi.org/10.1137/1.9781611973068}
  {\path{doi:10.1137/1.9781611973068}}.

\end{thebibliography}

\end{document}